\documentclass[a4paper,UKenglish,cleveref, autoref, thm-restate]{lipics-v2021}

\pdfoutput=1 
\hideLIPIcs  

\title{Collision-free Movement on Grids and Beyond} 

\author{Hendrik Molter}{Chair for Algorithm Engineering, Hasso Plattner Institute, Potsdam, Germany \and Department of Computer Science, Ben-Gurion~University~of~the~Negev, 
	Beer-Sheva, 
	Israel}{hendrik.molter@hpi.de}{https://orcid.org/0000-0002-4590-798X}{Supported by the Israel Science Foundation, grant nr.~1470/24, by the European Union's Horizon Europe research and innovation programme under grant agreement 949707, by the European Research Council, grant nr.~101039913 (PARAPATH), and by the Deutsche Forschungsgemeinschaft (DFG, German Research Foundation) -- 565381415.}

\author{Meirav Zehavi}{Department of Computer Science, Ben-Gurion~University~of~the~Negev, Beer-Sheva, Israel}{meiravze@bgu.ac.il}{https://orcid.org/0000-0002-3636-5322}{Supported by the Israel Science Foundation, grant nr.~1470/24, and by the European Research Council, grant nr.~101039913 (PARAPATH).}

\authorrunning{Hendrik Molter and Meirav Zehavi} 

\Copyright{Hendrik Molter and Meirav Zehavi} 

\ccsdesc[500]{Theory of computation~Parameterized complexity and exact algorithms}
\ccsdesc[500]{Theory of computation~Graph algorithms analysis}
\ccsdesc[500]{Mathematics of computing~Graph algorithms}

\keywords{Coordinated Motion Planning, Multi-Agent Path Finding, Planar Graphs, Unit Disk Graphs, Clique Grids, Parameterized Complexity, Approximation, Kernelization} 

\category{} 

\relatedversion{} 

\acknowledgements{Main work was done while Hendrik Molter was affiliated with the Ben-Gurion University of the Negev.}

\nolinenumbers 

\EventEditors{John Q. Open and Joan R. Access}
\EventNoEds{2}
\EventLongTitle{42nd Conference on Very Important Topics (CVIT 2016)}
\EventShortTitle{CVIT 2016}
\EventAcronym{CVIT}
\EventYear{2016}
\EventDate{December 24--27, 2016}
\EventLocation{Little Whinging, United Kingdom}
\EventLogo{}
\SeriesVolume{42}
\ArticleNo{23}

\usepackage[dvipsnames,svgnames,x11names]{xcolor}
\usepackage[utf8]{inputenc}
\usepackage[T1]{fontenc}
\usepackage{tabularx}
\usepackage{amsthm}
\usepackage{amsmath}
\usepackage{amssymb}
\usepackage{amsfonts}
\usepackage{mathtools}
\usepackage{enumitem}
\usepackage[sort,numbers]{natbib}
\usepackage{todonotes}
\usepackage{comment}

\usepackage{tikz}

\usetikzlibrary{arrows,decorations.pathmorphing,decorations.pathreplacing,backgrounds,positioning,fit,matrix}
\usetikzlibrary{shapes,calc,patterns,arrows.meta}
\tikzset{
	vert/.style={circle,inner sep=1.5,fill=white,draw=black,minimum size=.3cm},
	vert2/.style={inner sep=1.5,fill=white,draw=black,minimum size=.3cm},
    dummy/.style={circle,fill=black,draw=black,inner sep=2.5},
	edge/.style={color=black, line width=1pt},
	diredge/.style={->,>={Stealth[width=8pt,length=8pt]},color=black, line width=1pt},
	timelabel/.style={fill=white,font=\footnotesize, text centered},
	wave/.style={decorate,decoration={coil,aspect=0}},
	dirwave/.style={->, >={Stealth[width=8pt,length=8pt]},decorate,decoration={coil,aspect=0}},
	diredge2/.style={->,>={Stealth[width=8pt,length=8pt]}}
}

\usepackage{contour}

\usepackage{multirow}
\usepackage{booktabs}
\usepackage{makecell}
\usepackage{caption}
\usepackage{xspace}

\theoremstyle{definition}
\newtheorem{construction}{Construction}
\newtheorem{reductionrule}{Reduction Rule}
\usepackage{soul}

\crefname{reductionrule}{Reduction Rule}{Reduction Rules}

\crefname{figure}{Figure}{Figures}

\newcommand{\commentout}[1]{}

\newcommand{\problemdef}[3]{
	\begin{center}\fbox{
	\begin{minipage}{0.95\textwidth}
		\vspace{1pt}
 
		\noindent
		\textbf{#1}
  
		\vspace{5pt}
  
  \begin{minipage}{0.99\textwidth}
		\setlength{\tabcolsep}{3pt}
		\begin{tabularx}{\textwidth}{@{}lX@{}}
			\textrm{Input:}     & #2 \\
			\textrm{Question:}  & #3
		\end{tabularx}
  \end{minipage}
	\end{minipage}}
	\end{center}
}

\newcommand{\MoveToPi}{\textsc{Collision-Free Movement to~$\Pi$}\xspace}
\newcommand{\MoveToConnected}{\textsc{Collision-Free Movement to Connectivity}\xspace}
\newcommand{\MoveToConnectedShort}{\textsc{Collision-Free Movement to Connectivity}\xspace}
\newcommand{\MoveToMatching}{\textsc{Collision-Free Movement to Matching}\xspace}

\newcommand{\CMoveToPi}{\textsc{Movement to~$\Pi$}\xspace}

\newcommand{\Movement}{\textsc{Collision-Free Movement}\xspace}

\DeclareMathOperator{\len}{len}
\DeclareMathOperator{\dist}{dist}
\DeclareMathOperator{\diam}{diam}

\DeclareMathOperator{\tw}{tw}

\DeclareMathOperator{\main}{main}
\DeclareMathOperator{\obnoxious}{obnoxious}
\DeclareMathOperator{\robot}{robot}
\DeclareMathOperator{\inc}{inc}
\DeclareMathOperator{\vertex}{vertex}
\DeclareMathOperator{\edge}{edge}
\DeclareMathOperator{\step}{step}
\DeclareMathOperator{\nextstep}{nextstep}

\newcommand{\OO}{\mathcal{O}}

\newcommand{\NP}{\textsf{NP}\xspace}

\newcommand{\NoKernelAssume}{coNP $\subseteq$ NP/poly\xspace}
\newcommand{\NNoKernelAssume}{coNP $\not\subseteq$ NP/poly\xspace}

\newcommand{\yes}{yes\xspace}

\begin{document}

\maketitle

\begin{abstract}
We study \emph{collision-free movement problems} on graphs, where the task is to coordinate a set of robots so that they reach a target formation satisfying a desired property while minimizing the total travel distance. This framework extends two classical models: (a) \emph{minimizing movement} [Demaine et al., TALG '09, '14], which does not enforce collision avoidance, and (b) \emph{coordinated motion planning} or \emph{multi-agent path finding} [Eiben et al., SoCG '23, Deligkas et al., ICALP '24, among many others], where each robot is assigned an explicit target position.



We focus on the setting where the target formation of the robots should be \emph{connected}. We analyze the parameterized complexity of the problem with respect to the number of (main) robots and the total travel length on grid graphs and two natural generalizations thereof: planar graphs and unit disk graphs. 
\end{abstract}









\clearpage

\section{Introduction}

Coordinated motion of robots is a fundamental task in computational geometry and robotics. The environment is frequently modeled by a grid, often consisting also of ``obnoxious robots'' in addition to the ``main robots''. Within the environment, the main robots are set to move so as to perform a predetermined task; the obnoxious robots, on the other hand, are simply meant to model movable obstacles. To date, most efforts were directed at the following: 
\begin{enumerate}
\item Given a target position for each main robot, plan a {\em collision-free} motion for all robots so that each main robot will reach its target~(see, e.g., \cite{DeligkasEGK024,eiben23,Stern19,SternSFK0WLA0KB19,FioravantesKKMO24,papadimitriou1994motion,salzman2020research}). This setting, widely known as {\em MAPF (Multi-Agent Path Finding)},  was part of the 3rd Computational Geometry Challenge of SoCG 2021~\cite{fekete2022computing}. 
\item Given a formation for all main robots (where a formation of particular interest is to induce a connected set), plan a  motion {\em (collisions allowed)} of all robots as to achieve this formation~(see, e.g., \cite{DemaineHM14,DemaineHMSGZ09,berman20111,friggstad2011minimizing}).
\end{enumerate}
Both objectives have been extensively studied in a variety of discrete and continuous settings, from both a theoretical perspective---particularly in the frameworks of parameterized complexity~\cite{DeligkasEGK024,eiben23,DemaineHM14,FioravantesKKMO24,eiben2025minor} and approximation algorithms~\cite{DemaineHMSGZ09,berman20111,papadimitriou1994motion}---and a practical perspective~\cite{fekete2022computing,sharon2015conflict}.
The above leaves open a very obvious and natural question:

\begin{center}\fbox{
	\begin{minipage}{0.95\textwidth}
{\bf What about the task of, given a characterization of a formation for all main robots, planning a {\em collision-free} motion of all robots as to achieve this formation?}
\end{minipage}
}
\end{center}

To the best of our knowledge, the present work is the first to address this problem. As in the classical settings discussed above, two natural optimization goals arise: {\em (i)} minimize the total number of moves taken by all robots, i.e.,the {\em total energy}; {\em (ii)} minimize the number of time steps needed, i.e., the {\em makespan}. Here, we focus on the first goal, while the desired formation is to establish a \emph{connected set}. We provide a spectrum of parameterized results for grids and generalizations thereof, with respect to the two most natural parameters in this context: the number of main robots ($k$), and the total energy to spend ($c$). Notably, our results delineate the subtle boundaries between various tractable and intractable cases.

Previous works most closely related to ours---i.e., which focus on parameterized complexity, environments modeled by graphs (particularly grids), and total energy minimization, are Eiben et al.~\cite{eiben23} and Deligkas et al.~\cite{DeligkasEGK024} for the first task, and Demaine et al.~\cite{DemaineHM14} for the second task. Discussion (non-exhaustive) of some other related works can be found at the end of this section.

Eiben et al.~\cite{eiben23} studied MAPF on solid grids with both (total) energy  and makespan minimization as objectives, and without obnoxious robots. This generalizes the well-known $(n^2-1)$-Puzzle, which was extensively studied before and is known to be NP-hard~\cite{ratner1990n2,demaine2018simple}. They showed (for solid grids only) that both energy and makespan minimization problems are fixed-parameter tractable (FPT) w.r.t.~the number of (main) robots $k$, the energy minimization problem is FPT w.r.t.~the energy $c$, and the makespan minimization problem is paraNP-hard w.r.t.~the makespan. The follow-up work by Deligkas et al.~\cite{DeligkasEGK024} focused on (total) energy minimization, in the presence of obnoxious robots (thereby closely related to the Rush Hour problem, which was also extensively studied before~\cite{FlakeB02,FernauHNRR03}).  Further, they considered general graphs, and showed:  FPT w.r.t.~the number of obnoxious robots $\ell$ when $k=1$; FPT w.r.t.~the total number of robots $k+\ell$ plus the treewidth of the input graph;  FPT approximation w.r.t.~$k+\ell$ and additive error; W[1]-hardness w.r.t.~$c$; FPT w.r.t.~$c$ on planar graphs. In further follow-up, Deligkas et al.~\cite{DeligkasEGKLR26} show that the problem is W[1]-hard w.r.t.~$c$ even if $k=1$ and FPT w.r.t.~$k$ plus the treedepth of the input graph. It remains open whether the problem is FPT w.r.t.~$k+\ell$.

Demaine et al.~\cite{DemaineHM14} studied the parameterized complexity of a very general problem addressing movement to specific formations and for several objectives, including both main and obnoxious robots. Here, recall, collisions are allowed (also in the final formation itself). They provided a dichotomy that asserted that, when the family of possible formations, $\cal F$, is decidable and closed under edge-addition, the problem is FPT w.r.t.~$k$ if the edge-minimal formations in $\cal F$ have bounded treewidth, and W[1]-hard otherwise. In particular, this asserts that the problem where we aim to move robots so that the main robots will form a connected set, and with the objective of energy minimization, is FPT w.r.t.~$k$. We remark that this  case (of connectivity) has received previous focused attention, particularly from the viewpoint of approximation~(see, e.g.,~\cite{berman20111}).
Additionally, Demaine et al.~\cite{DemaineHM14} provided W[1]-hardness w.r.t.~$k$ for the setting where the formation is a hereditary property that does not contain all cliques or all independent sets.

\subsection{Our Contributions}
We refer to the problem introduced here as \MoveToConnectedShort. If results hold for more general formations we call the problem \MoveToPi. All formal definitions are given in Section~\ref{sec:prelims}. Recall that 
 $k$ is the number of main robots, $\ell$ is the number of obnoxious robots, and $c$ is the (total) energy bound.
We start by discussing some results that are easy to obtain or follow straightforwardly from the literature~\cite{DemaineHMSGZ09,DemaineHM14} in \cref{sec:prelims}. Notably, we have the following. 
\begin{itemize}
    \item If $\ell=0$, then \MoveToConnectedShort is fixed-parameter tractable when parameterized by $k$.
    \item If $\Pi$ is hereditary and does not contain all complete graphs and all empty graphs, then \MoveToPi is W[1]-hard when parameterized by $k+c$ even if~$\ell=0$.
    \item \MoveToConnectedShort is NP-hard even if $\ell=0$ and the input graph is planar.
    \item If deciding whether a graph has property $\Pi$ is polynomial-time solvable, then \MoveToPi is in XP when parameterized by $c$.
\end{itemize}

In the following, we give a brief description of each of our results and point to the sections where we present the full proofs.
In \cref{sec:overview} we give a more detailed overview on our results and explain the techniques we use to obtain them (deferring the proofs to the later sections).

Note that since connectivity is not hereditary, the W[1]-hardness result for parameter $k+c$ mentioned above does not apply to \MoveToConnectedShort. In \cref{sec:hardness}, show the following as our first hardness result.
\begin{restatable}{theorem}{whard}
\label{thm:w1hard}
    \MoveToConnectedShort is W[1]-hard when parameterized by $k+c$.
\end{restatable}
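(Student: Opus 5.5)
We reduce from \textsc{Multicolored Clique}: given a graph $H$ whose vertex set is partitioned into color classes $V_1,\dots,V_q$, decide whether $H$ has a clique containing exactly one vertex from each class. We build an instance of \MoveToConnectedShort with $k=\OO(q^2)$ main robots and energy budget $c=\OO(q^2)$. The environment graph $G$ is a general graph, since the problem is FPT by $c$ on planar graphs by analogy with the known results, so the construction may use high-degree vertices. Obnoxious robots may be used freely; their number is not a parameter.

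\textbf{Construction.}
\begin{itemize}
\item For each color class $i$, place one main robot $r_i$ on a hub vertex $h_i$.
\item Attach to $h_i$ one pendant path of fixed length $L$ per vertex $v\in V_i$. Its endpoint $p_v$ represents choosing $v$.
\item For every edge $uv$ of $H$ with $u\in V_i$, $v\in V_j$, add a connector path between $p_u$ and $p_v$.
\item Fill the interior of each connector path with obnoxious robots, with one free slot.
\item For every pair $i<j$, add a main robot $s_{ij}$ parked at a vertex adjacent to all connector paths for the pair $(i,j)$. Moving $s_{ij}$ into a connector for the edge $uv$ costs a small constant and makes $p_u$, $p_v$ and $s_{ij}$ mutually adjacent, provided $r_i$ sits at $p_u$ and $r_j$ sits at $p_v$.
\end{itemize}
We set the budget to $c=qL+\binom q2 d$, where $d$ is the constant cost of placing a pair robot. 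We choose $L$ larger than any alternative detour, so that every $r_i$ must end on exactly one endpoint $p_v$ and every $s_{ij}$ must enter exactly one connector.

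\textbf{Completeness.} Suppose $H$ has a multicolored clique $\{v_1,\dots,v_q\}$. Move each $r_i$ along its path to $p_{v_i}$; the paths are disjoint, so there are no collisions. Then move each $s_{ij}$ into the connector for the edge $v_iv_j$. The final formation is connected: every pair of hubs is linked through some $s_{ij}$. The total energy is exactly $c$.

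\textbf{Soundness.}
\begin{itemize}
\item Connectivity forces each $r_i$ to leave its hub, because the hubs are far apart.
\item With budget exactly $qL$ for the $r_i$, each can only walk to a single endpoint.
\item Each $s_{ij}$ must connect to the rest of the formation, and it can only do so via a connector whose endpoints are both occupied. Such a connector corresponds to an edge of $H$ between the chosen vertices of classes $i$ and $j$.
\item All pairs are forced, as follows. Make $s_{ij}$ only able to reach the others through $r_i$ and $r_j$, or add extra gadget structure, so that the formation is connected only if all $\binom q2$ adjacencies hold. Hence the chosen vertices form a clique.
\end{itemize}

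\textbf{Main obstacle.} The hardest step is ruling out cheap cheating movements, because collision-freeness interacts with the budget in subtle ways. Main robots could cluster at one location, or obnoxious robots could be shuffled to open shortcuts. I would try three measures first:
\begin{itemize}
\item Make distances between hubs exceed $c$, so that no shortcut formation is affordable.
\item Make connectors dead ends for obnoxious robots, so that displacing an obnoxious robot costs more than $c$. For example, pack the connectors fully and attach no free space.
\item Replace "connectivity of all pairs" with a tree-like requirement. One option is a central main robot adjacent to all hubs whose connection must pass through the pair gadgets. Another is to force each $s_{ij}$ to be a cut-vertex of the target formation.
\end{itemize}
Proving that connectivity cannot be achieved by some unintended configuration within budget is where the careful accounting lies.
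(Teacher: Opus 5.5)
There is a genuine gap, and it is exactly the step you defer with ``make $s_{ij}$ only able to reach the others through $r_i$ and $r_j$, or add extra gadget structure.'' Connectivity is witnessed by any spanning tree, so it certifies only $q-1$ adjacencies. In your construction each $s_{ij}$ can join the formation through a single endpoint: once it sits in the slot of the connector for an edge $uv$, it is adjacent to $p_u$ whether or not $p_v$ is occupied.

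Concretely, take $q=3$ and $x_i\in V_i$ with $x_1x_2,x_2x_3\in E(H)$, $x_1x_3\notin E(H)$, and some $y\in V_3$ adjacent to $x_1$. Do the following:
\begin{itemize}
\item move each $r_i$ to $p_{x_i}$;
\item move $s_{12}$ and $s_{23}$ into the connectors of $x_1x_2$ and $x_2x_3$;
\item move $s_{13}$ into the connector of $x_1y$.
\end{itemize}
The result is a connected formation at exactly your budget, yet $\{x_1,x_2,x_3\}$ is not a clique. So soundness fails: your instance is essentially a yes-instance whenever $H$ has a multicolored set inducing a connected subgraph.

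Restricting $s_{ij}$ to reach the rest only via $r_i$ or $r_j$ does not help, because reaching one of them already suffices. Forcing $s_{ij}$ to be a cut vertex needs something on its far side that \emph{must} be connected and cannot move. Your construction has nothing like that: its only ``heavy'' pieces are single mobile robots, which is also why clustering worries you. Your proposed safeguard of making hub distances exceed $c$ is self-defeating. A connected formation of $k$ robots has diameter below $k$, so $\dist(h_i,h_j)>c+k$ would rule out every solution, including the intended one.

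The missing idea, which the paper uses, is to turn connectivity into a conjunction via series composition between immovable anchors.
\begin{itemize}
\item \textbf{Anchors.} For each color $i$ there are main robots on $u^{(i)}_1,u^{(i)}_2$, each attached to a path of $c$ further main robots, so moving them exceeds the budget. Anchors of consecutive colors are joined by an edge.
\item \textbf{Series path.} Inside color $i$ the two anchors can only be joined through the chain of copies $w_1,\dots,w_\ell$ (omitting $w_i$) of a single $w\in W_i$. Every slot of this chain must be filled.
\item \textbf{Filling slots.} Slot $w_j$ can only be filled by a main robot from the edge-selection gadget of $\{i,j\}$. That gadget has budget to displace exactly one obnoxious robot $v_e$, which selects the edge $e$.
\item \textbf{No shortcuts.} The routes to the verification gadgets have length $\ell^2>2\binom{\ell}{2}$, more than the number of movable main robots. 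Hence these routes cannot themselves serve as connections.
\end{itemize}
Since all slots on one chain belong to the same $w$, all selected edges touching color $i$ share their color-$i$ endpoint, which yields the clique. The anchors also remove your clustering concern.

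As an aside, the paper proves fixed-parameter tractability on planar graphs only for $k+c$. Tractability for $c$ alone is left open, so that justification for working with general graphs is not available.
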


It follows that if we wish to obtain fixed-parameter tractability results for $k$, $c$, or $k+c$, we have to restrict the class of input graphs. We focus our investigation on grid graphs, and two natural generalizations: planar graphs and unit disk graphs.

Our main technical hardness result for grids is the following (\cref{sec:hardness}).
\begin{restatable}{theorem}{nphardness}
\label{thm:nphardness}
    \MoveToConnectedShort is NP-hard even if $k=2$ and the input graph is a grid.
\end{restatable}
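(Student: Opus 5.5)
We will reduce from an NP-hard problem about moving blocks on a grid, similar in spirit to the hardness of Rush Hour / sliding-block puzzles with obnoxious robots. With only $k=2$ main robots, connectivity means the two main robots must end on adjacent vertices. The obnoxious robots form the hard part: the grid is densely packed with obnoxious robots and a few holes, and the two main robots start far apart. The question becomes whether the obnoxious robots can be shifted, within a total energy budget $c$, so that the main robots become adjacent. Because $c$ is unbounded here, this is consistent with the FPT-type positive results one might hope for with respect to $k+c$.

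Concretely, we plan to reduce from a planar NP-hard problem with a natural grid embedding, such as \textsc{Planar 3-SAT} or a planar variant of \textsc{Steiner Tree}/\textsc{Hamiltonian Path} on grid graphs. The idea is that the cheapest way to bring the two main robots together is to route one main robot, say $r_1$, along a corridor towards $r_2$. Moving a robot through a region fully packed with obnoxious robots requires a hole to travel ahead of it. Each step then costs shifting some obnoxious robots, so the energy measures which ``tunnels'' are opened.

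We would place holes (empty vertices) only at designated spots. Variable gadgets are corridors where a unique hole can be pushed in one of two directions, encoding true/false, at equal cost. Clause gadgets are checkpoints that the main robot's path must pass, and passing is cheap only if an adjacent hole was positioned by a satisfying literal. We would set $c$ to exactly the cost of an honest traversal: a fixed cost per variable choice plus the path length of the main robot. The proof then splits into two directions:

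\begin{itemize}
\item[(i)] A satisfying assignment yields a schedule of cost exactly $c$. This should be routine bookkeeping of moves.
\item[(ii)] Any schedule of cost at most $c$ can be decoded into an assignment.
\end{itemize}

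The main obstacle is direction (ii), the soundness argument. Robots can move in arbitrary interleaved ways: pushing holes around, swapping roles of obnoxious robots, or moving $r_2$ toward $r_1$. We must show that any budget-respecting schedule essentially behaves canonically. Our plan is to make deviations expensive. We would pad the grid with thick regions fully packed with robots and no holes, so that any movement there needs a hole imported from far away. We would also make corridor lengths polynomially large, so that every non-canonical action costs more than the total slack. A charging argument, counting for each hole its trajectory length, then shows that each hole moves only within its own gadget and commits to one direction. Getting these length scalings right, while keeping everything on a plain rectangular grid, is where we expect most of the effort to go.
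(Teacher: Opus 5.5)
Your high-level outline matches the paper's: a planar SAT variant, binary variable gadgets, clause checkpoints on a main robot's route, and a tight budget. But the step your soundness argument rests on fails in this model. You want thick, hole-free regions of a plain rectangular grid, packed with obnoxious robots, to be rigid, so that every move can be charged to the trajectory of some hole. Collision-freeness, however, only forbids two robots occupying the same vertex in the same step, or using the same edge in the same step. So the four robots on a unit square may rotate simultaneously; the paper itself uses such cyclic moves in Section~\ref{sec:udgs}.

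Consequently a main robot can advance one cell through completely packed territory for $4$ moves, with no hole at all, and energy is not the total length of hole trajectories. This destroys the intended cost gap. An unprepared clause checkpoint can be brute-forced by rotations for at most $4$ moves per blocked cell. Preparing it honestly means transporting a hole from a variable gadget at distance $D$, which costs at least $D$ moves, since shifting a line of robots toward a hole costs one move per robot shifted. Because literal--clause distances in a planar layout are unbounded, lengthening corridors makes the honest behaviour more expensive, not the deviations. Moreover, on a solid grid holes are not confined, so a hole meant for one literal occurrence can be routed to any clause.

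The paper's proof needs several ingredients your plan lacks. The main ones are these.
\begin{itemize}
\item \emph{Sparse grid subgraphs.} A grid in the statement only needs an embedding with unit-length edges, so any subgraph of a grid graph qualifies. The paper builds a sparse graph whose robot-carrying parts are paths and trees, so nothing can rotate, and each literal occurrence gets its own isolated path.
\item \emph{The right source problem.} The source is \textsc{Linked Planar 3-SAT}, not just planar 3-SAT. Its Hamiltonian cycle visits all variables and then all clauses, with positive occurrences on one side and negative on the other. This gives a single route from $r_1$ through every variable gadget and every clause vertex to $r_2$ that no literal connection crosses.
\item \emph{Several holes per variable.} A variable gadget cannot be a single hole pushed one way or the other, because a variable occurs in several clauses. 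In the planar version, three blockers on the route must be pushed into one of two three-vertex branches at cost $9$. This leaves one free endpoint for each occurrence of the true literal.
\item \emph{Clause clearing.} Each clause vertex carries a blocker. It can be cleared with $s+1$ moves only by shifting a whole literal path one step into a free variable endpoint; any other way costs at least $s+2$.
\item \emph{Length bookkeeping.} All literal paths have the same length, realised on the grid by snake-like detours in a gap zone. The connector from the last variable gadget to the first clause is longer than these paths, so clause blockers cannot be parked cheaply. Main robots never enter literal paths, because clearing one costs more than $s^2/4>c$.
\end{itemize}
Setting $c$ to the sum of these per-gadget lower bounds forces every feasible plan to be canonical. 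Your hole-charging argument must be replaced by this gadget-by-gadget accounting on an acyclic, path-based layout.
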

That is, it is hard for just two robots to meet on a grid! We remark that in both hardness results, the instances produced by the reductions have an unbounded number of obnoxious robots, which is not completely unexpected since \MoveToConnectedShort is fixed-parameter tractable when parameterized by $k$ if there are no obnoxious robots.

On the positive side, we provide the following kernelization results in \cref{sec:grids}. 
\begin{restatable}{theorem}{kernel}
\label{thm:kernel}
    \MoveToPi parameterized by $k+c$ admits a polynomial kernel of size $O(k\cdot c^2)$ if the input graph is a grid.
\end{restatable}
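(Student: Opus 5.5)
Since the total energy is at most $c$, no robot can travel more than distance $c$ from its starting vertex, and at most $c$ robots move at all. Every relevant vertex therefore lies in the union of the balls of radius $c$ around the robots that might move. The main robots matter for property~$\Pi$, so they define the region we keep. The idea is to keep only a bounded neighbourhood of each main robot, namely its ball of radius about $c$ in the grid, which has $O(c^2)$ vertices. Together with the robots inside these balls, this gives a structure of size $O(k\cdot c^2)$. We then want to re-embed this structure as a grid whose answer is equivalent. This is delicate, because the output must again be a grid instance, and obnoxious robots outside the balls could in principle be pushed.

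\textbf{Step 1 (trivial cases and locality).} Show that in any solution of cost at most $c$, every robot that moves stays within distance $c$ of its start. Moreover, an obnoxious robot moves only to make room for some other robot's movement, so it is connected via a chain of moving robots to a moving main robot. A chain of moving robots has total length at most $c$. Hence every vertex visited by any robot lies within distance $2c$ (in fact $c$ after a careful accounting) of some main robot's start. Let $B$ be the union of the grid balls of radius $2c+1$ around the $k$ main robots. Then $|B|=O(k c^2)$, and everything outside $B$ is frozen. The one caveat is that frozen robots just outside $B$ act as obstacles. The final formation depends only on positions of main robots, so $\Pi$ is unaffected.

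\textbf{Step 2 (build a grid).} Replace the instance by an equivalent one on a grid of size $O(kc^2)$. Take the bounding region of $B$ and pad it with a one-layer boundary of obnoxious robots, which cannot move without exceeding the budget, or simply use the grid boundary. Components of $B$ that are far apart, at distance more than $4c+2$, can be translated closer. We keep each component of $B$ as an axis-aligned box of side $O(c)$, arranged in a row or column with gaps of exactly $2c+2$ filled as in the original frozen region. We fill the cells not belonging to $B$ with obnoxious robots, or leave them empty, exactly as in the original where they border $B$. Cells far from everything are made empty; main robots in different far-apart clusters can never meet anyway, since meeting would require more than $c$ moves. Cost and collision behaviour inside $B$ are preserved because they depend only on the local configuration. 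The grid then has $k$ blocks of $O(c)\times O(c)$ cells, giving $O(kc^2)$ vertices. Before this, if $k>c+1$ and $\Pi$ requires connectivity-like interaction, answer directly; in general $k$ simply stays as a parameter.

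\textbf{Main obstacle.} The hardest part is proving that the compressed arrangement neither creates new short interactions nor destroys existing ones. For this we merge clusters whose balls are within distance $O(c)$ into a single bounding box; a box still has $O(c)$ side per constituent ball, so the size stays $O(kc^2)$. We then separate merged boxes by frozen buffers thick enough (width $>c$) that no solution of cost $c$ can cross them. Note that $\Pi$ is evaluated on the graph induced by main robots' final positions, and adjacency is preserved under this embedding only between robots in the same box. Main robots in different boxes end at distance more than $1$ in both instances, since reaching adjacency would need more than $c$ moves. Hence the induced graph is preserved, and the reduced instance is equivalent.
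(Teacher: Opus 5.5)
Your Step~1 is essentially the paper's proof. The paper's Reduction Rule~1 deletes every vertex at distance more than $c$ from all main robots. Its safety rests on the same chain argument you give, applied to a \emph{minimum-cost} solution: a moving obnoxious robot whose start is never visited by a robot linked to a main robot could be frozen, saving moves. That argument actually gives radius $c$ rather than $2c$. The size bound then comes from the fact that a grid ball of radius $O(c)$ has $O(c^2)$ vertices.

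The proposal goes wrong in Step~2 and in the ``Main obstacle'' paragraph. You claim that a merged box ``still has $O(c)$ side per constituent ball, so the size stays $O(kc^2)$''. This is false. Take main robots at $(ic,ic)$ for $i\in[k]$: their balls overlap and form a single cluster whose bounding box has side $\Theta(kc)$ in both directions, hence $\Theta(k^2c^2)$ cells. So your construction can have $\Theta(k^2c^2)$ vertices, not $O(kc^2)$. Also, a one-layer frame of obnoxious robots is not immovable because of the budget: one move shifts such a robot into a free neighbour. It is harmless only because it lies at distance more than $c$ from every main robot.

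What you are missing is that none of this re-embedding is needed. In the paper, a graph is a grid if it has a map into $[t]\times[t']$ under which every \emph{edge} joins horizontally or vertically adjacent points. Not every such pair has to be an edge, so every subgraph of a grid is again a grid. Hence simply deleting all vertices outside $B$, together with the obnoxious robots on them, already outputs a grid instance with $O(kc^2)$ vertices.

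The reverse direction of safety is immediate:
\begin{itemize}
\item a plan for the reduced instance never visits a deleted vertex;
\item the deleted robots stay put without any collision;
\item $G[R^\star]$ is unchanged because $R^\star\subseteq B$.
\end{itemize}
So you should drop the translation, padding, merging and buffers. You should also drop the aside about answering directly when $k>c+1$, which is not justified for a general property $\Pi$.
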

For the case where $\Pi$ is connectivity, we can give a better bound on the kernel size.
\begin{restatable}{theorem}{kernell}
\label{thm:kernel2}
    \MoveToConnectedShort parameterized by $k+c$ admits a polynomial kernel of size $O(\min\{k\cdot c^2,k^2+c^2\})$ if the input graph is a grid.
\end{restatable}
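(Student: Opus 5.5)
The plan is to prove the two bounds separately. The $O(k\cdot c^2)$ bound is inherited from \cref{thm:kernel}, since connectivity is a particular property~$\Pi$. So the real work is a second kernel of size $O(k^2+c^2)$ that exploits connectivity. I will then run whichever of the two kernels gives the smaller output, which yields the claimed $O(\min\{k\cdot c^2,k^2+c^2\})$.

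\textbf{Structure of a solution.} Fix the initial positions of the main robots on the grid. In any solution of energy at most~$c$, every robot (main or obnoxious) moves at most~$c$ steps. Moreover, the final positions of the main robots form a connected set of $k$ grid vertices. I will use two consequences of this.
\begin{itemize}
\item Consider the bounding box of the initial main robot positions. The connected final set has diameter at most $k-1$, and each main robot moved at most $c$ steps. Hence if two main robots are initially at $L_1$-distance more than $k-1+2c$, the instance is a trivial no-instance.
\item Otherwise all main robots lie within a box of side $O(k+c)$. Every relevant action happens within distance $O(c)$ of this box. This covers main robots travelling and obnoxious robots being pushed out of the way, since an obnoxious robot that moves at all uses energy and is reached only by a chain of displacements of total length at most~$c$.
\end{itemize}

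\textbf{Kernel construction.} Cut the grid down to the box of side $O(k+c)$, extended by a margin of $c+1$ on each side. I take the resulting subgrid with its obnoxious robots as the kernel.

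The naive size of this subgrid is $O((k+c)^2)=O(k^2+c^2)$, which is exactly the bound I need. The area is in fact what the bound should count: each vertex carries at most one robot, so the encoding size is proportional to the number of vertices.

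\textbf{Correctness.} Two directions must be checked.
\begin{itemize}
\item \emph{Solutions in the original grid survive the cut.} Every vertex visited by a robot that moves lies within distance $c$ of some initial main-robot position, or within $c$ of a position reached by a main robot. This follows by tracing the causal chain: an obnoxious robot is worth moving only if it blocks a path of a main robot or a target vertex. Any robot wandering farther than this can be assumed not to move, by an exchange argument that deletes unnecessary moves while preserving collision-freeness.
\item \emph{Boundary effects in the kernel.} Robots near the boundary of the cut might need space outside the kernel to dodge. Inside the kernel such space may not exist, so a no-instance could become a yes-instance, or the reverse. I will handle this with the margin of $c+1$: any movement sequence of total length at most $c$ starting inside the inner box never reaches the outer layer. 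The kernel is still a grid, so no new vertices are needed.
\end{itemize}

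\textbf{Main obstacle.} I expect the hardest step to be the exchange argument justifying that obnoxious robots far from the main robots never need to move. Pushing chains can propagate, but their total length is bounded by the energy, so the plan is to charge each obnoxious move to the main-robot movement it enables.
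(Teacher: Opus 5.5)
Your argument is correct in outline, but it reaches the $O(k^2+c^2)$ bound by a different route than the paper.

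\textbf{Comparison with the paper.} The paper works purely with graph distances. Besides \cref{rr:1}, it applies a second local rule, \cref{rr:2}, which deletes every $v\notin R$ with $\dist_G(v,r)>c+k$ for \emph{some} $r\in R$. It then observes that all remaining non-$R$ vertices lie in the radius-$(c+k)$ ball around one fixed main robot, and such a ball in a grid has $O((c+k)^2)$ vertices. You replace \cref{rr:2} by a global test: output a trivial no-instance if two main robots are more than $k-1+2c$ apart, and otherwise crop geometrically to the bounding box plus a margin.
\begin{itemize}
\item Your route gives a simpler safeness proof. The test is immediate. Moreover, the kernel is an induced subgraph of $G$ containing all main robots, so any kernel solution lifts to $G$ by leaving the robots outside the box in place. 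Hence your worry that a no-instance might become a yes-instance is unfounded, and only the \cref{rr:1}-type direction needs the margin.
\item The paper's route is independent of the grid embedding. Your crop needs that embedding, which may not be part of the input (computing one is NP-hard in general). You can recover this by phrasing the crop with graph distances: keep the vertices within distance $c$ of some main robot. After your test, all of these lie within distance $k-1+3c$ of a single main robot, so there are $O((k+c)^2)$ of them.
\end{itemize}

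\textbf{One step needs tightening.} Your disjunction ``within $c$ of an initial main-robot position, or within $c$ of a position reached by a main robot'' only places touched vertices within $2c$ of the box. A margin of $c+1$ does not cover that. The precise statement you need is this: in a minimum-energy solution, every connected component of the union of traversed edges contains a main robot's start. Otherwise, freeze all robots moving in that component. No other robot ever visits its vertices (a stationary robot there would have been hit), so the plan stays collision-free, keeps $R^\star$, and saves energy. Since the total number of moves is at most $c$, every touched vertex is then within graph distance, and hence $L_1$-distance, at most $c$ of some main robot's start.
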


Furthermore, we give a number of kernelization lower bounds in \cref{sec:grids}. 
The constructions used are all based on ideas from the reduction behind \cref{thm:nphardness}. We first show that we presumably cannot obtain polynomial kernels for the parameter $k+c$ on planar graphs. Hence, exploiting the properties of grids is essential for the kernelization algorithms.

\begin{restatable}{theorem}{nopkplanar}
\label{thm:nopkplanar}
    \MoveToConnectedShort parameterized by $c$ does not admit a polynomial kernel unless \NoKernelAssume, even if the input graph is planar and $k=2$.
\end{restatable}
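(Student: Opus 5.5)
The plan is to give an OR-cross-composition (equivalently, an OR-composition with polynomially bounded parameter) from the NP-hard problem behind \cref{thm:nphardness}. Concretely, I would start from the grid instances produced by that reduction: two main robots $r_1,r_2$ that must meet within energy $c$, surrounded by obnoxious robots that force any cheap meeting to encode a solution of the source NP-hard problem (e.g.\ a 3-SAT-type or grid-embedded instance). By the standard framework, if we can take $t$ such instances $(G_1,c),\dots,(G_t,c)$, all with the same budget $c$ (after padding, using a polynomial equivalence relation that groups instances by size), and produce in polynomial time a single planar instance with $k=2$ and budget $c'=\mathrm{poly}(c+\log t)$ that is a \yes-instance iff some $(G_i,c)$ is, then no polynomial kernel exists unless \NoKernelAssume.

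The construction: keep a single pair of main robots, placed at two ``hub'' vertices $h_1,h_2$. For each $i$, attach a copy of $G_i$, where the original start vertex of $r_1$ in $G_i$ is connected to $h_1$ and that of $r_2$ to $h_2$ via paths of equal length $L$. Since $G$ no longer needs to be a grid but only planar, we can place the $t$ gadgets around the hubs like petals of a flower; planarity is preserved because each $G_i$ is planar and the attachments are to two vertices on its outer face (in the \cref{thm:nphardness} construction the start positions can be chosen on the boundary, or I would slightly modify the gadget so they are). The new budget is $c' = c + 2L$. To prevent the two main robots from meeting by a shortcut through the hubs, $h_1$ and $h_2$ must be far apart in $G$ except through the gadgets, which is automatic if the only connections between them run through the gadgets, and each gadget's internal distance between the two start vertices exceeds what is affordable without solving the gadget.

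Correctness: a solution in gadget $i$ lifts to one of cost $c+2L$ by walking both robots down their entry paths; the entry paths contain no obnoxious robots, so no collisions occur. Conversely, a solution of cost at most $c'$ cannot afford to enter two different gadgets in a useful way: I would argue that the robots must meet inside a single gadget $G_i$ (they cannot meet at the hubs since $h_1$ and $h_2$ are not adjacent and not joined cheaply), that obnoxious robots in other gadgets are irrelevant, and that restricting the movement to gadget $i$ yields a solution of cost at most $c$ there. The main obstacle is this converse direction: one must ensure that $r_1$ entering gadget $i$ and $r_2$ entering gadget $j\neq i$ cannot help, and that moving obnoxious robots of one gadget out through the hub into other paths gives no advantage. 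I expect to handle it by making $L$ larger than $c$ (so any robot that travels into a different petal than the meeting point overshoots the budget) and by noting that obnoxious robots only need to move within their own gadget, since the hub paths are empty. Choosing $L = c+1$ keeps $c' = O(c)$, independent of $t$, which gives the required parameter bound.
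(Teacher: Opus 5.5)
Your proposal is essentially the paper's proof: the paper likewise builds this ``flower'' by identifying the start vertices $r_1$ (resp.\ $r_2$) of all $t$ instances from \cref{constr:planarnph} into two hubs, subdividing every hub edge $s+2$ times, raising the budget by $2(s+2)$, and arguing correctness as in \cref{lem:nphcorr1,lem:nphcorr2}. Starting from the grid instances of \cref{constr:gridnph} and choosing $L=c+1$ are inessential variations. One caution for the converse: a literal restriction of the solution to $G_i$ is not well defined if obnoxious robots temporarily park on the chosen petal's own entry paths. It is cleaner to reuse the fact, as the paper does, that the per-gadget lower bounds plus $2L$ already sum exactly to the new budget.
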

Further, we show that the dependency of the polynomial kernel size on $k$ for grids presumably is unavoidable.
\begin{restatable}{theorem}{nopkgrid}
\label{thm:nopkgrid}
    \MoveToConnectedShort parameterized by $c$ does not admit a polynomial kernel unless \NoKernelAssume, even if the input graph is a grid.
\end{restatable}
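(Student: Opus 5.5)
The plan is an OR-cross-composition from the NP-hard grid problem of \cref{thm:nphardness} into \MoveToConnectedShort parameterized by $c$, keeping the input graph a grid.

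We first fix a polynomial equivalence relation on instances produced by the reduction behind \cref{thm:nphardness}. Two instances are equivalent if they have the same energy budget $c$ and the same grid dimensions, up to padding. Malformed instances form one extra class. Given $t$ equivalent instances $I_1,\dots,I_t$, each with two main robots, we place their grids side by side in one large grid. Each copy is separated from the others by a wall of obnoxious robots that is at least $c+2$ cells thick. The composed instance keeps all $2t$ main robots, so $k=2t$, which is unbounded. The new budget is $c' = c + (\text{small overhead})$, polynomial in $\max_i |I_i|$ and independent of $t$. Only $c$ is bounded in the kernel lower bound, so an unbounded $k$ is allowed; this matches the $k$-dependence in \cref{thm:kernel}.

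The central difficulty is that connectivity requires \emph{all} $2t$ main robots to form one connected set, while we want an OR: it should suffice that one copy succeeds. We handle this as follows.

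\textbf{Main step: a spine with one gap per copy.} We lay a ``spine'' corridor of main robots along one side of all the gadgets. The spine contains the main robots of every copy, except that in copy $i$ the two original main robots sit at a gap position. If the two robots of copy $i$ meet, they fill that copy's gap. We must then ensure that every other copy $j\neq i$ is already connected to the spine at zero cost. Direct zero-cost connection of all other copies is too strong. Instead, we make only one gap real: the spine is a single path of main robots interrupted at exactly one point. Each copy's gadget is attached so that solving copy $i$ creates a bypass bridging that single interruption, using cost at most $c$.

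Concretely, we arrange the $t$ gadgets stacked vertically between two long horizontal rows of main robots, the top and bottom rows. The two main robots of gadget $i$ are adjacent to the top and bottom rows, respectively. Then ``the two robots of gadget $i$ meet'' is exactly a connection of the top row to the bottom row via gadget $i$. The main robots of the other gadgets remain adjacent to one of the rows at their initial positions, so they are already connected. Hence all main robots are connected if and only if some gadget $i$ realizes a path. The thick obnoxious walls, together with the budget $c'$, prevent any cross-gadget shortcut. They also prevent a direct merge of the rows, since the rows are placed at distance greater than $c'$ apart outside the gadgets.

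\textbf{Verification and expected obstacle.} Correctness follows once the gadgets are adapted. In the reduction of \cref{thm:nphardness}, robots must meet inside the gadget. Here we need the two robots to form an adjacency chain between the rows, so we may have to modify the gadget endpoints so that meeting is equivalent to bridging. The parameter is $c' = \mathrm{poly}(\max|I_i|)$, the construction is a grid, and it is computable in polynomial time. By the cross-composition framework, a polynomial kernel for parameter $c$ would then imply \NoKernelAssume. The main obstacle is confirming that the bridging adaptation does not open cheaper solutions. Specifically, partial movement in several gadgets must not combine into a connection, and rows must not connect through walls. We would enforce this by choosing wall thickness and row separation strictly larger than $c'$.
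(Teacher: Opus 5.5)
Your proposal has a genuine gap, and it sits at the step you call the main step. You assert that ``the two robots of gadget $i$ meet'' is the same as connecting the top row to the bottom row through gadget $i$. This is false.

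In the grid instances of \cref{thm:nphardness}, $r_1$ and $r_2$ are at distance about $12|X|+6|Y|+s+5=\Theta(|X|+|Y|)$. A solution consists of one robot walking through all variable gadgets, the subdivided path of length $s+4$, and the clause gadgets until it reaches the other robot. The walking robot leaves its start vertex, so it is no longer adjacent to its row. The meeting pair therefore touches at most one row, and the rows stay disconnected; you also placed them far apart outside the gadgets.

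Any end-configuration that links the two rows through copy $i$ must contain a whole path of main robots spanning the copy. That means $\Theta(|X|+|Y|)$ main robots inside it, which two robots cannot supply, however the endpoints are modified. So the ``bridging adaptation'' you defer is not a detail; it is the core of the proof. Your cost accounting $c'=c+(\text{small overhead})$ is also off for the same reason.

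The missing idea, which the paper uses, is a snake. Each copy gets a path of $k'$ main robots, with $k'$ equal to that crossing length, attached next to the left connector. This snake must crawl through one chosen copy to form the bridge, at cost about $(k')^2$. The budget becomes $c=(k')^2+(s+2)|Y|$, which is still polynomial in the instance size and independent of $t$. Because the snake's robots now occupy one branch of each variable gadget, the obnoxious robots are removed from the variable gadgets. The snake plays their role: the branch it does not use stays free for the clause paths, which encodes the truth value.

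The connectors must also be protected. The paper attaches $r_1$ and $r_2$ of every copy, via paths of $c+1$ main robots, to two vertical main-robot paths on either side. Moving any connector robot then breaks a connection that cannot be repaired within budget. The cost argument must also exclude the snake short-cutting along clause--variable paths, which costs at least $2(s+2)^2>c$.

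Smaller issues:
\begin{itemize}
\item Your layout of ``gadgets stacked vertically between a top and a bottom row'' is geometrically inconsistent; presumably you mean rotated copies placed side by side.
\item Obnoxious walls are unnecessary, since a grid graph need not contain all grid points and copies can simply be separated by empty space.
\item You should cross-compose from \textsc{Linked Planar 3-SAT}, as the paper does, rather than from arbitrary $k=2$ grid instances, because your adaptation depends on the specific gadget structure.
\end{itemize}
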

Finally, we show that we presumably cannot significantly improve the kernel size given in \cref{thm:kernel2}. 
\begin{restatable}{theorem}{nopkgridd}
\label{thm:nopkgrid2}
    \MoveToConnectedShort parameterized by $k+c$ does not admit a polynomial kernel of size $O((k+c)^{2-\varepsilon})$ for any $\varepsilon>0$ unless \NoKernelAssume, even if the input graph is a grid.
\end{restatable}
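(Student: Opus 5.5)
We argue via a cross-composition in the framework of Bodlaender, Jansen and Kratsch, with a careful budget analysis: a polynomial kernel of size $O((k+c)^{2-\varepsilon})$ would give an OR-cross-composition whose output size is subquadratic in the number of composed instances. By the result of Dell and van Melkebeek, as used for kernel-size lower bounds (e.g.\ for \textsc{Vertex Cover}), this would imply \NoKernelAssume. Concretely, it suffices to give a polynomial-time reduction from an OR of $t$ instances of an NP-hard problem of size $n$ each to a single instance of \MoveToConnectedShort on a grid with $k+c \le t^{1/2}\cdot\mathrm{poly}(n)$. A compression to size $O((k+c)^{2-\varepsilon})$ then produces an instance of size $O(t^{1-\varepsilon/2}\mathrm{poly}(n))$, i.e.\ an OR-compression of bitsize $O(t^{1-\varepsilon'})$. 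Such a compression is excluded by the Dell--van Melkebeek/Drucker type lower bound unless \NoKernelAssume.

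\textbf{Source instances.} We take the $t$ source instances to be grid instances with $k=2$ produced by the reduction behind \cref{thm:nphardness}. Each such instance is a grid gadget where two main robots must meet, with budget $c_0=\mathrm{poly}(n)$, and all instances share the same budget after padding.

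\textbf{Arrangement in a grid.} Arrange the $t$ gadgets in a $\sqrt t\times\sqrt t$ array of blocks. Give each row of blocks its own robot and each column its own robot, so that $k=O(\sqrt t)$. The intended behaviour is that a row robot and a column robot can meet cheaply, within budget $c_0$, only inside a block encoding a yes-instance. Connectivity of all main robots then additionally requires every row robot to be linked to every column robot.

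\textbf{Connectivity backbone.} To supply these links, I would pre-place main robots along pre-built ``backbone'' corridors, surrounded by obnoxious walls, that already connect all row robots together and all column robots together. The two families are then connected exactly when some row robot meets some column robot at a single crossing gadget. The total budget is $c=c_0$ plus $O(1)$ slack. This gives $k+c=O(\sqrt t+\mathrm{poly}(n))$, as required.

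\textbf{Main obstacle.} The hard part is making the backbone robots count toward $k$ only $O(\sqrt t)$ times. Two ways to achieve this:
\begin{itemize}
\item Use long straight rows of main robots as backbones. These cost $\Theta(\sqrt t\cdot\text{block size})$ main robots, which is too many. Replace them with obnoxious robots? But obnoxious robots do not count for connectivity.
\item Instead exploit the fact that only main robots need to be connected: make row robot $i$ and column robot $j$ physically adjacent at the boundary of block $(i,j)$ except for one ``gap'' that is closable only through the embedded hard instance.
\end{itemize}
I would try the second, using one main robot per row and per column that snakes through corridors bordered by obnoxious robots. It must also be ensured that moving an obnoxious robot to shortcut between blocks costs more than $c$, which follows from the wall thickness exceeding $c$.

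Correctness would then proceed in two directions:
\begin{itemize}
\item A yes-instance in block $(i,j)$ yields a solution of cost $\le c$.
\item Conversely, any solution of cost $\le c$ can only modify one block's interior, so that block's source instance is a yes-instance.
\end{itemize}
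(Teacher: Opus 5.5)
\paragraph{Review of the proposal.} Your framework is right, and it is the one the paper uses: a weak cross-composition with $d=2$ whose output satisfies $k+c\le t^{1/2}\cdot\mathrm{poly}(n)$. The gap is the construction. It is not merely unfinished; the parameter regime you aim for cannot be realized.

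\textbf{Why your regime is excluded.} You want a budget $c=c_0+O(1)$ independent of $t$, together with $k=O(\sqrt t)$ main robots, or $O(\sqrt t\cdot\mathrm{poly}(n))$ once backbones are added. But \cref{rr:1} is safe on every instance, and on a grid it leaves only $O(k\cdot c^2)$ vertices (\cref{thm:kernel}). In your regime that is $O(\sqrt t\cdot\mathrm{poly}(n))$ vertices. This gives an OR-compression of the $t$ source instances into $t^{1/2+o(1)}\cdot\mathrm{poly}(n)$ bits, which is exactly the kind of compression you yourself note is excluded unless \NoKernelAssume. Hence any valid composition must let the budget grow with $t$.

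The geometric reason already shows up in your ``main obstacle''. A single row robot cannot sit next to all $\sqrt t$ disjoint, walled-off blocks of its row. Static chains of main robots touching all $t$ blocks need $\Omega(t)$ main robots. Incidentally, a backbone of $\Theta(\sqrt t\cdot\text{block size})$ main robots is not too many; what is too many is one such backbone per row and per column.

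\textbf{The missing idea: keep $k=2$ and put the $\sqrt t$ into $c$.} The paper does the following:
\begin{itemize}
\item It removes the main robots from the source instances of \cref{constr:gridnph}.
\item It subdivides their two entry edges $s+4$ times, so that shuffling obnoxious robots between instances is too expensive.
\item It places the instances in a $\sqrt t\times\sqrt t$ super-grid of cells of side $x_{\max}+4$.
\item It puts the only two main robots at the bottom-left and top-right corners, connected by robot-free routing paths: a spine plus one horizontal path per row of cells, with a spur to the $r_1$-vertex (resp.\ $r_2$-vertex) of every instance.
\end{itemize}
Because the robots start at opposite corners, the total distance they must travel to enter any instance from both sides is the same, about $2\sqrt t(x_{\max}+4)+2(s+4)$. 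With $c=2\sqrt t(x_{\max}+4)+2(s+4)+c'$, exactly the source budget $c'$ remains inside whichever instance is chosen, and $k+c=O(\sqrt t\cdot\mathrm{poly}(x_{\max}))$.

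Your row/column scheme cannot provide this uniformity even if you enlarge $c$. The overhead for a row robot and a column robot to meet in block $(i,j)$ has the form $f(j)+g(i)$, which is not constant. Blocks near the robots' starting positions would then receive surplus budget that could be spent to ``solve'' a no-instance.
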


In \cref{sec:planar}, we turn our attention to planar graphs and show the following result.

\begin{restatable}{theorem}{mso}
\label{thm:mso}
    \MoveToPi is fixed-parameter tractable when parameterized by $k+c$ if the input graph is planar.
\end{restatable}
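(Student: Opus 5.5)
We prove \cref{thm:mso} by locality plus bounded treewidth, in the style of Baker's technique / bidimensionality. Presumably $\Pi$ has to be expressible in MSO for the final step; for connectivity it certainly is.

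\textbf{Step 1 (locality).} A solution has total energy at most $c$, so at most $c$ robots move, and each moves along a walk of length at most $c$. Every robot that moves in a solution is pushed or blocked by something. Taking a minimal solution, we can argue that every moving robot lies within distance $O(c)$ of some main robot. The idea is that a chain of obnoxious robots clearing the way for a main robot consists of at most $c$ robots, each adjacent to the next, and a minimal solution does not move any robot that neither is a main robot nor lies on the route of such a chain. Hence all relevant activity happens in $N[M,O(c)]$, the radius-$O(c)$ neighborhood of the set $M$ of main-robot positions. For connectivity we can say more: the final positions of the $k$ main robots must be connected, so all main robots lie within distance $k+c$ of each other. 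Otherwise we answer no immediately.

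\textbf{Step 2 (bounded treewidth).} Restrict the graph to the subgraph $H$ induced by vertices within distance $r = O(k+c)$ of the main robots. Robots outside $H$ never move, so we turn them into frozen obstacles, or simply delete them together with their vertices while recording the boundary. For connectivity, $H$ lies within a ball of radius $O(k\cdot c)$ around a single main robot, using that the main robots pairwise lie within distance $k+c$. A planar graph of radius $r$ has treewidth $O(r)$. For general $\Pi$, $H$ can have up to $k$ components, each of bounded radius and hence bounded treewidth. So $\mathrm{tw}(H)=f(k+c)$.

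\textbf{Step 3 (MSO encoding).} Encode the instance as an MSO sentence whose length depends only on $k+c$. The sentence guesses the schedule via at most $c$ time steps. Since the energy is at most $c$, the makespan can be assumed to be at most $c$ after removing idle steps. For each step $t\le c$, it uses set variables $X_t$ for the occupied vertices and $Y_t$ for the main-robot positions. It also uses edge-set variables describing the moves at each step, together with the constraints: the moves form a collision-free movement (no two robots on one vertex, no swaps along an edge, each robot moves along an edge or stays); main robots are tracked as a set; at most $c$ moves in total, which is expressible by existentially quantifying at most $c$ moved (vertex, time) pairs; and the final set $Y_c$ induces a graph with property $\Pi$. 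Courcelle's theorem then gives an FPT algorithm. Collisions and the energy count are easy to express, because all the relevant quantities are bounded by $c$.

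\textbf{Main obstacle.} The hardest step is Step 1: making rigorous that only robots near main robots move, and getting the treewidth bound when the main robots are far apart and $\Pi$ is general. For the general case I would first try Baker-style layering from each main robot's position. Alternatively, I would first pin down the components: guess which main robots end up in each component of $H$, which depends only on $k$, and solve the components independently, each with bounded treewidth.
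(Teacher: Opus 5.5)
Your architecture matches the paper's. You delete all vertices at distance more than $c$ from every main robot (\cref{rr:1}), note that each remaining component has diameter $O(k\cdot c)$ and hence treewidth $O(k\cdot c)$ by planarity, and finish with an MSO formula of size $f(k+c)$ and Courcelle's theorem. The two obstacles you flag are not real.

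For Step~1, let $D$ be the smallest set of robots that contains the moving main robots and every obnoxious robot whose start vertex is entered at some point by a robot of $D$. Freezing every moving robot outside $D$ keeps the plan collision-free, because robots of $D$ never enter those start vertices, and nobody enters the start vertex of a robot that never moves. It also leaves the main robots' end positions unchanged. So in a minimum-energy solution every moving robot is linked to a main robot by a chain of distinct robots, each entering the start vertex of the previous one. Summing their lengths puts every visited vertex within distance $c$ of $R$.

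For Step~2, your component argument already suffices, since treewidth is a maximum over components. Every vertex of a component lies within distance $c$ of a main robot in that component, and any two such main robots are at distance at most $(k-1)(2c+1)$. No Baker layering, guessing, or connectivity-specific pruning is needed.

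Where you genuinely differ is the encoding.
\begin{itemize}
\item The paper enlarges the universe with $c$ time-stamped copies of every vertex and edge, giving treewidth $O(c\cdot \tw(G)^2)$. It quantifies over up to $\min\{k+\ell,c\}$ per-robot trajectory sets plus $c$ move variables and checks collisions pairwise between trajectories.
\item You keep $G$ itself as the structure and quantify over $O(c)$ time-indexed sets. Collision-freeness then becomes a local condition between consecutive configurations: moves form a partial injection, no move ends on a stationary robot, and no edge carries two opposite moves. Treewidth is not blown up and robot identities are never needed.
\item In your encoding the energy bound becomes a disjunction over the $c^{O(c)}$ ways to distribute $c$ quantified vertices among the steps, which is fine.
\end{itemize}
One detail you must add: the move sets have to be oriented, for example by putting both arcs of every edge into the structure. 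The model allows simultaneous rotations along cycles of length at least $3$, and an undirected edge set does not determine where a main robot on such a cycle goes.

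The one real gap is your restriction to MSO-expressible $\Pi$. The theorem is about arbitrary $\Pi$, and as written you prove it only for properties like connectivity. The missing observation, which the paper uses, is that the final main-robot set has exactly $k$ vertices. So ``$G[Y_c]\in\Pi$'' is equivalent to a disjunction over all graphs $F$ on vertex set $[k]$ with $F\in\Pi$. Each disjunct is the first-order statement ``there are distinct $x_1,\dots,x_k$ with $Y_c=\{x_1,\dots,x_k\}$ and $\{x_i,x_j\}\in E$ iff $\{i,j\}\in E(F)$''. This formula has size $2^{O(k^2)}$ and is computable whenever membership in $\Pi$ is decidable, so no definability assumption on $\Pi$ is required.
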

We obtain this result by providing a monadic second-order logic (MSO) formulation of the problem and then employing Courcelle's famous theorem~\cite{arnborg1991easy,courcelle1990monadic,courcelle2012graph}. 

In \cref{sec:udgs}, we show that \MoveToConnectedShort on unit disk graphs admits an FPT approximation algorithm when parameterized by $k+c$ for the canonical optimization variant of the problem where we aim to minimize the energy. 
\begin{restatable}{theorem}{approx}
\label{thm:approx}
    \MoveToConnectedShort admits a fixed-parameter approximation when parameterized by $k+c$ if the input graph is a unit disk graph, that produces a solution with energy at most $2\cdot \text{OPT}+3k$, where OPT is the energy of an optimal solution.
\end{restatable}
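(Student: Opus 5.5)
We aim for an approximation that runs in FPT time in $k+c$. It will replace the collision-free requirement by a relaxed, collision-allowing version that we can solve exactly, and then repair collisions at bounded extra cost. We may assume a solution of energy at most $c$ exists; otherwise we increase $c$ iteratively.

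\textbf{Step 1: bounded neighbourhood.} Any solution of energy at most $c$ only touches vertices within distance $c$ of the initial positions of the main robots. Since at most $c$ moves happen in total, every robot that moves at all must start within distance $c$ of some main robot's region, possibly through a chain of pushed obnoxious robots. We therefore restrict attention to the ball $B$ of radius $O(c)$ around the $k$ main robots.

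\textbf{Step 2: exact relaxed solution in FPT time.} In a unit disk graph, the neighbourhood of a vertex is covered by $O(1)$ cliques, obtained by partitioning the plane into cells of diameter $<1$. Hence either some cell near a main robot contains many vertices, or the ball $B$ has bounded size $f(k,c)$.
\begin{itemize}
\item If $B$ has bounded size, we solve the instance exactly by brute force over move sequences of length at most $c$.
\item Otherwise there are large cliques. In a large clique there is always a free vertex, since at most $k+\ell$ robots are present and the ones that matter number $\le k+c$. A robot entering such a clique can therefore be routed to an unoccupied vertex.
\end{itemize}
In the second case we guess, in FPT time, the ``shape'' of an optimal target: the Steiner-tree-like connected structure formed by the main robots' paths and final positions, described at the level of cells. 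We compute a minimum-energy collision-allowing solution in which only main robots move and obstacles block vertices. Its cost is at most OPT, since every collision-free solution is also such a relaxed solution after ignoring obnoxious moves.

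\textbf{Step 3: repairing collisions.} Main robots that meet or cross on their paths are handled by the standard swap trick. When a robot would move into an occupied vertex, it instead hands over its role to the occupying robot, because main robots are interchangeable for connectivity. Paths are thus charged at most once each, giving factor $2$ when we move blocking obnoxious robots aside along the path. The extra $3k$ arises because each main robot's final vertex may sit in a dense clique where the target vertex is occupied. We then spend $O(1)$ (at most 3) additional moves per main robot to shift into a free vertex of an adjacent clique while preserving connectivity. The constant 3 comes from the geometric fact that any vertex of the cell is within 3 hops of a free vertex through the clique structure.

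\textbf{Main obstacle.} Step 3 is the hardest part: bounding the cost of pushing obnoxious robots out of the way. We would try to charge each displaced obnoxious robot to a move of the relaxed main path that entered its vertex, which gives the $2\cdot$OPT term. Dense regions are handled by the free-vertex argument, which gives the additive $3k$.
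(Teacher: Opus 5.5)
There is a genuine gap in Steps~2--3. Nothing in your argument controls the cost of displacing obnoxious robots, which is where all the difficulty lies: the problem is NP-hard on grids already for $k=2$. Your relaxation also cannot be both a lower bound and cheaply repairable.

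If obnoxious robots block vertices in the relaxed problem, your inequality ``cost $\le\mathrm{OPT}$'' is false, because optimal plans send main robots through vertices that are vacated first. Take the unit disk graph on the points $(0,0),(1,0),(1,1),(2,0),(3,0)$, with main robots on $(0,0)$ and $(3,0)$ and an obnoxious robot on $(1,0)$. Here $\mathrm{OPT}=3$: the obnoxious robot steps to $(1,1)$, then the two main robots close in. Your relaxed problem, however, is infeasible.

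If obnoxious robots are instead ignored, the relaxed value is a lower bound but can be arbitrarily far below $\mathrm{OPT}$. Take the boundary cycle of the $3\times(h+1)$ lattice rectangle, which is an induced lattice subgraph and hence a unit disk graph. Put main robots on $(0,0)$ and $(2,0)$ and an obnoxious robot on $(1,0)$. Robots on a cycle cannot overtake each other without colliding, so the main robots can only become adjacent along the long arc, and $\mathrm{OPT}=2h+1$. The relaxed optimum is $1$. Hence no charging of repairs to relaxed moves can give a bound of the form $2\cdot(\text{relaxed cost})+3k$; indeed you only say you ``would try'' such a charging.

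Several auxiliary claims are also wrong:
\begin{itemize}
\item A large cell may be saturated by obnoxious robots, since $\ell$ is unbounded. So it need not contain a free vertex, and there is no ``within 3 hops of a free vertex'' fact.
\item The identity swap of Lemma~\ref{lem:equiv} is only available between two main robots, not with an obnoxious robot.
\end{itemize}

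The missing idea is to make the obnoxious moves part of what is guessed, instead of something to be repaired afterwards. The paper proceeds as follows:
\begin{itemize}
\item After Reduction Rule~\ref{rr:1} only $O(k c^2)$ cells remain. Vertices are grouped into boundedly many \emph{types}: the cell, the free/obnoxious/main status, and recursively which cell--type pairs can be reached in the remaining steps. Types with at most $2k+c+2$ members also record vertex identities.
\item It guesses a type-level movement template for every robot that moves in an optimal plan, obnoxious robots included, with at most $c$ moves in total. It verifies at the type level that every occupied type that is entered is vacated in time.
\item All vertices of a type lie in a single clique. So each templated move is realised by a concrete move plus at most one extra move inside that clique to clear an occupied target. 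This gives the bound $2c$.
\item Finally it guesses a spanning tree $T$ on the main robots and embeds it type-consistently using the subgraph-embedding algorithm of Demaine, Hajiaghayi and Marx. Each main robot is then placed onto its tree vertex with at most three moves, via a rotation among three vertices of one clique.
\end{itemize}
This in-clique rotation, not a shift into an adjacent clique (which could break connectivity), is the source of the additive $3k$.
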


Finally, we give some open questions and future research directions in \cref{sec:conclusion}.

\subsection{Further Related Work}

In recent years there has been intensive research on multi-agent path finding (MAPF) problems in multiple variations~\cite{Stern19,SternSFK0WLA0KB19,salzman2020research}. 
As mentioned, MAPF can be seen as a generalization of the $(n^2-1)$-Puzzle and other so-called \emph{pebble motion problems} on graphs~\cite{goldreich2011finding,yu2013structure,ratner1990n2,demaine2018simple}. It is also closely related to the problem such as Rush Hour~\cite{FlakeB02,FernauHNRR03}, Temporally Disjoint Paths~\cite{KunzMZ23,KlobasMMNZ23}, the Snake Game~\cite{gupta2020parameterized}, and Chip Reconfiguration~\cite{CalinescuDP08}. To the best of our knowledge, the current state-of-the-art (optimal) algorithms for MAPF problems employ the so-called conflict-based search approach~\cite{sharon2015conflict}.

The following works are more closely related to our problem. Papadimitriou et al.~\cite{papadimitriou1994motion} study coordinated motion planning there is only one main robot and show that the problem is NP-hard on planar graphs. They give some further approximation results. Fioravantes et al.~\cite{FioravantesKKMO24} study coordinated motion planning where the goal is to minimize the makespan rather than the total energy. They obtain an extensive number of parameterized complexity results. Eiben et al.~\cite{eiben2025minor} investigate the parameterized complexity a problem variant, where robots may move an arbitrary distance in each step and the number of step shall be minimized. There is also work on the parameterized complexity of coordinated motion planning problems where the robots have geometric shapes~\cite{KanjP24,AgarwalBHSS25}.

Various version of the movement problem (where collisions do not matter) have been studied by Demaine et al.~\cite{DemaineHMSGZ09} from an approximation perspective. Improved approximation results have been given by Berman et al.~\cite{berman20111}. Friggstad and Salavatipour~\cite{friggstad2011minimizing} study a closely related problem: Mobile Facility Location. Here, facilities and clients should be moved in a graph such that every client is close to a facility. They study approximability of this problem.

\section{Overview of Results and Techniques}\label{sec:overview}
In this section, we present a more detailed overview on our results, together with intuitive and informal explanations and descriptions of the techniques that we use. Full details and formal proofs can be found in the later sections.

\subparagraph{Hardness Resuls (\cref{sec:hardness}).} We first show that \MoveToConnectedShort is W[1]-hard when parameterized by $k+c$ (\cref{thm:w1hard}). To this end, we present a parameterized reduction from \textsc{Multicolored Clique}~\cite{fellows2009multipleinterval}. Here, given an $\ell$-partite graph, we are asked whether it contains a clique of size $\ell$. The parameter is~$\ell$. The reduction uses an edge-selection gadget for every color-combination. Each one has two main robots and one obnoxious robot for each edge. One of the obnoxious robots may be moved (the corresponding edge is ``selected'') which enables the two main robots to move to verification gadgets for the colors in the color-combination. Each verification gadget of a color contains ``immovable'' formations of main robots which can only be connected if the selected edges of the color-combinations containing the color have the same endpoints. For an illustration see \cref{fig:w1hardness0}.
\begin{figure}[t]
\begin{center}
\begin{tikzpicture}[line width=1pt,scale=.7]

\draw[rounded corners=18pt,fill=lightgray!25!white] (-5.5, 1.5) rectangle (5.5, -3);
\node (E) at (0,1) {\small Edge selection gadget for color combination $i,j$};

\draw[rounded corners=18pt,fill=lightgray!25!white] (-9.5, -4) rectangle (-.5, -9.5);
\node (G1) at (-5,-9) {\small Verification gadget for color $i$};

\draw[rounded corners=18pt,fill=lightgray!25!white] (.5, -4) rectangle (9.5, -9.5);
\node (G1) at (5,-9) {\small Verification gadget for color $j$};

\node[vert] (X2) at (0,0) {};

\node[vert,fill=black] (A2) at (-2,-2) {};
\node[vert,fill=black] (B2) at (-1,-2) {};
\node[vert,fill=black] (C2) at (0,-2) {};

\node (L1) at (1,-2) {$\cdots$};

\node[vert,fill=black] (D2) at (2,-2) {};

\node[vert2,fill=blue] (S1) at (-4,-2) {};
\node[vert2,fill=blue] (S2) at (4,-2) {};

\node[vert2,fill=red] (N1) at (-8,-6.5) {};
\node[vert] (U1) at (-7,-5) {};
\node[vert] (U2) at (-6,-5) {};
\node[vert] (U3) at (-5,-5) {};
\node[vert] (U4) at (-3,-5) {};

\node[vert] (V1) at (-7,-6) {};
\node[vert] (V2) at (-6,-6) {};
\node[vert] (V3) at (-5,-6) {};
\node[vert] (V4) at (-3,-6) {};

\node (J) at (-5,-5.5) {\small $j$};

\node (L2) at (-5,-7) {$\vdots$};

\node[vert] (W1) at (-7,-8) {};
\node[vert] (W2) at (-6,-8) {};
\node[vert] (W3) at (-5,-8) {};
\node[vert] (W4) at (-3,-8) {};
\node[vert2,fill=red] (N2) at (-2,-6.5) {};

\node[vert2,fill=red] (N3) at (2,-6.5) {};
\node[vert] (U5) at (3,-5) {};
\node[vert] (U6) at (4,-5) {};
\node[vert] (U7) at (5,-5) {};
\node[vert] (U8) at (7,-5) {};

\node[vert] (V5) at (3,-6) {};
\node[vert] (V6) at (4,-6) {};
\node[vert] (V7) at (5,-6) {};
\node[vert] (V8) at (7,-6) {};

\node (I) at (4,-6.5) {\small $i$};

\node (L2) at (5,-7) {$\vdots$};

\node[vert] (W5) at (3,-8) {};
\node[vert] (W6) at (4,-8) {};
\node[vert] (W7) at (5,-8) {};
\node[vert] (W8) at (7,-8) {};
\node[vert2,fill=red] (N4) at (8,-6.5) {};

\draw (X2) -- (A2);
\draw (X2) -- (B2);
\draw (X2) -- (C2);
\draw (X2) -- (D2);

\draw (S1) -- (A2);
\draw (S1) edge[bend left=20] (B2);
\draw (S1) edge[bend left=20] (C2);
\draw (S1) edge[bend left=20] (D2);

\draw (S2) -- (D2);
\draw (S2) edge[bend left=20] (B2);
\draw (S2) edge[bend left=20] (C2);
\draw (S2) edge[bend left=20] (A2);

\draw[dotted] (N1) -- (-9,-6.5);

\draw (N1) -- (U1);
\draw (U1) -- (U2);
\draw (U2) -- (U3);
\draw[dotted] (U3) -- (U4);
\draw (U4) -- (N2);

\draw (N1) -- (V1);
\draw (V1) -- (V2);
\draw (V2) -- (V3);
\draw[dotted] (V3) -- (V4);
\draw (V4) -- (N2);

\draw (N1) -- (W1);
\draw (W1) -- (W2);
\draw (W2) -- (W3);
\draw[dotted] (W3) -- (W4);
\draw (W4) -- (N2);

\draw[dotted] (N2) -- (-1,-6.5);

\draw[dotted] (N3) -- (1,-6.5);

\draw (N3) -- (U5);
\draw (U5) -- (U6);
\draw (U6) -- (U7);
\draw[dotted] (U7) -- (U8);
\draw (U8) -- (N4);

\draw (N3) -- (V5);
\draw (V5) -- (V6);
\draw (V6) -- (V7);
\draw[dotted] (V7) -- (V8);
\draw (V8) -- (N4);

\draw (N3) -- (W5);
\draw (W5) -- (W6);
\draw (W6) -- (W7);
\draw[dotted] (W7) -- (W8);
\draw (W8) -- (N4);

\draw[dotted] (N4) -- (9,-6.5);

\draw[color=red, line width=1.5pt] (B2) -- (U3);
\draw[color=red, line width=1.5pt] (B2) edge[bend left=-20] (V6);
\end{tikzpicture}
    \end{center}
    \caption{Illustration of the graph  W[1]-hardness reduction. The black filled round vertices have obnoxious robots on them, and the blue filled square vertices have main robots on them.  
    The red filled square vertices have main robots on them and paths of sufficient length attached to them that have main robots on each vertex (which are not depicted).
    The two red edges correspond to connections between the edge selecting gadget and the verification gadgets.}\label{fig:w1hardness0}
\end{figure}

Our main hardness result is that \MoveToConnectedShort is NP-hard on a grid even if there are only two main robots (\cref{thm:nphardness}). We obtain the result via a reduction from \textsc{Linked Planar 3-SAT}~\cite{Pilz19}. This reduction is described in two construction steps: We first construct a planar instance of \MoveToConnectedShort and then show how the produced graph can be embedded into a grid (with appropriate modifications). 
In \textsc{Linked Planar 3-SAT}, we are given a Boolean formula $\phi$ in 3-CNF with clause set $Y$ and variable set $X$, and a graph $H=(Y\cup X, F)$ such that the following holds:
\begin{itemize}
    \item Graph $H$ is the union of a Hamiltonian cycle that first visits all elements of $X$ and then all elements of $Y$, and the incidence graph of $\phi$.
    \item Graph $H$ is planar and there is an embedding where each edge between a clause and a variable that appears negated in the clause is inside the cycle, and each edge between a clause and a variable that appears non-negated in the clause is outside of the cycle.
    \item Each variable appears in at most three clauses.
\end{itemize}
We may assume that we know the embedding of $H$ and the Hamiltonian cycle~\cite{Pilz19}.

Informally speaking, the main idea on how to reduce from \textsc{Linked Planar 3-SAT} to \MoveToConnectedShort planar graphs is as follows. We split the Hamiltonian cycle in $H$ between the last clause vertex and the first variable vertex, in that way obtaining a Hamiltonian path that first traverses all variable vertices and then all clause vertices. If we draw this path horizontally (from left to right), then (in a planar drawing) each edge between a clause and a variable that appears negated in the clause is below the path, and each edge between a clause and a variable that appears non-negated in the clause is above of the path. We add an extra vertex to the beginning of the path and an extra vertex to the end, and place a main robot on each of them. Now we ``replace'' each variable vertex with the variable gadget illustrated in \cref{fig:vargadget0}. We connect edges above the path to the three vertices on top of the gadget, and edge below the path to the three vertices below. For an illustration see \cref{fig:nphardness20}. Then these edges (that originally connected the clause and variable vertices) are subdivided a number of times and obnoxious robots are places on all newly created vertices. Furthermore, obnoxious robots are places on all clause vertices.
The idea is that the three obnoxious robots in each variable gadget (see \cref{fig:vargadget0}) need to move either to the three upper or lower vertices, which will encode setting the variable to true or false. Now the main robot on the left can freely move until the last variable gadget. Then the obnoxious robot on each clause vertex needs to move out of the way. To this end, all obnoxious robots on one of the paths connecting the clause vertex to a variable gadget need to move. This is only possible if the vertex of the variable gadget where the path arrives is free, hence encoding that the corresponding literal satisfies the clause. Now the main robots can freely meet on some adjacent clause vertices.

\begin{figure}[t]
\begin{center}
\begin{tikzpicture}[line width=1pt,scale=.65,xscale=1.6]

\node[vert,fill=black] (A) at (1,0) {};
\node[vert,fill=black] (B) at (2,0) {};
\node[vert,fill=black] (C) at (3,0) {};
\node[vert] (XT1) at (4,1) {};
\node[vert] (XT2) at (5,1) {};
\node[vert] (XT3) at (6,1) {};
\node[vert] (XF1) at (4,-1) {};
\node[vert] (XF2) at (5,-1) {};
\node[vert] (XF3) at (6,-1) {};
\node[vert] (N1) at (7,0) {};

\draw (A) -- (B);
\draw (B) -- (C);
\draw (C) -- (XT1);
\draw (XT1) -- (XT2);
\draw (XT2) -- (XT3);
\draw (XT3) -- (N1);
\draw (C) -- (XF1);
\draw (XF1) -- (XF2);
\draw (XF2) -- (XF3);
\draw (XF3) -- (N1);

\end{tikzpicture}
    \end{center}
    \caption{Illustration of a variable gadget. The black vertices have obnoxious robots on them.}\label{fig:vargadget0}
\end{figure}

\begin{figure}[t]
\begin{center}
\begin{tikzpicture}[line width=1pt,scale=.65,xscale=.75]

\draw[rounded corners=18pt,fill=lightgray!25!white] (.5, 4) rectangle (7.5, -1.5);
\node (E) at (4,3.5) {\small Variable gadget};

\draw[rounded corners=18pt,fill=lightgray!25!white] (9.5, 4) rectangle (16.5, -1.5);
\node (E) at (13,3.5) {\small Variable gadget};

\draw[rounded corners=18pt,fill=lightgray!25!white] (18.2, 1) rectangle (23.8, -1.5);

\node[vert2,fill=blue] (R1) at (-1,0) {};
\node[vert,fill=black] (A) at (1,0) {};
\node[vert,fill=black] (B) at (2,0) {};
\node[vert,fill=black] (C) at (3,0) {};
\node[vert] (XT1) at (4,1) {};
\node[vert] (XT2) at (5,1) {};
\node[vert] (XT3) at (6,1) {};
\node[vert] (XF1) at (4,-1) {};
\node[vert] (XF2) at (5,-1) {};
\node[vert] (XF3) at (6,-1) {};
\node[vert] (N1) at (7,0) {};

\node[vert,fill=black] (A2) at (10,0) {};
\node[vert,fill=black] (B2) at (11,0) {};
\node[vert,fill=black] (C2) at (12,0) {};
\node[vert] (YT1) at (13,1) {};
\node[vert] (YT2) at (14,1) {};
\node[vert] (YT3) at (15,1) {};
\node[vert] (YF1) at (13,-1) {};
\node[vert] (YF2) at (14,-1) {};
\node[vert] (YF3) at (15,-1) {};
\node[vert] (N2) at (16,0) {};

\node[vert,fill=black] (N3) at (19,0) {};

\node[vert,fill=black] (CL1) at (21,0) {};
\node[vert,fill=black] (CL2) at (23,0) {};

\node[vert2,fill=blue] (R2) at (26,0) {};

\draw (R1) -- (A);

\draw (A) -- (B);
\draw (B) -- (C);
\draw (C) -- (XT1);
\draw (XT1) -- (XT2);
\draw (XT2) -- (XT3);
\draw (XT3) -- (N1);
\draw (C) -- (XF1);
\draw (XF1) -- (XF2);
\draw (XF2) -- (XF3);
\draw (XF3) -- (N1);

\draw[dotted] (N1) -- (A2);

\draw (A2) -- (B2);
\draw (B2) -- (C2);
\draw (C2) -- (YT1);
\draw (YT1) -- (YT2);
\draw (YT2) -- (YT3);
\draw (YT3) -- (N2);
\draw (C2) -- (YF1);
\draw (YF1) -- (YF2);
\draw (YF2) -- (YF3);
\draw (YF3) -- (N2);

\draw[dotted] (N2) -- (N3);

\draw (N3) -- (CL1);
\draw (CL1) -- (CL2);
\draw[dotted] (CL2) -- (R2);

\draw[color=red, line width=1.5pt, rounded corners=10pt] (N3) -- (19,-2) -- (15,-2) -- (YF3);
\draw[color=red, line width=1.5pt, rounded corners=10pt] (CL1) -- (20,-2.5) -- (14,-2.5) -- (YF2);
\draw[color=red, line width=1.5pt, rounded corners=10pt] (CL1) -- (21,-3) -- (6,-3) -- (XF3);
\draw[color=red, line width=1.5pt, rounded corners=10pt] (CL2) -- (23,-3.5) -- (5,-3.5) -- (XF2);

\draw[color=red, line width=1.5pt, rounded corners=10pt] (N3) -- (18,2) -- (15,2) -- (YT3);

\draw[color=red, line width=1.5pt, rounded corners=10pt] (N3) -- (19,2.5) -- (6,2.5) -- (XT3);

\draw[color=red, line width=1.5pt, rounded corners=10pt] (CL2) -- (23,3) -- (4,3) -- (XT1);

\node[fill=lightgray!25!white,inner sep=1.1] (E) at (21,-1) {\small Clause gadgets};

\end{tikzpicture}
    \end{center}
    \caption{Illustration of the informally described graph $G$. The blue filled round vertices have obnoxious robots on them, and the black filled square vertices have main robots on them. The red edges correspond to long paths that have obnoxious robots on all inner vertices.}\label{fig:nphardness20}
\end{figure}

Afterwards, we modify the informally described graph (call it $G$) depicted in \cref{fig:nphardness20} to produce a graph $G'$ that is a grid (together with an embedding that shows this). Note that there are three main obstacles to overcome:
\begin{itemize}
    \item The clause vertices in $G$ may have degree 5, which is not possible in a grid. Hence, we have to modify the vertices to ``clause gadgets'' that are grids.
    \item All paths connecting clause gadgets to variable gadgets need to have the same length in order for the reduction to work.
    \item The path between the last variable gadget and the first clause gadget has to be sufficiently long (that is, longer than any path from a clause gadget to a variable gadget) for the reduction to work.
\end{itemize}
We introduce clause gadgets, which are, informally speaking, small tree-like grids rooted at the clause vertices and three leaves pointing upward, and three pointing downward. All vertices in the gadgets have obnoxious robots, and the paths to the variable gadgets now start at the leaves.
In order to ensure that all paths between variable and clause gadgets have the appropriate length, we introduce a ``zone'' in the grid between all variable gadgets and all clause gadgets, which is sufficiently large such that we can change the length of each path to the desired length by using ``snake-like'' embeddings in the grid. For an illustration see \cref{fig:nphardnessgrid0}.

\begin{figure}[t]
\begin{center}
\begin{tikzpicture}[line width=1pt,scale=.29,xscale=1]

\draw[blue!15!white,fill=blue!15!white] (20.5, 12.5) rectangle (25.5, -15.5);

\draw[rounded corners=6pt,lightgray!25!white,fill=lightgray!25!white] (-.5, 1.5) rectangle (8.6, -1.5);

\draw[rounded corners=6pt,lightgray!25!white,fill=lightgray!25!white] (10.5, 1.5) rectangle (19.6, -1.5);

\draw[rounded corners=6pt,lightgray!25!white,fill=lightgray!25!white] (26.6, 1.5) rectangle (31.2, -1.5);

\draw[rounded corners=6pt,lightgray!25!white,fill=lightgray!25!white] (31.6, 1.5) rectangle (36.2, -1.5);

\draw[rounded corners=6pt,lightgray!25!white,fill=lightgray!25!white] (36.6, 1.5) rectangle (41.2, -1.5);

\foreach \i in {-15,...,12}
{
    \draw[line width=.5pt,color=lightgray,dotted] (-2.8,\i) -- (43.8,\i);
}

\foreach \i in {-2,...,43}
{
    \draw[line width=.5pt,color=lightgray,dotted] (\i,-15.8) -- (\i,12.8);
}

\draw[line width=2pt,red] (27,1) -- (27,5) -- (25,5) -- (25,6) -- (24,6) -- (24,4) -- (23,4) -- (23,6) -- (22,6) -- (22,4) -- (21,4) -- (21,5) -- (19,5) -- (19,1);

\draw[line width=2pt,red] (29,1) -- (29,8) -- (25,8) -- (25,9) -- (24,9) -- (24,7) -- (23,7) -- (23,9) -- (22,9) -- (22,7) -- (21,7) -- (21,8) -- (8,8) -- (8,1);

\draw[line width=2pt,red] (37,1) -- (37,11) -- (25,11) -- (25,12) -- (24,12) -- (24,10) -- (23,10) -- (23,12) -- (22,12) -- (22,10) -- (21,10) -- (21,11) -- (4,11) -- (4,1);

\draw[line width=2pt,red] (27,-1) -- (27,-5) -- (25,-5) -- (25,-4) -- (24,-4) -- (24,-6) -- (23,-6) -- (23,-4) -- (22,-4) -- (22,-6) -- (21,-6) -- (21,-5) -- (19,-5) -- (19,-1);

\draw[line width=2pt,red] (32,-1) -- (32,-8) -- (25,-8) -- (25,-7) -- (24,-7) -- (24,-9) -- (23,-9) -- (23,-7) -- (22,-7) -- (22,-9) -- (21,-9) -- (21,-8) -- (17,-8) -- (17,-1);

\draw[line width=2pt,red] (34,-1) -- (34,-11) -- (25,-11) -- (25,-10) -- (24,-10) -- (24,-12) -- (23,-12) -- (23,-10) -- (22,-10) -- (22,-12) -- (21,-12) -- (21,-11) -- (8,-11) -- (8,-1);

\draw[line width=2pt,red] (37,-1) -- (37,-14) -- (25,-14) -- (25,-13) -- (24,-13) -- (24,-15) -- (23,-15) -- (23,-13) -- (22,-13) -- (22,-15) -- (21,-15) -- (21,-14) -- (4,-14) -- (4,-1);

\draw[dashed] (25.5,12.5) -- (25.5,-15.5) -- (20.5,-15.5) -- (20.5,12.5) -- (25.5,12.5);

\draw (-2,0) -- (0,0);
\draw[line width=4pt] (0,0) -- (4,0);
\draw (8,0) -- (8,-1) -- (4,-1) -- (4,1) -- (8,1) -- (8,0) -- (9,0);

\draw[dotted] (9,0) -- (11,0);
\draw[line width=4pt] (11,0) -- (15,0);
\draw (19,0) -- (19,-1) -- (15,-1) -- (15,1) -- (19,1) -- (19,0) -- (20,0);

\draw[dotted] (20,0) -- (21,0) -- (21,-3) -- (22,-3) -- (22,3) -- (23,3) -- (23,-3) -- (24,-3) -- (24,3) -- (25,3) -- (25,0) -- (26,0);

\draw[line width=4pt] (27,1) -- (31,1);
\draw[line width=4pt] (27,-1) -- (31,-1);
\draw[line width=4pt] (29,1) -- (29,-1);
\draw (26,0) -- (31,0);

\draw[line width=4pt] (32,1) -- (36,1);
\draw[line width=4pt] (32,-1) -- (36,-1);
\draw[line width=4pt] (34,1) -- (34,-1);
\draw (31,0) -- (36,0);

\draw[line width=4pt] (37,1) -- (41,1);
\draw[line width=4pt] (37,-1) -- (41,-1);
\draw[line width=4pt] (39,1) -- (39,-1);
\draw (36,0) -- (41,0);

\draw[dotted] (41,0) -- (43,0);

\node[vert2,fill=blue] (R1) at (-2,0) {};
\node[vert2,fill=blue] (R2) at (43,0) {};

\end{tikzpicture}
    \end{center}
    \caption{Illustration of an embedding of the graph $G'$ obtained from graph $G$. The drawing is not up to scale but rather should give the rough idea on how the embedding looks like. The grid points on the thick black lines have obnoxious robots on them, and the blue filled square vertices have main robots on them. The red edges correspond to long paths that have obnoxious robots on all inner vertices. The light blue area surrounded by the black dashed line indicates the part of the grid, where we make sure that all red paths have the same length, and the path connecting the last variable gadget with the first clause gadget is sufficiently long. This is also indicated by the snake-like shape of the paths in the area.}\label{fig:nphardnessgrid0}
\end{figure}

\subparagraph{Collision Free Movement on Grids (\cref{sec:grids}).} We give several kernelization algorithms and kernelization lower bounds for \MoveToConnectedShort on grids.
In fact, our first kernelization algorithm works for general formations $\Pi$.
The main insight here is that vertices that are far away from each of the main robots (regardless of whether they are occupied by obnoxious robots or not) cannot be relevant for any solution. This allows us to obtain a reduction rule that, when applied exhaustively, yields a kernel of size $O(k\cdot c^2)$ for \MoveToPi (\cref{thm:kernel}). If $\Pi$ is connectivity, we can improve the size of the kernel. In this case, informally speaking, we have that if a vertex is too far away from \emph{some} main robot, it cannot be relevant, since eventually, all main robots have to be close together. This allows us to obtain an additional reduction rule. Exhaustively applying both rules yields a kernel of size $O(\min\{k\cdot c^2,k^2+c^2\})$ for \MoveToConnectedShort (\cref{thm:kernel2}).

Furthermore, we give a number of kernelization lower bounds, using the well-known framework of OR-cross-compositions~\cite{BJK14,Fom+19}. For all results, we cross-compose from \textsc{Linked Planar 3-SAT} and reuse (modified versions of) the constructions devised to obtain \cref{thm:nphardness}. 
We first show that we presumably cannot obtain polynomial kernels for \MoveToConnectedShort parameterized by $c$ on planar graphs, even if $k=2$ (\cref{thm:nopkplanar}). To this end, we use the planar instances of \MoveToConnectedShort produced by the reduction behind \cref{thm:nphardness} before the modifications to obtain a grid are applied (\cref{fig:nphardness20}). We remove the main robots from each instance, and connect the (previous) starting locations of the main robots to a new starting location with sufficiently long paths (ensuring that the main robots cannot move between instances). For an illustration see \cref{fig:nopk10}. This result shows that the restriction to grids is essential for obtaining polynomial kernels.

\begin{figure}[t]
\begin{center}
\begin{tikzpicture}[line width=1pt,scale=1,yscale=.8]

\draw[rounded corners=10pt,fill=lightgray!25!white] (0, 0) rectangle (6, 1);
\node (E) at (3,.5) {\small Instance 1};

\draw[rounded corners=10pt,fill=lightgray!25!white] (0, -.5) rectangle (6, -1.5);
\node (E) at (3,-1) {\small Instance 2};

\node (E) at (3, -2) {$\vdots$};

\draw[rounded corners=10pt,fill=lightgray!25!white] (0, -2.75) rectangle (6, -3.75);
\node (E) at (3,-3.25) {\small Instance $t$};

\node[vert2,fill=blue] (R1) at (-2,-1.375) {};
\node[vert2,fill=blue] (R2) at (8,-1.375) {};

\draw (R1) -- (0,.5);
\draw (R1) -- (0,-1);
\draw (R1) -- (0,-3.25);

\draw (R2) -- (6,.5);
\draw (R2) -- (6,-1);
\draw (R2) -- (6,-3.25);

\end{tikzpicture}
    \end{center}
    \caption{Illustration of the OR-cross-composition. Blue filled square vertices have main robots on them.}\label{fig:nopk10}
\end{figure}

We further show that we cannot obtain polynomial kernels for \MoveToConnectedShort on grids for the single parameter $c$ (\cref{thm:nopkgrid2}). Note that \cref{thm:nphardness} obviously rules out polynomial kernels for the single parameter $k$. Here, we use the grid instances produced by the reduction behind \cref{thm:nphardness}. Note that we cannot connect the instances in the same fashion as in the previous cross-composition, as this introduces a high-degree vertex which is not embeddable in a grid. We replace this with a tree-like grid that has a main robots each vertex. Intuitively, we cannot use a single main robot since the distance between its starting location and the instances would be too large. Now, instead of single main robots moving through one of the instances, a ``snake'' of main robots moves through it to establish a connection between the right and the left side of the instance. To accommodate this, we have to also make some modifications to instances produced by the reduction behind \cref{thm:nphardness}. 

Finally, we show that we presumably cannot significantly improve the size of the kernel produced by \cref{thm:kernel2} (\cref{thm:nopkgrid2}). Here, we use the weak cross-compositions framework~\cite{hermelin2012weak,Fom+19}. Informally speaking, we have to ensure here that the parameter of the instance produced by the weak cross-composition can be upper-bounded by the maximum size among the input instances multiplied by the square root of the number of input instances. We can achieve this by again using the instances produced by the reduction behind \cref{thm:nphardness} and arranging the input instances in a grid-like fashion, see \cref{fig:nopk30}. Furthermore, we remove the main robots from each input instance and use one main robot starting in the lower left of the arrangement and one in the upper right. Together with appropriate connections between the instances, this ensures that the sum of the distances of both main robots to any of the instances is in $O(\sqrt{t}\cdot x_{\max})$, where $t$ is the number of input instances and $x_{\max}$ is the maximum instance size.

\begin{figure}[t]
\begin{center}
\begin{tikzpicture}[line width=1pt,scale=.41]

\foreach \i in {-1,...,14}
{
    \draw[line width=.5pt,color=lightgray,dotted] (-3.8,\i) -- (29.8,\i);
}

\foreach \i in {-3,...,29}
{
    \draw[line width=.5pt,color=lightgray,dotted] (\i,-1.8) -- (\i,14.8);
}

\draw (0,1) -- (-2,1);
\draw[fill=lightgray!25!white] (0, 0) rectangle (4, 2);
\node (E) at (2,1) {\small $1$};
\draw (4,1) -- (6,1);

\draw (8,1) -- (10,1);
\draw[fill=lightgray!25!white] (10, 0) rectangle (14, 2);
\node (E) at (12,1) {\small $2$};
\draw (14,1) -- (16,1);

\node (E) at (18, 1) {$\ldots$};

\draw (20,1) -- (22,1);
\draw[fill=lightgray!25!white] (22, 0) rectangle (26, 2);
\node (E) at (24,1) {\small ${\sqrt{t}}$};
\draw (26,1) -- (28,1);

\draw (0,6) -- (-2,6);
\draw[fill=lightgray!25!white] (0, 5) rectangle (4, 7);
\node (E) at (2,6) {\small ${\sqrt{t}+1}$};
\draw (4,6) -- (6,6);

\draw (8,6) -- (10,6);
\draw[fill=lightgray!25!white] (10, 5) rectangle (14, 7);
\node (E) at (12,6) {\small ${\sqrt{t}+2}$};
\draw (14,6) -- (16,6);

\node (E) at (18, 6) {$\ldots$};

\node (E) at (24, 6.25) {$\vdots$};

\node (E) at (2, 9.25) {$\vdots$};

\draw (0,12) -- (-2,12);
\draw[fill=lightgray!25!white] (0, 11) rectangle (4, 13);
\node (E) at (2,12) {\small ${t-\sqrt{t}+1}$};
\draw (4,12) -- (6,12);

\node (E) at (12, 12) {$\ldots$};

\node (E) at (12, 9.25) {$\vdots$};

\draw (20,12) -- (22,12);
\draw[fill=lightgray!25!white] (22, 11) rectangle (26, 13);
\node (E) at (24,12) {\small ${t}$};
\draw (26,12) -- (28,12);

\draw[red] (-3,-1) -- (-3,13);

\draw[red] (-3,-1) -- (28,-1);
\draw[red] (-2,-1) -- (-2,1);
\draw[red] (8,-1) -- (8,1);
\draw[red] (20,-1) -- (20,1);

\draw[red] (-3,4) -- (28,4);
\draw[red] (-2,4) -- (-2,6);
\draw[red] (8,4) -- (8,6);

\draw[red] (-3,10) -- (28,10);
\draw[red] (-2,10) -- (-2,12);
\draw[red] (20,10) -- (20,12);

\draw[red] (29,0) -- (29,14);

\draw[red] (-2,3) -- (29,3);
\draw[red] (6,3) -- (6,1);
\draw[red] (16,3) -- (16,1);
\draw[red] (28,3) -- (28,1);

\draw[red] (-2,8) -- (29,8);
\draw[red] (6,8) -- (6,6);
\draw[red] (16,8) -- (16,6);

\draw[red] (-2,14) -- (29,14);
\draw[red] (6,14) -- (6,12);
\draw[red] (28,14) -- (28,12);

\draw[line width=2pt]  (8,6) -- (10,6);
\draw[line width=2pt] (10, 5) rectangle (14, 7);
\draw[line width=2pt]  (14,6) -- (16,6);
\draw[red,line width=2pt] (-3,-1) -- (-3,4) -- (8,4) -- (8,6);
\draw[red,line width=2pt] (16,6) -- (16,8) -- (29,8) -- (29,14);

\node[vert2,fill=blue] (R1) at (-3,-1) {};
\node[vert2,fill=blue] (R2) at (29,14) {};

\end{tikzpicture}
    \end{center}
    \caption{Illustration of the weak cross-composition. The drawing is not up to scale but rather should give the rough idea on how the embedding looks like. Black boxes correspond to the instances and red lines to the connecting paths. Blue filled square vertices have main robots on them. The thick lines correspond to the paths visited by the main robots when instance nr.\ ${\sqrt{t}+2}$ is selected.}\label{fig:nopk30}
\end{figure}

\subparagraph{Collision Free Movement on Planar Graphs (\cref{sec:planar}).} We show that \MoveToPi is fixed-parameter tractable when parameterized by $k+c$ if the input graph is planar (\cref{thm:mso}). Note that this result holds for general formation $\Pi$.
We obtain it by providing a monadic second-order logic (MSO) formulation of the problem and then employing Courcelle's famous theorem~\cite{arnborg1991easy,courcelle1990monadic,courcelle2012graph}, stating that \textsc{MSO Model Checking} is fixed-parameter tractable w.r.t\ the size of the formula and the treewidth of the relational structure on which the formula is defined.

As the relational structure, we take the input graph together with ``time steps'' that model at which point in a movement a vertex is visited by a robot or an edge is traversed by a robot. Since the number of moves is upper-bounded by $c$, we need $c$ of such time steps. Informally speaking, we associate each time step with each of the vertices and edges, which increases the treewidth of the graph by some factor depending on $c$. Additionally, we can remove vertices that are far away from any main robot (similar to the reduction rule in the kernelization algorithm). This allows us to upper-bound the diameter of the input graph by a function in $k+c$ and, since it is planar, also its treewidth. It follows that the treewidth of the relational structure can be upper-bounded by a function in $k+c$.

Now we define movements as sets of vertex-time step pairs with the following properties:
\begin{itemize}
\item For each time step, there is one vertex-time step pair in the set.
\item For each pair of consecutive time steps, the corresponding vertices are either equal (the robot does not move) or connected by an edge.
\item The vertex corresponding to the first time step is the starting location of a robot.
\end{itemize}
These properties can be checked with MSO formulas. Furthermore, we can check (with an MSO formula) for each pair of movements, whether the two corresponding robots do not collide. If during a movement, a robot visits a vertex that is the starting location of another robot, there must be another movement moving that robot out of the way. Finally, we can have at most $c$ non-trivial movements (that is, the robot actually moves at least once). Putting this together, we can obtain an MSO formula that checks for a set of $c$ movements, whether they are collision-free and also do not collide with any robot that is not moved. Finally, we can straightforwardly determine the locations of all main robots after the movements and check whether the formation has property $\Pi$.
We remark that the size of the overall MSO formula only depends on $c$ and the size of a subformula that checks whether a graph on $k$ vertices has property $\Pi$ (which, notably, is constant if $\Pi$ is connectivity). 

\subparagraph{Collision Free Movement on Unit Disk Graphs (\cref{sec:udgs}).} We show that \MoveToConnectedShort on unit disk graphs admits an FPT approximation algorithm when parameterized by $k+c$ for the canonical optimization variant of the problem, where we aim to minimize the energy. 
More specifically, we give a fixed-parameter approximation for \MoveToConnectedShort on unit disk graphs when parameterized by $k+c$, that produces a solution with energy at most $2\cdot \text{OPT}+3k$, where OPT is the energy of an optimal solution (\cref{thm:approx}).
The algorithm makes use of the fact that unit disk graphs are a subclass of so-called clique grids. Informally speaking, a clique grid, each vertex is associated with a grid cell, and all vertices in the same cell form a clique. All additional edges are between vertices in grid cells of distance at most two. 

We start by ``guessing'' the optimal value for $c$, and removing all vertices that are too far away from any main robot starting location (as in the kernelization algorithms).
Afterwards, the number of cliques / grid cells in the clique grid is upper-bounded by some function in $k+c$.
Furthermore, we guess a spanning tree that witnesses the connectivity of the formation of the main robots after the movements.

We introduce a \emph{type} for each vertex of a clique that, intuitively, describes via which cliques the vertex can be reached from each main robot and whether it is occupied by an obnoxious robot. We show how to upper-bound the number of types in a function of $c$ and $k$. We guess the type of each vertex in the spanning tree.
Now, informally speaking, we guess up to $c$ movement vectors of types that move the main robots to vertices of the correct type (as specified by the guesses for the spanning tree), and check whether they are collision-free. When two robots move to vertices of the same type, we need to ensure that these are different vertices to be collision-free.
This, intuitively, is where the factor two in the approximation comes from. We potentially need to move robots to a ``free'' vertex of a certain type, which requires at most one extra move as all vertices of the same type form a clique. Finally, at the end we may need to move all main robots to the ``correct'' vertex of the guessed type in order to obtain the guessed spanning tree, which, as we show, requires at most three extra moves per main robot.

\section{Preliminaries and Known Results}\label{sec:prelims}

In this section, we provide all necessary notation and terminology, as well as some preliminary results that are easy to obtain or straightforwardly follow from known results.

\subparagraph{Parameterized Complexity.} We use the following standard concepts from parameterized complexity theory~\cite{Cyg+15,DF13,FG06,Fom+19}.
A \emph{parameterized problem}~$L\subseteq \Sigma^*\times \mathbb{N}$ is a subset of all instances~$(x,k)$ from~$\Sigma^*\times \mathbb N$, where~$k$ denotes the \emph{parameter}.
A parameterized problem~$L$ is 
in the class FPT (or \emph{fixed-parameter tractable}) if there is an algorithm that decides every instance~$(x,k)$ for~$L$ in~$f(k)\cdot |x|^{O(1)}$ time for some computable function~$f$ that depends only on the parameter. 
A decidable parameterized problem $L$ admits a \emph{kernel} if there is a polynomial-time algorithm that transforms each instance $(x,k)\in\{0,1\}\times \mathbb{N}$ into an instance $(x',k')\in\{0,1\}\times \mathbb{N}$ such that $|x'|+ k'\le f(k)$,
for some computable function~$f$, and $(x',k')\in L$ if and only if $(x,k)\in L$. 
It is known that a decidable parameterized problem $L$ admits a kernel if and only if it is fixed-parameter tractable~\cite{Fom+19}.
If $f$ is a polynomial, that is, $|x'|+ k'\in r^{O(1)}$, we say that $L$ admits a \emph{polynomial kernel}. 
Presumably, not all problems that are fixed-parameter tractable admit a polynomial kernel~\cite{Fom+19}. We give more details on how to exclude polynomial kernels under standard complexity assumptions in the respective section of the paper.
A parameterized problem~$L$ is in the class XP, if there is an algorithm that decides every instance~$(x,k)$ for~$L$ in~$|x|^{f(k)}$ time for some computable function~$f$ that depends only on the parameter.
If a parameterized problem $L$ is W[1]-hard, then it is presumably not contained in FPT~\cite{Cyg+15,DF13,FG06}.

\subparagraph{Graph Theory.}
We use standard notation and terminology from graph theory~\cite{Die16}. 
Let $G=(V,E)$ with $E\subseteq \binom{V}{2}$ be a graph. We denote by~$V(G)=V$ and~$E(G)=E$ the sets of its vertices and edges, respectively.
We use $n$ to denote the number of vertices of $G$.
We call two vertices $u,v\in V$ \emph{adjacent} if~$\{u,v\}\in E$.
The \emph{degree} of a vertex $v\in V$ is $\deg(v)=|\{u \mid \{u,v\}\in E\}|$.
We say that a graph $G'$ is a \emph{subgraph} of $G$ if $V(G')\subseteq V(G)$ and $E(G')\subseteq E(G)$.
For some vertex set $V'\subseteq V$, we denote by~$G[V']$ the \emph{induced} subgraph of $G$ on the vertex set $V'$, that is, $G[V']=(V',E')$ where $E' = \{\{v,w\}\mid \{v,w\}\in E\wedge v\in V'\wedge w\in V'\}$. We further denote $G-V'=G[V\setminus V']$.
We say that a sequence $P=(\{v_{i-1},v_i\})_{i=1}^\ell$ of edges in $E$ forms a \emph{path} of length $\ell$ in $G$ from $v_0$ to $v_\ell$ if $v_{i}\neq v_j$ for all $0\le i<j\le \ell$.
We say that vertices $u$ and $v$ are \emph{connected} in $G$ if there is a path from $u$ to $v$ in $G$. We say that graph $G$ is connected if for all $u,v\in V$ we have that $u$ and $v$ are connected. We say that $G[V']$ for some $V'\subseteq V$ is a \emph{connected component} of $G$ if $G[V']$ is connected and for all $v\in V\setminus V'$ we have that $G[V'\cup\{v\}]$ is not connected.
We say that a path $P$ of length $\ell$ in $G$ from vertex $u$ to $v$ is a \emph{shortest path} if there is no path from $u$ to $v$ in $G$ of length $\ell-1$. The \emph{distance} between $u$ and $v$ in $G$ is the length of a shortest path from $u$ to $v$ in $G$ and denoted $\dist(u,v)$. If $u$ and $v$ are in different connected components of $G$, then we set $\dist(u,v)=\infty$. The \emph{diameter} of a graph $G$ is the maximum distance between any two vertices that are in the same connected component, that is, $\diam(G)=\max_{u,v\in V(G)\mid \dist(u,v)\neq\infty}\dist(u,v)$.

A graph $G$ is \emph{planar} if it can be embedded in the plane (drawn with points for vertices and curves for edges) without crossing edges. 
A graph $G=(V,E)$ is a \emph{unit disk graph} if it has an intersection model consisting of disks of unit size, that is, there is a $f:V\rightarrow \mathbb{R}$ such that for all $\{u,v\}\in E$ it holds that $|f(u)-f(v)|\le 1$.
The set $D=\{f(v)\mid v\in V\}$ is called the \emph{point set} of $G$.
A graph $G=(V,E)$ is a \emph{grid} if there exists a function $f:V\rightarrow [t]\times [t']$ for some $t,t'\in\mathbb{N}$ such that for all $\{u,v\}\in E$ it holds that if $f(u)=(i,j)$ and $f(v)=(i',j')$ then $|i-i'|= 1$ and $j=j'$, or $|j-j'|= 1$ and $i=i'$. Note that grids are planar graphs and unit disk graphs. 

\subparagraph{Problem Definition.}
Assume we are given a graph $G=(V,E)$ and two sets $R,O\subseteq V$ with $R\cap O=\emptyset$. 
Here, $R$ denotes the starting positions of the \emph{main robots} and $O$ denotes the starting positions of the \emph{obnoxious robots}. We denote $k=|R|$ and $\ell=|O|$.
Let $m\in \mathbb{N}$ $r\in R\cup O$. 
We call a vector $M_r=(v_1, v_2, v_3, \ldots, v_m)$ a \emph{movement} for $r$ with $m$ steps if $v_1=r$ and for all $i\in[m-1]$ it holds that $v_i=v_{i+1}$ or $\{v_i,v_{i+1}\}\in E$. 
Intuitively, in every step of a movement, the robot may stay on its current position or move to an adjacent vertex.
The \emph{length} $\len(M_r)$ of a movement $M_r=(v_i)_{i\in[m]}$ is $\len(M_r)=|\{i\in[m-1]\mid v_i\neq v_{i+1}\}|$, that is, the number of steps in which $r$ moves to an adjacent vertex.
We say that a movement $M_r$ is \emph{trivial} if $\len(M_r)=0$.
We say that two movements $M_r=(v_i)_{i\in[m]}$ and $M_{r'}=(v'_i)_{i\in[m]}$ for some $r,r'\in R\cup O$ with $m$ steps each are \emph{collision-free} if for all $i\in[m]$ it holds that $v_i \neq v'_i$ and for all $i\in[m-1]$ it holds that $\{v_i,v_{i+1}\}\neq\{v'_i,v'_{i+1}\}$.
Intuitively, robots collide if they occupy the same vertex in the same step or if they use the same edge (in opposite directions) to move to an adjacent vertex in the same step.
A \emph{movement plan} for $R\cup O$ is a set $\mathcal{M}=\{M_r\mid r\in R\cup O\}$ containing a movement for each robot in $R\cup O$ with the same number of steps.
The length of a movement plan $\len(\mathcal{M})$ is the sum of length of all movements in $\mathcal{M}$, that is, $\len(\mathcal{M})=\sum_{M_r\in \mathcal{M}}\len(M_r)$.
We say that a movement plan $\mathcal{M}$ is \emph{collision-free} if for all $M_r,M_{r'}\in \mathcal{M}$ with $r\neq r'$ we have that the movements $M_r$ and $M_{r'}$ are collision-free.
The \emph{end-configuration} of $R$ after a movement plan $\mathcal{M}=\{M_r\mid r\in R\cup O\}$ is $R^\star(\mathcal{M})=\{v_m\mid \exists M_r\in\mathcal{M} \text{ s.t.\ } r\in R\text{ and } M_r=(v_i)_{i\in[m]}\}$, that is, the set of vertices that are occupied by robots in $R$ after all movements.

Let $\Pi$ be a recursively enumerable graph property. We define the problem \MoveToPi as follows.
\problemdef{\MoveToPi}
{A graph $G=(V,E)$, two sets $R,O\subseteq V$ with $R\cap O=\emptyset$, and an integer $c\in\mathbb{N}$.}
{Is there a collision-free movement plan $\mathcal{M}$ for $R\cup O$ such that $\len(\mathcal{M})\le c$ and $G[R^\star(\mathcal{M})]\in \Pi$?}

In this work, we will focus on the case of $\Pi$ being the property of a graph to be \emph{connected}. We call this problem \MoveToConnected. 

\subparagraph{First Observations and Results.}
Since \MoveToPi contains the problem of deciding whether a graph has property $\Pi$ as a special case, we immediately get the following.
\begin{observation}
    If deciding whether a graph has property $\Pi$ is NP-hard, then \MoveToPi is NP-hard even if $\ell=0$ and $c=0$.
\end{observation}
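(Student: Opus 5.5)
This is a direct reduction from the property-testing problem itself. Let $\mathcal{G}$ be an instance of deciding whether a graph has property $\Pi$, which is NP-hard by assumption. We map it in polynomial time to the instance of \MoveToPi given by
\[
G := \mathcal{G}, \qquad R := V(\mathcal{G}), \qquad O := \emptyset, \qquad c := 0 .
\]
So every vertex receives a main robot and there are no obnoxious robots. This gives $\ell = 0$ and $c = 0$, and the construction is clearly polynomial, indeed linear, in the size of $\mathcal{G}$.

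\textbf{Forward direction.} Suppose $\mathcal{G}\in\Pi$. Take the movement plan in which every robot has the trivial movement $(r)$ with a single step, so $m = 1$.
\begin{itemize}
\item Distinct robots sit on distinct vertices, since $R$ is a set. With $m=1$ there are no consecutive steps, so no edge condition arises. Hence the plan is collision-free.
\item Its length is $0 \le c$.
\item Its end-configuration is $R^\star = V(\mathcal{G})$, so $G[R^\star] = \mathcal{G} \in \Pi$.
\end{itemize}

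\textbf{Backward direction.} Suppose some collision-free plan $\mathcal{M}$ satisfies $\len(\mathcal{M}) \le 0$ and $G[R^\star(\mathcal{M})] \in \Pi$. Every movement then has length $0$, i.e.\ is trivial, so each robot ends where it started. Therefore $R^\star(\mathcal{M}) = R = V(\mathcal{G})$, and $G[R^\star(\mathcal{M})] = \mathcal{G}$, which gives $\mathcal{G} \in \Pi$.

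The only point needing care is that the end-configuration is exactly $R$. This holds because a length-$0$ movement never leaves its start vertex, and the start vertices are pairwise distinct. There is no real obstacle beyond this.

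Since the reduction preserves yes- and no-instances, NP-hardness of the property test transfers to \MoveToPi restricted to $\ell = 0$ and $c = 0$.
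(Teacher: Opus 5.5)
Your proposal is correct and is exactly the paper's argument: the paper notes that \MoveToPi contains property testing as a special case, and that special case is your instance with $R=V(\mathcal{G})$, $O=\emptyset$, $c=0$. Your explicit check that length $0$ forces $R^\star(\mathcal{M})=R=V(\mathcal{G})$ fills in the only detail the paper leaves implicit.
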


If we do not require the movement plan to be collision-free, then we call the problem \CMoveToPi. A generalization of \CMoveToPi has been studied by Demaine et al.~\cite{DemaineHMSGZ09,DemaineHM14}. We can observe that \MoveToPi and \CMoveToPi coincide if~$\ell=0$.
\begin{lemma}\label{lem:equiv}
    Let $I=(G=(V,E),R,O,c)$ be an instance of \MoveToPi. If $O=\emptyset$, then $I$ is a yes-instance of \MoveToPi if and only if $I$ is a yes-instance of \CMoveToPi.
\end{lemma}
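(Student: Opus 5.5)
The forward direction is immediate: a collision-free movement plan $\mathcal{M}$ with $\len(\mathcal{M})\le c$ and $G[R^\star(\mathcal{M})]\in\Pi$ is in particular a (not necessarily collision-free) movement plan witnessing that $I$ is a yes-instance of \CMoveToPi. So the work is in the backward direction. There we start from a movement plan without collision constraints, use $O=\emptyset$, and turn it into a collision-free plan of no greater length with the same end-configuration. I would take the final formation in \CMoveToPi to be a set of $k$ distinct vertices, so that $G[R^\star(\mathcal{M})]$ is a graph on $k$ vertices; otherwise the induced-subgraph formulation does not match.

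Concretely, let $\mathcal{M}$ be a plan for \CMoveToPi with end-configuration $T=R^\star(\mathcal{M})$, $|T|=k$, and $\len(\mathcal{M})\le c$. Each robot $r$ travels along a walk from $r$ to its target $\sigma(r)\in T$. Replacing each walk by a shortest path shows that the minimum-cost bipartite matching between $R$ and $T$ under $\dist$ costs at most $\len(\mathcal{M})$. So fix a bijection $\sigma\colon R\to T$ minimizing $\sum_r \dist(r,\sigma(r))$. The goal is then to realize $\sigma$, or another matching of the same cost, by a collision-free schedule in which every robot moves exactly $\dist(r,\sigma(r))$ steps. Since only main robots are present, the robots are interchangeable as far as the end-configuration is concerned, and this is what makes the argument work.

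To build the schedule, process the robots one at a time, moving each along a shortest path while all others stay still. Suppose the shortest path of $r$ to $\sigma(r)$ passes through a vertex currently occupied by another robot $r'$. Then we swap roles: $r$ stops at the first occupied vertex, and the robot sitting there takes over the remainder of the path. This is the standard token-swapping/relay argument. Performed in the right order, the relay keeps the total length equal to the sum of the remaining path lengths. The cleanest way I see is induction on the total remaining distance $D=\sum_r\dist(r,\sigma(r))$. Pick a robot $r$ with $\sigma(r)\neq r$ and follow its shortest path $v_0=r,v_1,\dots$ to $\sigma(r)$. If $v_1$ is free, move $r$ to $v_1$; this decreases $D$ by one. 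If $v_1$ is occupied by $r'$, reassign targets: $r'$ takes $\sigma(r)$ and $r$ takes $\sigma(r')$. Then, using $\dist(v_1,\sigma(r))=\dist(r,\sigma(r))-1$ and the triangle inequality $\dist(r,\sigma(r'))\le 1+\dist(v_1,\sigma(r'))$, the new matching has cost at most $D$ without any move being made. Optimality of $\sigma$ forces equality, so we may recurse on $r'$ instead. Each such step either makes a move that strictly decreases $D$, or passes the lead along the path to a robot strictly closer to $\sigma(r)$, so the process terminates. Each individual move is a single robot stepping onto an empty vertex, so no vertex or edge collision ever occurs, and the total number of moves equals the minimum matching cost, which is at most $c$. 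Moves are made sequentially in distinct time steps, which yields a valid movement plan with a common number of steps.

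The main obstacle is making the termination and potential argument airtight in the occupied case. One has to check that the reassignment does not increase the matching cost, and that the chain of hand-offs along the path from $r$ to $\sigma(r)$ ends at a robot whose next step is free. The chain does end: the last vertex $\sigma(r)$ is either free or occupied by a robot that already sits on its assigned target, and in that case the handoff terminates, since that robot's target becomes $\sigma(r)$ itself and some earlier vertex must be free. I would handle this by the lexicographic potential (matching cost, remaining distance along the current path).
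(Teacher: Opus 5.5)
Your reduction to a minimum-cost matching $\sigma\colon R\to T$ and your hand-off inequality are both fine. The termination argument, which you yourself flag as the crux, is wrong as written.

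A chain passes the lead from $v_{j-1}$ to $v_j$ only when $v_j$ is occupied. So if a chain ever reaches $v_d=\sigma(r)$, then \emph{every} vertex $v_1,\dots,v_d$ is occupied, and your claim that ``some earlier vertex must be free'' fails exactly in the case where you need it. For a concrete instance, take the path on $x,y,z$ (edges $xy$, $yz$), robots on $x$ and $y$, $T=\{y,z\}$, and the optimal matching $x\mapsto y$, $y\mapsto z$. Starting at the robot on $x$ gives one hand-off, after which the robot on $y$ targets $y$ and the robot on $x$ targets $z$. The chain is over, no move has been made, and $D$ is unchanged. The process must now start a new chain with a new path, and your lexicographic potential (matching cost, remaining distance along the current path) rises from $(2,0)$ to $(2,2)$. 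Nothing in your argument excludes an infinite sequence of move-free chains cycling through optimal matchings.

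The fix is to choose where each chain starts. As long as the occupied set differs from $T$, some target $t^\ast\in T$ is unoccupied. Start the chain at $\sigma^{-1}(t^\ast)$ and follow a shortest path to $t^\ast$. Since the last vertex $v_d=t^\ast$ is free, the chain must reach a free $v_j$ with $j\le d$ and end with a single move onto it. That move lowers the current matching cost by exactly one. Hand-offs never raise the cost, by your triangle-inequality computation, so you do not even need optimality to force equality. Hence every chain makes exactly one move, and the process stops with the robots occupying exactly $T$ after at most the initial matching cost, so at most $c$, moves.

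With this repair your proof is complete, and it takes a genuinely different route from the paper. The paper does local surgery on the given plan: it exchanges identities on edge swaps, delays simultaneous arrivals, and lets a waiting robot leave earlier. You keep only the induced matching and rebuild a sequential plan, which also shows that the optimum equals the minimum matching cost. Your reading of the end-configuration as $k$ distinct vertices is correct; the paper's proof tacitly relies on it too.
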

\begin{proof}
If $I$ is a yes-instance of \MoveToPi, then clearly it is also a yes-instance of \CMoveToPi. For the other direction, assume that $I$ is a yes-instance of $\CMoveToPi$ and let $\mathcal{M}$ be a movement plan that is a solution for $I$. Assume that $\mathcal{M}$ is not collision-free.
Now let $M=(v_1,v_2,\ldots, v_m)$ and $M'=(v'_1,v'_2,\ldots, v'_{m'})$ be two movements for robots $r,r'\in R$, respectively, in $\mathcal{M}$ that collide. Let $t$ be the step in which a collision happens, that is, $v_t=v'_t$, or $v_t=v'_{t-1}$ and $v'_t=v_{t-1}$. 
Consider the latter case first, that is, $v_t=v'_{t-1}$ and $v'_t=v_{t-1}$. Here, instead of the robots swapping their position, they can ``swap their identities'' instead. Formally, we can define a new movement $M^\star$ for robot $r$ which equals $M$ for the first $t-1$ steps, and equals $M'$ for all subsequent steps, that is, $M^\star=(v_1,v_2,\ldots,v_{t-1},v'_t,v'_{t+1},\ldots,v'_{m'})$. Analogously, we define a new movement for $r'$. Clearly, the end-configuration of $\mathcal{M}$ stays the same. By iterating this modification, we can eliminate all collisions of this type.
Now consider the former case, that is, $v_t=v'_t$. First consider the case that both $r$ and $r'$ moved to $v_t=v'_t$ in step $t$, that is $v_{t-1}\neq v_t$ and $v'_{t-1}\neq v'_t$. Then we create a new movement for $r$ where the movement from $v_{t-1}$ to $v_t$ happens one time step later. By iterating this modification, we can create a movement plan where no two robots simultaneously move to the same vertex.
Now assume w.l.o.g.\ that robot $r$ moves to~$v_t$ in step $t$, that is, $v_{t-1}\neq v_t$ and robot $r'$ was already at $v_t=v'_t$ at the previous step, that is, $v'_t=v'_{t-1}$. Furthermore, assume that $t$ is the latest step where such a collision happens.
Then at some later time step $t'>t$ we must have that one of the robots moves away. Assume w.l.o.g.\ that $r'$ moves away first (if only $r$ moves away, they can ``swap identities'' according to the modification above). This implies that $v'_{t-1}=v'_t=\ldots=v_{t'-1}$ in $M'$. 
Then we can create a new movement $M^\star$ for $r'$ that is equal to $M'$ until step $t-1$ and then ``skips'' steps $t$ to $t'-1$ and immediately moves to $v'_{t'}$ at step $t$ and then continues with the steps of $M'$ from $t'$ onward, that is, $M^\star=(v'_1,v'_2,\ldots,v'_{t-1},v'_{t'},v'_{t'+1},\ldots,v'_{m'})$. Note that this may introduce new collisions, but it (strictly) shortens one of the movement vectors. It follows that iterating this modification will eventually eliminate all collisions.
\end{proof}
\cref{lem:equiv} implies that algorithmic results for \CMoveToPi carry over to \MoveToPi (and vice versa) if~$\ell=0$. In particular, we get the following.

\begin{theorem}[\cite{DemaineHM14}]
    If $\ell=0$, then \MoveToConnectedShort is fixed-parameter tractable when parameterized by $k$.
\end{theorem}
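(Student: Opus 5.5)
The plan is to reduce to the collision-allowing problem via \cref{lem:equiv} and then invoke the dichotomy of Demaine et al.~\cite{DemaineHM14}. With $\ell=0$ we have $O=\emptyset$, so \cref{lem:equiv} shows that an instance $(G,R,\emptyset,c)$ is a yes-instance of \MoveToConnectedShort if and only if it is a yes-instance of the collision-allowing variant \CMoveToPi with $\Pi$ being connectivity. The same identity of inputs makes this reduction trivial: we do not transform the instance at all. The parameter $k$ is unchanged, and answers coincide.

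It remains to check that the collision-allowing connectivity problem, with total energy as the objective, falls within the FPT side of the dichotomy of~\cite{DemaineHM14}. Their framework covers properties $\mathcal F$ that are decidable and closed under edge addition, and whose edge-minimal members have bounded treewidth. Connectivity is decidable and closed under adding edges. Its edge-minimal members on $k$ vertices are spanning trees, which have treewidth~$1$. Their result therefore yields an FPT algorithm in $k$ for minimizing the total movement, and comparing the optimum with $c$ decides our instance.

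The one subtlety I would check carefully is the matching of models. Demaine et al.\ allow several robots to end on the same vertex, so their ``formation'' refers to the multiset of final positions, whereas our $R^\star(\mathcal M)$ is a set and $G[R^\star(\mathcal M)]$ must be connected. I would argue that coinciding final positions are harmless for connectivity. Moreover, the collision-elimination argument of \cref{lem:equiv} never produces a plan in which two robots end on the same vertex once collisions are removed, so the final configuration is a genuine $k$-set. We also need that their notion of connectedness of the formation (edges between occupied vertices, allowing equal positions) agrees with connectivity of the induced subgraph on the occupied set.

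Granted this translation, which I expect to be the main point needing care rather than a real obstacle, the theorem follows immediately from \cref{lem:equiv}.
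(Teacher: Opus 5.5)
Your proposal follows the paper's argument: with $O=\emptyset$, \cref{lem:equiv} identifies the instance with its collision-allowing counterpart, and the dichotomy of Demaine et al.\ applies because connectivity is closed under edge addition and its edge-minimal formations are spanning trees, which have treewidth~$1$. The paper also glosses over the end-position subtlety you flag: the proof of \cref{lem:equiv} never treats two robots sharing a final vertex, so invoking it does not settle that case. For connectivity the fix is easy: since start positions are distinct, some robot whose movement is nontrivial ends on a vertex $v$ shared with another robot, and it can stop one move earlier at the neighbor $u$ of $v$ it came from; this keeps $G[R^\star(\mathcal{M})]$ connected and strictly lowers the cost, so iterating yields distinct end positions.
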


Due to the similarity of \CMoveToPi and \MoveToPi, the reductions behind some known hardness results for \CMoveToPi can be straightforwardly modified to work for \MoveToPi. In particular, Demaine et al.~\cite{DemaineHM14} show that if $\Pi$ is hereditary and does not contain all complete graphs and all empty graphs, then \CMoveToPi is W[1]-hard when parameterized by $k+c$ even if~$\ell=0$ via a reduction from the well-known $\Pi$-subgraph detection problem~\cite{khot2002parameterized}. The reduction formalizes the intuition that already finding a suitable target location for the main robots is hard, rather than finding out in which way the robots move to that location. The reduction can be straightforwardly modified to yield the following result.
\begin{theorem}[\cite{DemaineHM14,khot2002parameterized}]
    If $\Pi$ is hereditary and does not contain all complete graphs and all empty graphs, then \MoveToPi is W[1]-hard when parameterized by $k+c$ even if~$\ell=0$.
\end{theorem}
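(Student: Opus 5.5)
We reduce from the $\Pi$-subgraph detection problem, following the reduction of Demaine et al.~\cite{DemaineHM14} for \CMoveToPi. By Khot and Raman~\cite{khot2002parameterized}, if $\Pi$ is hereditary and contains neither all complete graphs nor all empty graphs, then deciding whether a graph $H$ has an induced subgraph on exactly $p$ vertices with property $\Pi$ is W[1]-hard parameterized by $p$.

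Given $(H,p)$, we build an instance with $\ell=0$ and $k=p$, whose parameters $k$ and $c$ depend only on $p$. We take $G$ to be a copy of $H$ and add a separate ``hub'' structure. One option is to create $p$ new start vertices $s_1,\dots,s_p$ carrying the main robots, and join each $s_i$ to every vertex of $H$ by a path of fixed length $d$ (for instance $d=2$). We set $c=p\cdot d$, so that each robot has just enough budget to walk from its start vertex into $H$.

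The difficulty is that hubs and path vertices may themselves end up occupied, so a final configuration could avoid lying in $V(H)$. If we can make robots that stop outside $H$ form a graph that cannot lie in $\Pi$ (or be irrelevant to it), then any solution yields an induced $p$-vertex subgraph of $H$ in $\Pi$. To obtain this, we reuse the exact gadget of Demaine et al. Their reduction starts robots on vertices whose induced structure is chosen so that every end-configuration not entirely inside the copy of $H$ is forced outside $\Pi$. This uses a small graph not in $\Pi$, which exists by the assumption on cliques or independent sets, together with heredity. Since $\ell=0$, \cref{lem:equiv} shows that \CMoveToPi and \MoveToPi have the same yes-instances on the constructed instance. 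Hence the reduction carries over verbatim: yes-instances of \CMoveToPi are exactly yes-instances of \MoveToPi.

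Concretely, the steps are: (1) recall the Khot–Raman hardness for $p$-vertex induced $\Pi$-subgraphs; (2) invoke the Demaine et al.\ construction, checking that it uses no obnoxious robots and that its $k$ and $c$ are bounded by a function of $p$; (3) apply \cref{lem:equiv} to transfer equivalence from the collision-allowed to the collision-free variant.

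The main obstacle is step~(2): confirming that the original reduction indeed has $\ell=0$ and $k+c\le f(p)$. If it were only bounded in $k$, I would add the bounded-length paths described above so that $c\le p\cdot d$. The correctness direction (solution implies subgraph) would then be rechecked by arguing that a robot stopping on a connecting path creates an induced structure forbidden by heredity.
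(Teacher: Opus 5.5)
Your route (Khot--Raman hardness, then the reduction of Demaine et al.\ for \CMoveToPi with $\ell=0$, then transfer to the collision-free variant) is the one the paper takes. The paper gives only a citation and the remark that the reduction ``can be straightforwardly modified''. However, step~(1) as you state it is false.

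You read the hypothesis as ``$\Pi$ contains neither all complete graphs nor all edgeless graphs''. If a hereditary $\Pi$ misses both $K_a$ and $\overline{K_b}$, then by Ramsey's theorem every graph in $\Pi$ has fewer than $R(a,b)$ vertices. Deciding whether $H$ has a $p$-vertex induced subgraph in $\Pi$ is then polynomial-time solvable: answer no if $p\ge R(a,b)$, brute force otherwise. So it cannot be a source of W[1]-hardness. Khot and Raman's W[1]-hardness requires $\Pi$ to contain \emph{all} graphs of exactly one of the two families (all edgeless graphs but not all cliques, or vice versa). The theorem's hypothesis has to be read that way.

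Second, your fallback construction cannot patch step~(2).
\begin{itemize}
\item The hubs $s_1,\dots,s_p$ are pairwise non-adjacent. So whenever $\Pi$ contains all edgeless graphs, the start configuration already lies in $\Pi$ and the instance is trivially yes.
\item In the other case, parked robots break soundness. Take $\Pi$ to be the graphs of independence number at most $2$. A $(p-1)$-clique of $H$ plus one robot on an adjacent path vertex lies in $\Pi$ and fits the budget, even if $H$ has no $p$-vertex induced subgraph in $\Pi$. So the heredity ``recheck'' you propose fails.
\end{itemize}
The argument therefore rests entirely on the cited construction, exactly as in the paper.

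When you transfer that construction, \cref{lem:equiv} applies only if the cited reduction outputs instances of \CMoveToPi in this paper's sense: distinct start vertices, and the end-configuration taken as a vertex set. The paper's word ``modified'' indicates you must check this rather than claim a verbatim transfer. You also only need the nontrivial direction of \cref{lem:equiv}, in the completeness part, applied to a plan that ends on $p$ distinct vertices. Soundness is immediate, because every collision-free plan is also a plan for \CMoveToPi.
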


Demaine et al.~\cite{DemaineHMSGZ09} show that \textsc{Movement to Connectivity} is NP-hard even if $\ell=0$ via a reduction from \textsc{Hamiltonian Path}~\cite{Kar72}. This reduction can be straightforwardly modified as well. If additionally we reduce from \textsc{Hamiltonian Path} on planar graphs~\cite{garey1974some}, we obtain the following result.
\begin{theorem}[\cite{DemaineHMSGZ09}]
    \MoveToConnectedShort is NP-hard even if $\ell=0$ and the input graph is planar.
\end{theorem}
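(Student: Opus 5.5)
We adapt the \textsc{Hamiltonian Path} reduction of Demaine et al.~\cite{DemaineHMSGZ09}, starting from \textsc{Hamiltonian Path} on planar graphs~\cite{garey1974some}, and then transfer it from \textsc{Movement to Connectivity} to \MoveToConnectedShort using \cref{lem:equiv}. The reduction produces no obnoxious robots, so $\ell=0$. By \cref{lem:equiv}, an instance with $O=\emptyset$ is a yes-instance of \MoveToConnectedShort if and only if it is a yes-instance of \CMoveToPi with $\Pi$ being connectivity. Hence it is enough to obtain an NP-hardness reduction for the collision-allowing variant that outputs planar graphs and uses only main robots.

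We use the following construction. Let $H=(V_H,E_H)$ be a planar graph with $n$ vertices, given as an instance of \textsc{Hamiltonian Path}. For each vertex $v\in V_H$, attach a new pendant path of length $L$, for $L$ large (say $L=n^2$), and put a main robot on its far endpoint $p_v$. In addition, put one main robot on every vertex of $V_H$. Attaching pendant paths keeps the graph planar. The energy budget is then calibrated so that the only affordable way to reach a connected end-configuration is for the robots on the vertices of $H$ to form a connected set that links to all the $n$ pendant robots cheaply.

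If this calibration turns out to be awkward, the fallback is to follow Demaine et al.'s original gadget directly. Their construction also has $\ell=0$, and its graph arises from the input graph by attaching paths and leaves, so planarity of $H$ carries over. The correctness argument is then taken verbatim from~\cite{DemaineHMSGZ09}, and \cref{lem:equiv} converts any collision-allowing solution into a collision-free one with the same end-configuration and no greater length.

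The plan therefore has three steps. First, state the reduction of~\cite{DemaineHMSGZ09} and observe that it outputs planar graphs when its input is planar. Second, cite its correctness for \textsc{Movement to Connectivity}. Third, apply \cref{lem:equiv}, which needs $O=\emptyset$. The main obstacle is the second step: checking that the planarity-preserving part of the reduction really is the same gadget, so that the known correctness proof applies unchanged. I would resolve this by checking each gadget added in~\cite{DemaineHMSGZ09} and confirming that it is a tree attached at a single vertex, since attaching such a tree preserves planarity.
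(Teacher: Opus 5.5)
Your three-step plan is the paper's argument: take the \textsc{Hamiltonian Path} reduction of Demaine et al.~\cite{DemaineHMSGZ09}, feed it planar instances~\cite{garey1974some}, and use \cref{lem:equiv} (applicable because no obnoxious robots are created) to pass from the collision-allowing to the collision-free problem. The paper gives no more detail than this. It cites the reduction and remarks that it ``can be straightforwardly modified''.

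The concrete construction you sketch first is not a reduction, and no calibration of the budget can rescue it. Assume $H$ is connected, as one may for \textsc{Hamiltonian Path}. Then the robots you put on all of $V_H$ already induce a connected set. Moving each pendant robot to the neighbour of its attachment vertex costs $n(L-1)$ in total. Conversely, any connected end-configuration must pay at least $L-1$ per pendant path. Either the pendant robot leaves its path, costing at least $L$, or it stops at distance $j$ from the attachment vertex, costing $L-j$, and the $j-1$ gap vertices must be filled by robots coming from outside, costing at least $j-1$. So the optimum is exactly $n(L-1)$ for every connected $H$, whether or not it has a Hamiltonian path. Drop this detour and rely on the cited reduction.

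Your description of the cited gadget as ``trees attached at a single vertex'' is a guess, not something you have verified. Whether that reduction preserves planarity, and whether it fits this paper's model (for instance, distinct start vertices, since $R$ is a set), is the one step that neither you nor the paper actually carries out. Calling the correctness argument ``verbatim'' is therefore stronger than what the paper claims. You are right to single this check out as the crux.
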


If the main robots have prescribed destinations, that is, for each $r\in R$ there is a $d_r\in V$ given as part of the input, and we ask for a collision free movement plan $\mathcal{M}$ (of length at most~$c$) that moves each robot in $R$ to its destination, that is, for each $r\in R$ we have $M_r\in\mathcal{M}$ with $M_r=(v_i)_{i\in[m]}$ and $v_m=d_r$, then the problem is called \Movement. A generalization of \Movement has been studied by Deligkas et al.~\cite{DeligkasEGK024}. 

Observe that, given an instance $(G,R,O,c)$ of \MoveToPi, if we can ``guess'' a destination vertex $d_r$ for each main robot $r\in R$ such that the subgraph of $G$ induced on the destination vertices has property $\Pi$, then we can reduce the problem of finding a collision-free movement plan of length at most $c$ to an instance of \Movement. In particular, this gives us the following.

\begin{observation}
If deciding whether a graph has property $\Pi$ is in NP, then \MoveToPi is in NP.
\end{observation}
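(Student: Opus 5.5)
The certificate will be a choice of destinations together with a movement plan whose number of steps is bounded polynomially. The plan is to guess the destination set, as suggested by the preceding discussion, and then argue that a collision-free movement plan of length at most $c$ has a polynomial-size description that can be checked in polynomial time. Concretely, the certificate consists of three parts:
\begin{enumerate}
\item destinations $d_r$ for all $r\in R$;
\item the NP-certificate for $G[\{d_r\mid r\in R\}]\in\Pi$;
\item a collision-free movement plan $\mathcal{M}$ with $\len(\mathcal{M})\le c$ ending in these destinations.
\end{enumerate}
Verification is straightforward once $\mathcal{M}$ has polynomial size. We check that each movement is valid, that every pair of movements is collision-free at every step, that the total length is at most $c$, and that the end-configuration of $R$ is exactly $\{d_r\}$. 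We then run the $\Pi$-verifier.

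First, we may assume $c\le$ a polynomial in $n$. If $c$ is large, say $c\ge n^3$, we either show that this budget is never needed or simply cap $c$ at the maximum energy ever necessary. Every robot with a nontrivial movement contributes at least one unit, so at most $c$ robots move. The plan is given in compressed form: we list only the robots that move. Each nontrivial movement is then a sequence of at most $c$ actual moves, interleaved with waiting steps.

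The main obstacle is bounding the number of time steps $m$. We remove every time step in which no robot moves; this does not affect collisions, because all positions stay identical. After this, every remaining step contains at least one move, so $m\le c+1$. Each movement is therefore a vector of length at most $c+1$, and only movements of robots that actually move need to be written down. Stationary robots are encoded implicitly, and collisions with them are checked against their fixed start vertices.

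It remains to bound $c$ polynomially in $n$, so that the certificate size $O(c\cdot(c+1)\log n)$ is polynomial. The approach I would take is to show that any yes-instance has a solution of energy at most some polynomial $p(n)$. This holds because collision-free reconfiguration of pebbles on graphs is solvable with polynomially many moves whenever it is solvable at all, by known pebble-motion bounds of $O(n^3)$ moves. Hence we may replace $c$ by $\min\{c,p(n)\}$. If citing this is unsatisfying, we can instead argue more simply through the reduction to \Movement: that problem is in NP by the results of Deligkas et al.~\cite{DeligkasEGK024}, so we guess the destinations together with their certificate. Either way, the certificate is polynomial and the checks run in polynomial time, which places \MoveToPi in NP.
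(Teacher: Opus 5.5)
Your proposal is correct and follows the paper's route. You guess destinations for the main robots, so what remains is an instance of \Movement, which the paper implicitly treats as being in NP. You also spell out what the paper leaves implicit: a movement plan has a polynomial-size certificate, because dropping idle steps leaves at most $c+1$ steps and $c$ can be capped by a polynomial reachability bound. One precision: the paper's collision definition lets robots rotate simultaneously along a cycle of length at least three, so the bound you invoke should be one for pebble motion \emph{with rotations} (e.g., Yu and Rus), not the classical sequential $O(n^3)$ pebble-motion bound. Since a single rotation step can cost up to $n$ units of energy, the resulting cap $p(n)$ is larger but still polynomial.
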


Last but not least, we can observe that \MoveToPi can be solved by a simple guess-and-check algorithm if deciding whether a graph has property $\Pi$ is polynomial-time solvable. We can iterate over all $n^{O(c)}$ possibilities to pick $c$ moves and assign each one to one of the robots. Then we check whether this yields a collision-free movement plan and whether the end-configuration yields a subgraph with property $\Pi$. Hence, we get the following.
\begin{theorem}\label{thm:xp}
    If deciding whether a graph has property $\Pi$ is polynomial-time solvable, then \MoveToPi is in XP when parameterized by $c$.
\end{theorem}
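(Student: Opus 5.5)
The plan is a guess-and-check argument over the few robots that actually move. A collision-free movement plan of length at most $c$ has at most $c$ non-trivial movements, so at most $c$ robots leave their start vertex. Each of these robots only visits vertices within distance $c$ of its start.

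\textbf{Step 1: normalize to sequential plans.} I will first show that we may assume w.l.o.g.\ that in every step exactly one robot moves. Take a collision-free plan and split each step in which several robots move into consecutive sub-steps, ordered suitably. This does not change the length or the end-configuration. The delicate cases are simultaneous moves along a chain ($u\to v$ while $v\to w$) or around a cycle. A chain is handled by moving the front robot first. A cycle of length at least three (swaps along a single edge are already forbidden) cannot be serialized directly. I would instead make the robots on the cycle swap identities: each robot stays put, and the roles are relabeled. The identities only matter in one respect, namely main versus obnoxious. So I apply this relabeling only within a class, or I argue that rotating robots of mixed classes can be replaced by a cheaper plan.

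To keep this step from becoming the main obstacle, I will avoid it entirely by allowing parallel moves in the enumeration itself (Step 2). Then no normalization is needed.

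\textbf{Step 2: enumerate and check.} A plan of length at most $c$ has at most $c$ non-idle steps, so it suffices to consider plans with at most $c$ time steps. Each step is specified by the set of robots moving in it and their target neighbours. Over the whole plan, the total number of (robot, edge) move events is at most $c$. Hence a plan is determined by:
\begin{itemize}
\item a sequence of at most $c$ move events, each an ordered pair of adjacent vertices, giving at most $n^{O(c)}$ choices;
\item an assignment of each event to a time step in $[c]$, giving at most $c^{c}$ choices.
\end{itemize}
For each guess, I simulate the plan from the start configuration $R\cup O$ in polynomial time. The simulation tracks which robot sits on each vertex, and it checks the following:
\begin{itemize}
\item every move departs from an occupied vertex;
\item no two robots occupy the same vertex at the same time;
\item no two robots use the same edge in the same step;
\item the total number of moves is at most $c$.
\end{itemize}
Finally, I compute $R^\star(\mathcal{M})$ and use the assumed polynomial-time algorithm to test $G[R^\star(\mathcal{M})]\in\Pi$.

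The total running time is $n^{O(c)}\cdot c^{c}\cdot n^{O(1)}$, which is XP.

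\textbf{Correctness.} If the algorithm accepts, the simulated plan is a valid collision-free movement plan: pad every movement with stationary steps so that all movements have the same number of steps. Conversely, take any solution and delete the time steps in which no robot moves. The result is still a collision-free solution with at most $c$ steps. Its move events and their timing form one of the enumerated guesses, so the algorithm accepts.

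The only point needing care is this last deletion of idle steps. Deleting them preserves collision-freeness, since every constraint refers to a single step or to a pair of consecutive steps that are both non-idle. It also preserves the length and the end-configuration. Hence the argument goes through without the normalization of Step 1.
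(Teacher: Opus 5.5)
Your proposal is correct and is essentially the paper's guess-and-check argument: enumerate the $n^{O(c)}$ ways to choose at most $c$ moves, check collision-freeness, and test $\Pi$ on the end-configuration. Your explicit assignment of moves to time steps, together with the idle-step deletion, makes precise the timing of simultaneous (e.g.\ cyclic) moves, which the paper's one-line sketch leaves implicit. You can drop Step~1, since your final argument never uses it, and its fallback claim is false in general: on a triangle plus a pendant vertex, all four vertices occupied, with main robots on a triangle vertex and on the pendant vertex, the optimal plan must rotate a main robot together with obnoxious ones.
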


\section{Hardness Results}\label{sec:hardness}

In this section, we give our main hardness results for \MoveToConnectedShort. We begin by proving \cref{thm:w1hard} (restated below), that is, showing that the problem is W[1]-hard when parameterized by $k+c$ on general graphs. Afterwards, we investigate the cases where the input graph is a grids.

\whard*

To prove \cref{thm:w1hard}, we present a parameterized reduction from \textsc{Multicolored Clique}~\cite{fellows2009multipleinterval}. Here, given an $\ell$-partite graph $H=(W_1\uplus W_2 \uplus\ldots\uplus W_\ell, F)$, we are asked whether $H$ contains a clique of size $\ell$. If $v\in W_i$, then we say that $v$ has \emph{color} $i$.  
Let $F_{i,j}$ denote the set of all edges between vertices from $W_i$ and $W_j$. This problem is known to be W[1]-hard when parameterized by $\ell$~\cite{fellows2009multipleinterval}.

\begin{construction}\label{constr:w1hardness}
Given an instance $H=(W_1\uplus W_2 \uplus\ldots\uplus W_\ell, F)$ of \textsc{Multicolored Clique}, we construct an instance $(G=(V,E), R, O, c)$ of \MoveToConnectedShort as follows.

For each color combination $i,j$ with $i<j$ we create an edge selection gadget $G_{i,j}$.
\begin{itemize}
    \item For every edge $e\in F_{i,j}$ we add a vertex $v_e$, place an obnoxious robot on it.
    \item We add three additional vertices $v_1,v_2,v_3$. We connect $v_1$ to each $v_e$ with $e\in F_{i,j}$ and place a main robot on $v_1$. We connect $v_2$ to each $v_e$ with $e\in F_{i,j}$ and place a main robot on $v_2$. We connect $v_3$ to each $v_e$.
\end{itemize}
For an illustration see \cref{fig:w1hardness1}.
For each color $i$ we create a validation gadget $G_i$ as follows.
\begin{itemize}
    \item For every vertex $w\in W_i$ we create $\ell-1$ vertices $w_1,w_2,\ldots,w_{i-1},w_{i+1},\ldots,w_\ell$ and connect them to a path (from $w_1$ to $w_\ell$).
    \item We add two vertices $u^{(i)}_1,u^{(i)}_2$ and place main robots on them. We connect $u^{(i)}_1$ to each $w_1$ with $w\in W_i$. We connect $u^{(i)}_2$ to each $w_\ell$ with $w\in W_i$. We create a path on $c$ vertices, place main robots on each vertex, and connect one of the endpoints to $u^{(i)}_1$, where we specify $c$ later. We create a second path on $c$ vertices, place main robots on each vertex, and connect one of the endpoints to $u^{(i)}_2$.
\end{itemize}

The graph $G$ is now the disjoint union of the edge selection gadgets of all color combinations $i,j$ with $i<j$ and the verification gadgets of all colors $i$. We connect the gadgets as follows.
Let $j=i+1\le \ell$, then we connect vertex $u^{(i)}_2$ from the verification gadget $G_i$ and vertex $u^{(j)}_1$ from the verification gadget $G_j$ with an edge. Finally, let $e=\{w,w'\}\in F_{i,j}$ such that $w\in W_i$ and $w'\in W_j$. We connect vertex $v_e$ from the edge selection gadget $G_{i,j}$ with vertex $w_j$ from the verification gadget $G_i$. Then we subdivide $\ell^2$ times the edge $\{v_e,w_j\}$. We connect vertex $v_e$ from the edge selection gadget $G_{i,j}$ with vertex $w'_i$ from the verification gadget $G_j$. Then we subdivide $\ell^2$ times the edge $\{v_e,w'_i\}$.
Finally, we set $c=(2\ell^2+3)\binom{\ell}{2}$. 
\end{construction}

\begin{figure}[t]
\begin{center}
\begin{tikzpicture}[line width=1pt,scale=.7]

\draw[rounded corners=18pt,fill=lightgray!25!white] (-5.5, 1.5) rectangle (5.5, -3);
\node (E) at (0,1) {\small Edge selection gadget for color combination $i,j$};

\draw[rounded corners=18pt,fill=lightgray!25!white] (-9.5, -4) rectangle (-.5, -9.5);
\node (G1) at (-5,-9) {\small Verification gadget for color $i$};

\draw[rounded corners=18pt,fill=lightgray!25!white] (.5, -4) rectangle (9.5, -9.5);
\node (G1) at (5,-9) {\small Verification gadget for color $j$};

\node[vert] (X2) at (0,0) {};

\node[vert,fill=black] (A2) at (-2,-2) {};
\node[vert,fill=black] (B2) at (-1,-2) {};
\node[vert,fill=black] (C2) at (0,-2) {};

\node (L1) at (1,-2) {$\cdots$};

\node[vert,fill=black] (D2) at (2,-2) {};

\node[vert2,fill=blue] (S1) at (-4,-2) {};
\node[vert2,fill=blue] (S2) at (4,-2) {};

\node[vert2,fill=red] (N1) at (-8,-6.5) {};
\node[vert] (U1) at (-7,-5) {};
\node[vert] (U2) at (-6,-5) {};
\node[vert] (U3) at (-5,-5) {};
\node[vert] (U4) at (-3,-5) {};

\node[vert] (V1) at (-7,-6) {};
\node[vert] (V2) at (-6,-6) {};
\node[vert] (V3) at (-5,-6) {};
\node[vert] (V4) at (-3,-6) {};

\node (J) at (-5,-5.5) {\small $j$};

\node (L2) at (-5,-7) {$\vdots$};

\node[vert] (W1) at (-7,-8) {};
\node[vert] (W2) at (-6,-8) {};
\node[vert] (W3) at (-5,-8) {};
\node[vert] (W4) at (-3,-8) {};
\node[vert2,fill=red] (N2) at (-2,-6.5) {};

\node[vert2,fill=red] (N3) at (2,-6.5) {};
\node[vert] (U5) at (3,-5) {};
\node[vert] (U6) at (4,-5) {};
\node[vert] (U7) at (5,-5) {};
\node[vert] (U8) at (7,-5) {};

\node[vert] (V5) at (3,-6) {};
\node[vert] (V6) at (4,-6) {};
\node[vert] (V7) at (5,-6) {};
\node[vert] (V8) at (7,-6) {};

\node (I) at (4,-6.5) {\small $i$};

\node (L2) at (5,-7) {$\vdots$};

\node[vert] (W5) at (3,-8) {};
\node[vert] (W6) at (4,-8) {};
\node[vert] (W7) at (5,-8) {};
\node[vert] (W8) at (7,-8) {};
\node[vert2,fill=red] (N4) at (8,-6.5) {};

\draw (X2) -- (A2);
\draw (X2) -- (B2);
\draw (X2) -- (C2);
\draw (X2) -- (D2);

\draw (S1) -- (A2);
\draw (S1) edge[bend left=20] (B2);
\draw (S1) edge[bend left=20] (C2);
\draw (S1) edge[bend left=20] (D2);

\draw (S2) -- (D2);
\draw (S2) edge[bend left=20] (B2);
\draw (S2) edge[bend left=20] (C2);
\draw (S2) edge[bend left=20] (A2);

\draw[dotted] (N1) -- (-9,-6.5);

\draw (N1) -- (U1);
\draw (U1) -- (U2);
\draw (U2) -- (U3);
\draw[dotted] (U3) -- (U4);
\draw (U4) -- (N2);

\draw (N1) -- (V1);
\draw (V1) -- (V2);
\draw (V2) -- (V3);
\draw[dotted] (V3) -- (V4);
\draw (V4) -- (N2);

\draw (N1) -- (W1);
\draw (W1) -- (W2);
\draw (W2) -- (W3);
\draw[dotted] (W3) -- (W4);
\draw (W4) -- (N2);

\draw[dotted] (N2) -- (-1,-6.5);

\draw[dotted] (N3) -- (1,-6.5);

\draw (N3) -- (U5);
\draw (U5) -- (U6);
\draw (U6) -- (U7);
\draw[dotted] (U7) -- (U8);
\draw (U8) -- (N4);

\draw (N3) -- (V5);
\draw (V5) -- (V6);
\draw (V6) -- (V7);
\draw[dotted] (V7) -- (V8);
\draw (V8) -- (N4);

\draw (N3) -- (W5);
\draw (W5) -- (W6);
\draw (W6) -- (W7);
\draw[dotted] (W7) -- (W8);
\draw (W8) -- (N4);

\draw[dotted] (N4) -- (9,-6.5);

\draw[color=red, line width=1.5pt] (B2) -- (U3);
\draw[color=red, line width=1.5pt] (B2) edge[bend left=-20] (V6);
\end{tikzpicture}
    \end{center}
    \caption{Illustration of the graph $G$ produced by  \cref{constr:w1hardness}. The black filled round vertices have obnoxious robots on them, and the blue filled square vertices have main robots on them.  
    The red filled square vertices have main robots on them and paths of length $c$ attached to them that have main robots on each vertex (which are not depicted).
    The two red edges correspond to paths of length $\ell^2$ connecting the edge selecting gadget to the verification gadgets.}\label{fig:w1hardness1}
\end{figure}
    
This finished the construction, which can clearly be performed in polynomial time. Note that the number of main robots is in $O(\ell^4)$ and $c\in O(\ell^4)$. Now we prove the correctness of the reduction, that is, the constructed instance $(G,R,O,c)$ is a yes-instance of \MoveToConnectedShort if and only if $H$ is a yes-instance of \textsc{Multicolored Clique}. This together with the above observations yields \cref{thm:w1hard}.

\begin{proof}[Proof of \cref{thm:w1hard}]
    Let be $H=(W_1\uplus W_2 \uplus\ldots\uplus W_\ell, F)$ an instance of \textsc{Multicolored Clique} and let $(G,R,O,c)$ be an instance of \MoveToConnectedShort created from $H$ by \cref{constr:w1hardness}.
    
    $(\Rightarrow):$ Assume that $H$ is a yes-instance of \textsc{Multicolored Clique} and we are given a clique $X\subseteq V(H)$. We construct a collision-free movement plan $\mathcal{M}$ of length $c$ for $(G,R,O,c)$ as follows.

    For each color combination $i,j$ with $i<j$ let $e=\{w,w'\}\in F_{i,j}$ such that $w\in W_i$, $w'\in W_j$, and $w,w'\in X$. We move the obnoxious robot on vertex $v_e$ of the edge selection gadget $G_{i,j}$ to vertex $v_3$ of the edge selection gadget $G_{i,j}$. By \cref{constr:w1hardness}, we have that $v_e$ is connected to vertex $w_j$ from the verification gadget $G_i$ via a path of length $\ell^2+1$ and is connected to vertex $w'_i$ from the verification gadget $G_j$ via a path of length $\ell^2+1$. We move the main robot on $v_1$ to $w_j$ via $v_e$ and the connecting path, and we move the main robot on $v_2$ to $w'_i$ via $v_e$ and the connecting path. Note that this is collision-free and takes $2\ell^2+2$ moves. This finishes the description of the movement plan. Note that it uses $2\ell^2+3$ moves for every edge selection gadget and hence $c=(2\ell^2+3)\binom{\ell}{2}$ moves in total. 
    
    It remains to show that the end-configuration forms a connected subgraph. To this end, note that for some fixed $i$, all edges of a color combination that contains $i$ that are contained in the clique $X$ have the same endpoint in $W_i$. Let that endpoint be $w\in W_i$ Hence, we have that in the verification gadget for color $i$, one main robot is moved to each of $w_1,w_2,\ldots,w_{i-1},w_{i+1},\ldots,w_\ell$ and hence vertices $u^{(i)}_1,u^{(i)}_2$ (which have main robots on them) are connected via a path that has a main robot on each vertex. Since this is the case for every verification gadget and there are no main robots left in the edge selection gadgets, we have that the main robots form a connected subgraph. 

    $(\Leftarrow):$ 
    Assume that $(G,R,O,c)$ is a yes-instance of \MoveToConnectedShort and we are given a collision-free movement plan $\mathcal{M}$ of length $c$ that is a solution for $(G,R,O,c)$. We construct a clique of size $\ell$ in $H$ as follows.

    Assume that $\mathcal{M}$ is a movement plan with a minimum number of moves. First, observe the following. For each verification gadget for a color $i$ it holds that the main robots on $u^{(i)}_1,u^{(i)}_2$ do not move. Recall that by \cref{constr:w1hardness}, each of the vertices is connected to a path of length $c$ that has main robots on every vertex. Moving the main robot on $u^{(i)}_1$ or $u^{(i)}_2$ breaks the connection to the main robots on the respective path. To reestablish connection, either the robot needs to move back to $u^{(i)}_1$ or $u^{(i)}_2$, respectively, which does not happen in a movement plan with a minimum number of moves, or all robots on the path need to be moved by at least one step, which costs more than $c$ moves in total and hence cannot happen in a solution. It follows that only the $2\binom{\ell}{2}$ main robots in edge selection gadgets move.
    Furthermore, notice that the paths between verification gadgets and edge selection gadgets have length $\ell^2>2\binom{\ell}{2}$. It follows that a connection cannot be established via those paths since there are not enough main robots that can move.
    We can conclude that for each color $i$ the vertices $w_1,w_2,\ldots,w_{i-1},w_{i+1},\ldots,w_\ell$ for some $w\in W_i$ need to be occupied by main robots in the end-configuration. The distance of a starting position of a main robot in an edge selection gadget to such a vertex is at least $\ell^2+2$. Furthermore, all neighboring vertices of starting position of a main robot in an edge selection gadget are occupied by obnoxious robots that need to be moved away first. Note that for each edge selection gadget, there are at most one move available to do this. It follows that in each edge selection gadget, we move exactly one obnoxious robot from some $v_e$ to the $v_3$ vertex of that gadget. We say that edge $e$ is ``selected''. Furthermore, we have that for each color $i$, exactly the vertices $w_1,w_2,\ldots,w_{i-1},w_{i+1},\ldots,w_\ell$ for some $w\in W_i$ are occupied by main robots in the end-configuration, which each were moved there with exactly $\ell^2+2$ moves.
    
    Let $X$ be the set of all endpoints of selected edges. We claim that $X$ forms a clique of size $\ell$ in $H$. Note that $X$ contains at least one vertex of each color, hence $|X|\ge \ell$. In the remainder, we show that also $|X|\le \ell$. Note that this implies that $X$ is a clique, since for each color $i$ we have that all edges of a color combination that contains $i$ have the same endpoint of color $i$. Assume for contradiction that $w,w'\in X$ with $w,w'\in W_i$ for some color~$i$. Then there are selected edges $e,e'$ such that $w$ is an endpoint of $e$ and $w'$ is an endpoint of $e'$. Consider the edge selection gadgets where $e$ and $e'$ were selected. Both correspond to a color combination that involves $i$, so both are connected to the verification gadget of color $i$. 
    If two main robots from one of the gadgets are moved into the verification gadget of color $i$, this involves at least $\ell^2+3$ moves. Hence, by arguments made above, this is not possible in a solution. We can conclude that from both edge selection gadgets, one main robot must be moved to the verification gadget of color $i$ with $\ell^2+2$ moves. Then, however, the main robots are moved to vertices $w_j$ and $w'_{j'}$ for some $j,j'$ in the verification gadget for color $i$. This is a contradiction to the observation above, that for color $i$ the vertices $w_1,w_2,\ldots,w_{i-1},w_{i+1},\ldots,w_\ell$ for some $w\in W_i$ need to be occupied by main robots in the end-configuration. We can conclude that $X$ is a clique in $H$.
\end{proof}

\subparagraph{Hardness on Grids.} Now we prove \cref{thm:nphardness} (restated below), that is, we show that \MoveToConnectedShort is NP-hard if the input graph is a grid, even if there are only two main robots. To this end, we first give a reduction that produces a planar graph, and then we show how to modify the construction such that we obtain a grid.

\nphardness*

To prove \cref{thm:nphardness}, we present a polynomial-time reduction from \textsc{Linked Planar 3-SAT}~\cite{Pilz19}. Here, we are given a Boolean formula $\phi$ in 3-CNF with clause set $Y$ and variable set $X$, and a graph $H=(Y\cup X, F)$ such that the following holds:
\begin{itemize}
    \item Graph $H$ is the union of a Hamiltonian cycle that first visits all elements of $X$ and then all elements of $Y$, and the incidence graph of $\phi$ (i.e., the graph on $Y\cup X$ where $c\in Y$ and $x\in X$ are connected by an edge if and only if variable $x$ appears on clause $c$).
    \item Graph $H$ is planar and w.l.o.g.\ there is an embedding where each edge between a clause and a variable that appears negated in the clause is inside the cycle, and each edge between a clause and a variable that appears non-negated in the clause is outside of the cycle.
    \item Each variable appears in at most three clauses.
\end{itemize}
This problem is known to be NP-hard even if $H$ together with its embedding are given in the input~\cite{Pilz19}. Note that the Hamiltonian cycle can easily be computed from $H$.

In the following, we first describe a construction that, given an instance of \textsc{Linked Planar 3-SAT}, produces an instance of \MoveToConnectedShort with a planar graph. Later, we describe how to modify the instance such that the graph is a grid.

\begin{construction}\label{constr:planarnph}
Given an instance $(\phi, H)$ of \textsc{Linked Planar 3-SAT}, we construct an instance $(G=(V,E), R, O, c)$ of \MoveToConnectedShort with $k=2$ as follows. 
Let the variables $X$ and clauses $Y$ in $\phi$ be ordered in the way they appear in the Hamiltonian cycle in $H$.

For each variable $x\in X$, we create a variable gadget $G_x=(V_x, E_x)$ with 
\begin{itemize}
    \item $V_x=\{v_1^x,v_2^x,v_3^x,v_4^x,u_1^x,u_2^x,u_3^x,w_1^x,w_2^x,w_3^x\}$, and
    \item $E_x=\{\{v_1^x,v_2^x\},\{v_2^x,v_3^x\},\{v_3^x,u_1^x\},\{u_1^x,u_2^x\},\{u_2^x,u_3^x\},\{u_3^x,v_4^x\},\{v_3^x,w_3^x\},\{w_3^x,w_2^x\},$ 
    
    $\{w_2^x,w_1^x\},\{w_1^x,v_4^x\}\}$. 
\end{itemize}
Furthermore, we place obnoxious robots on the vertices $v_1^x,v_2^x,v_3^x$. For an illustration of the variable gadget see \cref{fig:vargadget}.

\begin{figure}[t]
\begin{center}
\begin{tikzpicture}[line width=1pt,scale=.65,xscale=1.6]

\node[vert,fill=black,label=below:$v_1^x$] (A) at (1,0) {};
\node[vert,fill=black,label=below:$v_2^x$] (B) at (2,0) {};
\node[vert,fill=black,label=below:$v_3^x$] (C) at (3,0) {};
\node[vert,label=below:$u_1^x$] (XT1) at (4,1) {};
\node[vert,label=below:$u_2^x$] (XT2) at (5,1) {};
\node[vert,label=below:$u_3^x$] (XT3) at (6,1) {};
\node[vert,label=below:$w_3^x$] (XF1) at (4,-1) {};
\node[vert,label=below:$w_2^x$] (XF2) at (5,-1) {};
\node[vert,label=below:$w_1^x$] (XF3) at (6,-1) {};
\node[vert,label=below:$v_4^x$] (N1) at (7,0) {};

\draw (A) -- (B);
\draw (B) -- (C);
\draw (C) -- (XT1);
\draw (XT1) -- (XT2);
\draw (XT2) -- (XT3);
\draw (XT3) -- (N1);
\draw (C) -- (XF1);
\draw (XF1) -- (XF2);
\draw (XF2) -- (XF3);
\draw (XF3) -- (N1);

\end{tikzpicture}
    \end{center}
    \caption{Illustration of a variable gadget $G_x$ for a variable $x$ (\cref{constr:planarnph}). The black filled vertices have obnoxious robots on them.}\label{fig:vargadget}
\end{figure}

We construct the graph $G$ of the \MoveToConnectedShort instance by first taking the graph $H$ and replacing each vertex corresponding to a variable, say $x$, with variable gadget $G_x$ in the following way. If two vertices corresponding to variables $x,y$ are adjacent in $H$, that is, they appear consecutive in the ordering (first $x$ and then $y$), then we add edge $\{v_4^x,v_1^y\}$ to $G$. 
Let $x$ be the last variable in the ordering and let $c$ be the first clause in the ordering, then we add edge $\{v_4^x,v^c\}$ to $G$ and subdivide it $s+2$ times, where $s=|X|+|Y|$.
For each clause $c\in C$ we add call the vertex corresponding to it $v^c$ and place an obnoxious robot on $v^c$. 
 Let variable~$x$ appear in clauses $c_1,c_2,c_3$. Then the vertex $x$ in $H$ is connected to vertices $v^{c_1},v^{c_2},v^{c_3}$ in~$H$. Let w.l.o.g.\ the order $c_1,c_2,c_3$ be the linearization of the cyclic ordering in which the edges incident with $x$ to the vertices $c_1,c_2,c_3$ are embedded around $x$ in the planar embedding of $H$, such that all clauses in which $x$ appears non-negated are ordered before the clauses in which $x$ appears negated. For $i\in\{1,2,3\}$, if~$x$ appears in $c_i$ non-negated, then we connect $u_i^x$ with $v^{c_i}$ with an edge and subdivide the edge~$s$ times. Furthermore, we place an obnoxious robot on each vertex created by a subdivision. If $x$ appears in~$c_i$ negated, then we connect $w_i^x$ with $v^{c_i}$ with an edge and subdivide the edge $s$ times. Furthermore, we place an obnoxious robot on each vertex created by a subdivision. 
Finally, we add two vertices $r_1,r_2$ to $G$ and place a main robot on each of them. Let $x$ be the first variable in the ordering, then we add edge $\{r_1,v_1^x\}$ to $G$. Let $c$ be the last clause in the ordering, then we add edge $\{r_2,v^c\}$ to $G$.
For an illustration, see \cref{fig:nphardness}. Finally, we set $c=16\cdot |X|+(s+2)\cdot (|Y|+1)+1$. We will assume in the proofs of correctness that $s^2/4>c$. Note that this is clearly the case if $|X|$ and $|Y|$ are sufficiently large.
\end{construction}

\begin{figure}[t]
\begin{center}
\begin{tikzpicture}[line width=1pt,scale=.65]

\draw[rounded corners=18pt,fill=lightgray!25!white] (.5, 3.5) rectangle (7.5, -2);
\node (E) at (4,3) {\small Variable gadget};

\draw[rounded corners=18pt,fill=lightgray!25!white] (9.5, 3.5) rectangle (16.5, -2);
\node (E) at (13,3) {\small Variable gadget};

\draw[rounded corners=18pt,fill=lightgray!25!white] (5.5, -3) rectangle (16.5, -5);
\node (E) at (11,-4.5) {\small Clause gadgets};

\node[vert2,fill=blue] (R1) at (-1,0) {};
\node[vert,fill=black] (A) at (1,0) {};
\node[vert,fill=black] (B) at (2,0) {};
\node[vert,fill=black] (C) at (3,0) {};
\node[vert] (XT1) at (4,1) {};
\node[vert] (XT2) at (5,1) {};
\node[vert] (XT3) at (6,1) {};
\node[vert] (XF1) at (4,-1) {};
\node[vert] (XF2) at (5,-1) {};
\node[vert] (XF3) at (6,-1) {};
\node[vert] (N1) at (7,0) {};

\node[vert,fill=black] (A2) at (10,0) {};
\node[vert,fill=black] (B2) at (11,0) {};
\node[vert,fill=black] (C2) at (12,0) {};
\node[vert] (YT1) at (13,1) {};
\node[vert] (YT2) at (14,1) {};
\node[vert] (YT3) at (15,1) {};
\node[vert] (YF1) at (13,-1) {};
\node[vert] (YF2) at (14,-1) {};
\node[vert] (YF3) at (15,-1) {};
\node[vert] (N2) at (16,0) {};

\node[vert,fill=black] (N3) at (16,-4) {};

\node[vert,fill=black] (CL1) at (11,-4) {};
\node[vert,fill=black] (CL2) at (6,-4) {};

\node[vert2,fill=blue] (R2) at (-1,-4) {};

\draw (R1) -- (A);

\draw (A) -- (B);
\draw (B) -- (C);
\draw (C) -- (XT1);
\draw (XT1) -- (XT2);
\draw (XT2) -- (XT3);
\draw (XT3) -- (N1);
\draw (C) -- (XF1);
\draw (XF1) -- (XF2);
\draw (XF2) -- (XF3);
\draw (XF3) -- (N1);

\draw[dotted] (N1) -- (A2);

\draw (A2) -- (B2);
\draw (B2) -- (C2);
\draw (C2) -- (YT1);
\draw (YT1) -- (YT2);
\draw (YT2) -- (YT3);
\draw (YT3) -- (N2);
\draw (C2) -- (YF1);
\draw (YF1) -- (YF2);
\draw (YF2) -- (YF3);
\draw (YF3) -- (N2);

\draw[dotted] (N2) to[out=-10,in=10] (N3);

\draw (N3) -- (CL1);
\draw (CL1) -- (CL2);
\draw[dotted] (CL2) -- (R2);

\draw[color=red, line width=1.5pt] (N3) -- (YF3);
\draw[color=red, line width=1.5pt] (CL1) -- (YF2);
\draw[color=red, line width=1.5pt] (CL1) -- (XF3);
\draw[color=red, line width=1.5pt] (CL2) -- (XF2);

\draw[color=red, line width=1.5pt, rounded corners=10pt] (N3) -- (16.5,-5) -- (18,-5) -- (18,2) -- (15,2) -- (YT3);

\draw[color=red, line width=1.5pt, rounded corners=10pt] (N3) -- (15.5,-5.5) -- (18.5,-5.5) -- (18.5,2.5) -- (6,2.5) -- (XT3);

\draw[color=red, line width=1.5pt, rounded corners=10pt] (CL2) -- (6,-5) -- (-2,-5) -- (-2,2) -- (4,2) -- (XT1);

\end{tikzpicture}
    \end{center}
    \caption{Illustration of an embedding described in the proof of \cref{obs:planar} of the graph $G$ produced by  \cref{constr:planarnph}. The black filled round vertices have obnoxious robots on them, and the blue filled square vertices have main robots on them. There are two variable gadgets on the left and three clause gadgets on the right (indicated by light gray boxes). The red lines correspond to paths of length $s$ that have obnoxious robots on all inner vertices.}\label{fig:nphardness}
\end{figure}

This finishes the construction, which can clearly be performed in polynomial time. It is fairly straightforward to check that the constructed graph $G$ is planar.
\begin{observation}\label{obs:planar}
The graph $G$ obtained from \cref{constr:planarnph} is planar.    
\end{observation}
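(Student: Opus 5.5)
To show that $G$ is planar, I will start from the given planar embedding of $H$ and modify it locally. Planarity survives subdividing edges, replacing a vertex by a small planar gadget whose attachment points appear in the correct cyclic order around it, adding pendant vertices, and deleting an edge. So the plan is to build $G$ from $H$ only through such operations and to check that each one keeps a planar drawing.

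\textbf{Step 1: reshape the drawing of $H$.} I delete the cycle edge from the last clause back to the first variable. This leaves a Hamiltonian path $P$ that visits all variables and then all clauses. I redraw the embedding so that $P$ is a horizontal line, or equivalently a ``U''-shaped curve as in \cref{fig:nphardness}. In this drawing, edges to non-negated occurrences lie on one side of $P$ (outside the former cycle) and edges to negated occurrences lie on the other side (inside). This uses the defining property of \textsc{Linked Planar 3-SAT}. Because the removed edge sat on the cycle, the two sides stay separated, and the whole drawing remains crossing-free.

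\textbf{Step 2: replace each variable vertex $x$ by $G_x$.} The path $P$ enters $x$ from the left and leaves to the right. So $v_1^x$ takes the incoming path edge and $v_4^x$ takes the outgoing one. The chain $v_1^x v_2^x v_3^x$ followed by the two parallel routes to $v_4^x$ forms a cycle. I draw this cycle with the $u$-route on the non-negated side and the $w$-route on the negated side.

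Around $x$, the edges to $c_1,c_2,c_3$ leave in a fixed cyclic order. After deleting the path edges, each side contributes a contiguous interval of that order: the non-negated clauses on one side, the negated ones on the other. The construction orders $c_1,c_2,c_3$ by linearizing the cyclic order with non-negated clauses first. So the non-negated edges attach to $u_i^x$ in increasing $i$ along the upper route. The negated edges attach to the lower route in the reversed labeling, with $w_i^x$ placed so that $w_3^x$ is next to $v_3^x$. This reversed labeling keeps the rotation consistent going around the gadget.

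I expect this rotation bookkeeping to be the main obstacle. The point to verify is that walking around the cycle of $G_x$ visits the attachments in exactly the cyclic order of $x$'s edges in $H$. The $w$ labels run in reverse, which matches traversing the lower side in the opposite direction. With the order matching, the gadget can be placed inside a small disk around $x$, with its boundary attachments in the same positions, and no crossings arise.

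\textbf{Step 3: the remaining additions.} Subdividing the clause–variable edges $s$ times and the variable–clause connector $s+2$ times does not affect planarity. The vertices $r_1$ and $r_2$ are pendants attached at the two ends of $P$, so they can be drawn next to their neighbors. All of these operations preserve planarity, so $G$ is planar.
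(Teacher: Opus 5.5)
Your proposal is correct and takes the same route as the paper: keep the planar embedding of $H$, and draw each variable gadget with the $u$-route on the non-negated (outside) side and the $w$-route on the negated (inside) side. Then check that the clause attachments appear around $G_x$ in the same cyclic order as the edges around $x$ in $H$, and that the subdivisions and the pendant vertices $r_1,r_2$ keep the graph planar. Your explicit deletion of the closing cycle edge and your check of the reversed $w$-labelling only spell out details the paper leaves implicit; like the paper, you rely on the w.l.o.g.\ choice of orientation in which the rotation at $x$ is linearized.
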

\begin{proof}
    For an illustration, see \cref{fig:nphardness}. Recall that $H$ is the union of a Hamiltonian cycle that first visits all variables $X$ and then all clauses $Y$, and the incidence graph of $\phi$. Consider a planar embedding of $H$ where each edge between a clause and a variable that appears negated in the clause is inside the cycle, and each edge between a clause and a variable that appears non-negated in the clause is outside of the cycle. In \cref{constr:planarnph}, when we replace the vertex corresponding to a variable $x$ with variable gadget $G_x$, we can arrange it such that vertices $u_1^x,u_2^x,u_3^x$ point towards the outside of the cycle and $w_1^x,w_2^x,w_3^x$ to the inside. The gadget $G_x$ itself is clearly planar. Now we can embed the paths in $G$ connecting the clause vertices with the variable gadgets along the edges between clauses and variables in $H$. Note that in the construction of $G$, when we connect the variable gadgets with the clause vertices, we connect clause vertices to vertices of the variable gadget in the same cyclic ordering as the edges around the corresponding variable in $H$. It follows that we can embed the paths connecting variable gadgets and clause vertices in $G$ in a planar way. Finally, vertices $r_1,r_2$ and their incident edges can clearly be embedded in a planar way.  
\end{proof}

Next, we show correctness of the reduction, that is, the constructed instance $(G,R,O,c)$ is a yes-instance of \MoveToConnectedShort if and only if $(\phi, H)$ is a yes-instance of \textsc{Linked Planar 3-SAT}.

\begin{lemma}\label{lem:nphcorr1}
    If $(\phi, H)$ is a yes-instance of \textsc{Linked Planar 3-SAT}, then $(G,R,O,c)$ is a yes-instance of \MoveToConnectedShort.
\end{lemma}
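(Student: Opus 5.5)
We prove \cref{lem:nphcorr1} by turning a satisfying assignment $\alpha$ of $\phi$ into an explicit movement plan. Robots move one at a time, so at every step exactly one robot moves into a currently free vertex. Such a plan is automatically collision-free, and it only remains to check that each target vertex is free when it is entered. We use the notation of \cref{constr:planarnph}. The plan has four phases, and we check that the total number of moves is at most $c=16|X|+(s+2)(|Y|+1)+1$.

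\emph{Phase 1 (setting the variables).} Let $x$ be a variable with $\alpha(x)=\text{true}$. We push the three obnoxious robots on $v_1^x,v_2^x,v_3^x$ into the $w$-branch, which frees $u_1^x,u_2^x,u_3^x$. Concretely, the robot on $v_3^x$ goes to $w_1^x$ via $w_3^x,w_2^x$ (3 moves). Then the robot on $v_2^x$ goes to $w_2^x$ via $v_3^x,w_3^x$ (3 moves). Finally the robot on $v_1^x$ goes to $w_3^x$ via $v_2^x,v_3^x$ (3 moves). This is 9 moves in total. If $\alpha(x)=\text{false}$, we do the symmetric thing into the $u$-branch, which frees $w_1^x,w_2^x,w_3^x$. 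Afterwards $v_1^x,\dots,v_4^x$ are free, and so is the branch encoding the literal made true by $\alpha$.

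\emph{Phase 2 (first main robot through the variable gadgets).} The main robot on $r_1$ now walks through the variable gadgets in order. For each gadget it enters $v_1^x$, passes $v_2^x,v_3^x$, traverses the freed branch, and reaches $v_4^x$; this costs 7 moves per gadget, including the entering edge. Together with Phase 1 this gives exactly $16|X|$ moves. The robot then continues along the $s+2$ subdivision vertices between the last $v_4^x$ and the first clause vertex $v^{c_1}$. It stops on the subdivision vertex adjacent to $v^{c_1}$, using $s+2$ further moves. All these vertices are free, because Phase 1 touched only the $v$-vertices and the non-freed branches.

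\emph{Phase 3 (freeing the clauses).} For each clause $c\in Y$, choose a literal of $c$ satisfied by $\alpha$; say the corresponding path of $s$ subdivision vertices ends at a vertex $z$ that is $u_i^x$ or $w_i^x$. By Phase 1, $z$ is free. By Phase 2, $r_1$ has already left the gadget, so $z$ stays free. We shift every obnoxious robot on this path one step towards $z$, starting with the one nearest to $z$, and finally move the robot on $v^c$ onto the first path vertex. This costs $s+1$ moves per clause, hence $|Y|(s+1)$ in total, and it frees $v^c$. Different clauses use different paths and different endpoints $z$, so the shifts do not interfere with one another.

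\emph{Phase 4 (meeting).} All clause vertices are now free. They form a path in $G$, since consecutive clauses are joined by the unsubdivided Hamiltonian-cycle edges. The main robot on $r_2$ walks along this path to $v^{c_1}$ in $|Y|$ moves, and is then adjacent to the main robot of $r_1$. Hence $G[R^\star(\mathcal{M})]$ is connected.

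Summing up, the plan uses $16|X|+(s+2)+|Y|(s+1)+|Y|=16|X|+(s+2)(|Y|+1)\le c$ moves. The main point needing care is the ordering of the phases. The clause shifts of Phase 3 may place obnoxious robots on the freed $u$- or $w$-vertices that $r_1$ must traverse. We avoid this by performing Phase 3 only after $r_1$ has left all variable gadgets, which is why Phase 2 parks $r_1$ on the long connecting path before any clause is freed.
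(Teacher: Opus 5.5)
Your proof is correct and follows essentially the same four-phase plan as the paper: 9 moves per variable gadget to clear the branch of the satisfied literal, $r_1$ walking through all gadgets, then a one-step shift along one satisfied-literal path per clause (done only after $r_1$ has left the gadgets), and finally closing the gap between the two main robots. The differences are cosmetic: you let $r_2$ walk to $v^{c_1}$ while $r_1$ waits next to it, whereas the paper sends $r_1$ all the way to the last clause vertex; your explicit Phase-1 routing and your count of $16|X|+(s+2)(|Y|+1)\le c$ moves are also slightly more precise than the paper's.
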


\begin{proof}
Assume we are given a satisfying assignment for $\phi$. We create the following movement plan.
\begin{itemize}
\item We iterate over the variables in some arbitrary but fixed order. Let $x$ be a variable. If $x$ is set to true, then we move the three obnoxious robots on $v_1^x,v_2^x,v_3^x$ to $w_1^x,w_2^x,w_3^x$ in a collision-free way using 9 moves: We first move the robot on $v_3^x$ to $w_3^x$, then we move the robot on $v_2^x$ to $w_2^x$, and finally we move the robot on $v_1^x$ to $w_1^x$. It is straightforward to check that this requires 9 moves and is collision-free. If $x$ is set to false, then we move the three obnoxious robots on $v_1^x,v_2^x,v_3^x$ to $u_1^x,u_2^x,u_3^x$ in an analogous way. Doing this for each variable requires $9\cdot |X|$ moves in total.
\item Let $z$ be the last variable in the ordering (implied by the Hamiltonian cycle). We now move $r_1$ to $v_4^z$. Note that in each variable gadget for some variable $x$, the vertices $v_1^x,v_2^x,v_3^x,v_4^x$ are not occupied by any robot. Furthermore, either vertices $u_1^x,u_2^x,u_3^x$ or vertices $w_1^x,w_2^x,w_3^x$ are not occupied by any robot. It follows that we can move $r_1$ collision-free to $v_4^z$ using $7\cdot |X|+1$ moves.
\item Now for each clause $c$ we identify a literal $\ell$ that satisfies the clause. Let $x$ be the variable of literal $\ell$. If $x$ appears in $\ell$ non-negated, then by construction we have that $v^c$ is connected by a path of length $(s+1)$ to one of the vertices $u_1^x,u_2^x,u_3^x$ and no other literal of any other clause is connected to the same vertex as $\ell$. Furthermore, we have that these vertices are not occupied by any robot. Let $v^c$ be connected to $u_i^x$ with $i\in[3]$. The vertex $v^c$ as well as all inner vertices on the path of length $(s+1)$ from $v^c$ to $u_i^x$ are occupied by obnoxious robots. We move all robots towards $u_i^x$ by one step, starting with the one closest to $u_i^x$ and ending with the one occupying $v^c$. This takes $(s+1)$ moves and is clearly collision-free. Afterwards, vertex $v^c$ is not occupied by any vertex. If $x$ appears in $\ell$ negated, we have an analogous situation with vertices $w_1^x,w_2^x,w_3^x$ instead of $u_1^x,u_2^x,u_3^x$.
Doing this for each clause requires $(s+1)\cdot |Y|$ moves in total.
\item Let $c^\star$ be the last clause in the ordering (implied by the Hamiltonian cycle). We now move $r_1$ from $v_4^z$ to $v^{c^\star}$. This is collision-free since, after the previous moves, all vertices corresponding to clauses are not occupied by any robot, and it requires $|Y|+s+2$ moves.
\end{itemize}
After the above-described movement plan, main robot $r_1$ occupies vertex $v^{c^\star}$ and main robot~$r_2$ still occupies its starting location. Since these two vertices are connected, we have that the end-configuration of the movement plan forms a connected subgraph. Furthermore, the movement plan clearly has length $16\cdot |X|+(s+2)\cdot (|Y|+1)+1$. 
\end{proof}

\begin{lemma}\label{lem:nphcorr2}
    If $(G,R,O,c)$ is a yes-instance of \MoveToConnectedShort, then $(\phi, H)$ is a yes-instance of \textsc{Linked Planar 3-SAT}.
\end{lemma}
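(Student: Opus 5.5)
We argue structurally about what a solution $\mathcal{M}$ with $\len(\mathcal{M})\le c$ must look like. Fix such an $\mathcal{M}$ of minimum length. Since $k=2$, the end-configuration is connected iff $r_1$ and $r_2$ end on adjacent vertices. We first locate the meeting point. Any path from $r_1$ to $r_2$ either goes along the spine (all variable gadgets, the subdivided path of length $s+3$ from the last variable gadget to the first clause vertex, then the clause vertices), or leaves the spine through some variable--clause path of length $s+1$. Every vertex of such a connecting path and every clause vertex carries an obnoxious robot. To push a robot off a path, its whole chain has to shift. So any route for a main robot through a connecting path, or any meeting point in the interior of one, forces about $s$ obnoxious moves along that path plus about $s$ main-robot moves.

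\textbf{Step 1 (every clause vertex is vacated).} Since the meeting point must lie between $r_1$ and $r_2$, either $r_2$ moves into the clause part, or $r_1$ reaches it. If $r_2$ travels left past clause vertices, the clause vertices it passes must be emptied anyway. Using $s^2/4>c$, we rule out that a main robot traverses a long path that is fully occupied: each step forward requires clearing an obnoxious robot, and a chain of length $\Theta(s)$ would have to be pushed a distance $\Theta(s)$. This also excludes shortcuts that use clause-to-variable paths. Hence both main robots stay on the spine, and every clause vertex $v^c$ strictly between them must be emptied, as must $v^c$ wherever a robot ends. The accounting is as follows. The obnoxious robot on $v^c$ can leave only along the spine, which is blocked by other clause robots and by the long subdivided path, or along one of its $\le 3$ incident connecting paths. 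The spine option costs too much, because it forces pushes into the $(s+2)$-path, which is fully occupied? Actually that path is empty (subdivision vertices there carry no robots), so this needs care. I expect instead to use the budget: $c$ is calibrated exactly as $9|X|+7|X|+1+(s+1)|Y|+(|Y|+s+2)$. The main robots must together cover the spine distance, which forces $\ge 7|X|+|Y|+s+3$ moves; the gadget-clearing costs $\ge 9|X|$; and each clause costs $\ge s+1$.

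\textbf{Step 2 (variable gadgets encode an assignment).} For $r_1$ (or $r_2$) to cross $G_x$, the robots on $v_1^x,v_2^x,v_3^x$ must leave the spine. The only escape with cost $\le 9$ is to send all three into the $u$-branch or all three into the $w$-branch. A mixed split would block the route, or would push robots out into the connecting paths, which costs $\Theta(s)$. Tight budget counting then shows exactly one branch is filled; call $x$ false if it is the $u$-branch, true if the $w$-branch.

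\textbf{Step 3 (clauses satisfied).} Emptying $v^c$ via a connecting path to $u_i^x$ (resp. $w_i^x$) requires that endpoint to be free. Otherwise the chain has to push into an occupied gadget vertex and further along the spine or another path, which exceeds the budget. A free $u_i^x$ means $x$ is true and $x$ occurs positively in $c$; a free $w_i^x$ means $x$ is false and $x$ occurs negatively. So every clause contains a true literal.

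The main obstacle is the exact budget accounting in Steps 1--2. This means proving lower bounds that sum to exactly $c$, which leaves no slack for alternative strategies. Examples are moving $r_2$ leftward, parking obnoxious robots in the long subdivided spine path, or only partially shifting chains. I would handle these via an exchange argument on a minimum solution, combined with the slack inequality $s^2/4>c$.
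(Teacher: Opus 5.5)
Your outline follows the paper's proof step for step: exclude the connecting paths via $s^2/4>c$, give separate lower bounds for main-robot travel, gadget clearing and clause clearing, claim they are tight, and read off the assignment. However, the tightness claim that Steps~2 and~3 rest on is false. Since $k=2$, the main robots only have to end on \emph{adjacent} vertices, so they are forced to make one move fewer than the length of the spine route from $r_1$ to $r_2$. That route has length $7|X|+(s+3)+(|Y|-1)+1=7|X|+|Y|+s+3$; note that $r_1$ reaches $v_4^z$ after $1+6|X|+(|X|-1)=7|X|$ moves, not $7|X|+1$. So the forced main-robot cost is $7|X|+|Y|+s+2$, and your three bounds add up to $c-1$, not $c$.

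One spare move is enough to break Step~3. After the main robot has left $v_3^x$ (resp.\ $v_4^x$), a gadget robot on $u_1^x$ or $w_3^x$ (resp.\ on $u_3^x$ or $w_1^x$) can step onto it. This frees the endpoint of the connecting path of a \emph{false} literal. Now suppose some assignment falsifies exactly one clause, and that clause has a literal attached at one of these four vertices. Then the plan of \cref{lem:nphcorr1}, which really costs $c-1$, plus this single move is a solution of length $c$, whether or not $\phi$ is satisfiable. So no exchange argument can close the gap; $c$ itself must change. The paper's proof makes the same miscount (the ``$7\cdot|X|+1$'' in \cref{lem:nphcorr1} and the bound $7\cdot|X|+|Y|+s+3$ in its proof of this lemma). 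Setting $c=16|X|+(s+2)(|Y|+1)$ fixes it: the forward plan still fits, and the bounds become genuinely tight.

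With that correction your argument goes through, and the cases you left open are harmless. A clause robot that steps into the empty subdivided path cannot pass a main robot there, and the final positions of $r_1,r_2$ are adjacent. So it must still eventually get off the stretch between $v_4^z$ and $r_2$. It can do this either by entering a connecting path, which forces that path's whole chain to shift and costs at least $s+1$ moves, or by crossing back over all $s+2$ subdivision vertices, which costs at least $s+3$ moves. This gives the bound of $s+1$ per clause no matter which main robot does the walking.

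Two further remarks. Your orientation ($u$-branch filled means $x$ false) is the correct one for this construction; the paper's proof states the reverse. Also, the standing assumption $s^2/4>c$ you invoke in Step~1 is not automatic for $s=|X|+|Y|$. Since $c>s(|Y|+1)$, it needs $|X|>3|Y|+4$, whereas $|X|\le 3|Y|$ as soon as every variable occurs in some clause. So in effect $s$ has to be chosen larger, as in the grid construction.
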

\begin{proof}
    Assume we are given a movement plan $\mathcal{M}$ that is a solution to $(G,R,O,c)$. We create a satisfying assignment for $\phi$ as follows.

    First, we start with the following observation: If a main robot moves (collision-free) along one of the paths of length $(s+1)$ from a vertex of a variable gadget to a clause vertex, all obnoxious robots need to move out of the way. This requires at least $2\cdot \sum_{i=1}^{\lfloor s+1\rfloor} i >s^2/4>c$ moves. Hence, we can conclude that in the movement plan $\mathcal{M}$, no main robot moves along such a path. It follows that each variable gadget is traversed by one of the main robots, and each clause vertex is traversed by one of the main robots. In particular, we can conclude that the main robots make at least $7\cdot |X|+|Y|+s+3$ moves.

    We can conclude the following: In each variable gadget, the three obnoxious robots need to be moved such that one of the main robots can pass through the gadget. Moving them inside the gadget to $u_1^x,u_2^x,u_3^x$ or $w_1^x,w_2^x,w_3^x$ requires 9 moves. Moving them in any other way that allows a main robot to pass (e.g.\ to an adjacent gadget) costs strictly more than 9 moves. Hence, at least $9\cdot |X|$ moves are required to move obnoxious robots in the variable gadgets.

    Now consider the clause vertices. Moving the obnoxious robot away from the clause vertex requires either moving each vertex on one of the paths of length $(s+1)$ be one step and hence requires $(s+1)$ moves in total, or moving the obnoxious robot into a path connected to a different clause vertex, , which requires at least $(s+2)$ moves, or moving the obnoxious robot into a variable gadget, which also requires at least $(s+2)$ moves. Hence, at least $(s+1)\cdot |Y|$ moves are required to move obnoxious robots away from the clause vertices.

    Adding up the above-derived lower bounds for the required moves yields exactly $c$. We can conclude that in each variable gadget, the three obnoxious robots move to either $u_1^x,u_2^x,u_3^x$ or $w_1^x,w_2^x,w_3^x$. Furthermore, for each clause vertex, all robots from the vertex along one of the paths of length $(s+1)$ to a variable gadget are moved by one step each.

    We construct an assignment for the variables in $X$ as follows. For each $x\in X$, if three obnoxious robots move to $u_1^x,u_2^x,u_3^x$, then we set $x$ to true. Otherwise, we set~$x$ to false. We claim that this is a satisfying assignment for $\phi$.

    Assume for contradiction that this is not a satisfying assignment. Then there is a clause $c$ that is not satisfied. Consider the corresponding clause vertex $v^c$. If none of $c$'s literals are satisfied, then by construction the vertices of the variable gadgets that are connected to $v^c$ by paths of length $(s+1)$ are occupied by obnoxious robots. It follows that at least $(s+2)$ moves are necessary to move the obnoxious robot from $v^c$. Then, however, by the arguments given above, the total number of moves is strictly larger than $c$, a contradiction.
\end{proof}

From \cref{obs:planar,lem:nphcorr1,lem:nphcorr2} follows that \MoveToConnectedShort is NP-hard even if $k=2$ and the input graph is planar. In the following, we show how to modify \cref{constr:planarnph} to produce a graph that is a grid.

First, we argue that we can find an alternative embedding for $G$ by ``unfolding'' it, such that $r_1$ and $r_2$ lie on the outer face, see \cref{fig:nphardness2} for an ``unfolded'' version of \cref{fig:nphardness}. More formally, we can ``break'' the Hamiltonian cycle in $H$ between the first variable vertex and the last clause vertex, creating a Hamiltonian path that first visits all variable vertices and then all clause vertices. We can draw this path along a horizontal line. Then all edges that were outside the cycle are now above the path, and all edges inside the cycle are now below the path. We take this embedding as a starting point to construct a modified graph $G'$ that is a grid (together with an embedding that shows this). Note that there are three main obstacles to overcome:
\begin{itemize}
    \item The clause vertices in $G$ may have degree 5, which is not possible in a grid. Hence, we have to modify the vertices to ``clause gadgets'' that are grids.
    \item All paths connecting clause gadgets to variable gadgets need to have the same length in order for the reduction to work.
    \item The path between the last variable gadget and the first clause gadget has to be sufficiently long (that is, longer than any path from a clause gadget to a variable gadget) for the reduction to work.
\end{itemize}
Formally, we construct $G'$ as follows.

\begin{figure}[t]
\begin{center}
\begin{tikzpicture}[line width=1pt,scale=.65,xscale=.75]

\draw[rounded corners=18pt,fill=lightgray!25!white] (.5, 4) rectangle (7.5, -1.5);
\node (E) at (4,3.5) {\small Variable gadget};

\draw[rounded corners=18pt,fill=lightgray!25!white] (9.5, 4) rectangle (16.5, -1.5);
\node (E) at (13,3.5) {\small Variable gadget};

\draw[rounded corners=18pt,fill=lightgray!25!white] (18.2, 1) rectangle (23.8, -1.5);

\node[vert2,fill=blue] (R1) at (-1,0) {};
\node[vert,fill=black] (A) at (1,0) {};
\node[vert,fill=black] (B) at (2,0) {};
\node[vert,fill=black] (C) at (3,0) {};
\node[vert] (XT1) at (4,1) {};
\node[vert] (XT2) at (5,1) {};
\node[vert] (XT3) at (6,1) {};
\node[vert] (XF1) at (4,-1) {};
\node[vert] (XF2) at (5,-1) {};
\node[vert] (XF3) at (6,-1) {};
\node[vert] (N1) at (7,0) {};

\node[vert,fill=black] (A2) at (10,0) {};
\node[vert,fill=black] (B2) at (11,0) {};
\node[vert,fill=black] (C2) at (12,0) {};
\node[vert] (YT1) at (13,1) {};
\node[vert] (YT2) at (14,1) {};
\node[vert] (YT3) at (15,1) {};
\node[vert] (YF1) at (13,-1) {};
\node[vert] (YF2) at (14,-1) {};
\node[vert] (YF3) at (15,-1) {};
\node[vert] (N2) at (16,0) {};

\node[vert,fill=black] (N3) at (19,0) {};

\node[vert,fill=black] (CL1) at (21,0) {};
\node[vert,fill=black] (CL2) at (23,0) {};

\node[vert2,fill=blue] (R2) at (26,0) {};

\draw (R1) -- (A);

\draw (A) -- (B);
\draw (B) -- (C);
\draw (C) -- (XT1);
\draw (XT1) -- (XT2);
\draw (XT2) -- (XT3);
\draw (XT3) -- (N1);
\draw (C) -- (XF1);
\draw (XF1) -- (XF2);
\draw (XF2) -- (XF3);
\draw (XF3) -- (N1);

\draw[dotted] (N1) -- (A2);

\draw (A2) -- (B2);
\draw (B2) -- (C2);
\draw (C2) -- (YT1);
\draw (YT1) -- (YT2);
\draw (YT2) -- (YT3);
\draw (YT3) -- (N2);
\draw (C2) -- (YF1);
\draw (YF1) -- (YF2);
\draw (YF2) -- (YF3);
\draw (YF3) -- (N2);

\draw[dotted] (N2) -- (N3);

\draw (N3) -- (CL1);
\draw (CL1) -- (CL2);
\draw[dotted] (CL2) -- (R2);

\draw[color=red, line width=1.5pt, rounded corners=10pt] (N3) -- (19,-2) -- (15,-2) -- (YF3);
\draw[color=red, line width=1.5pt, rounded corners=10pt] (CL1) -- (20,-2.5) -- (14,-2.5) -- (YF2);
\draw[color=red, line width=1.5pt, rounded corners=10pt] (CL1) -- (21,-3) -- (6,-3) -- (XF3);
\draw[color=red, line width=1.5pt, rounded corners=10pt] (CL2) -- (23,-3.5) -- (5,-3.5) -- (XF2);

\draw[color=red, line width=1.5pt, rounded corners=10pt] (N3) -- (18,2) -- (15,2) -- (YT3);

\draw[color=red, line width=1.5pt, rounded corners=10pt] (N3) -- (19,2.5) -- (6,2.5) -- (XT3);

\draw[color=red, line width=1.5pt, rounded corners=10pt] (CL2) -- (23,3) -- (4,3) -- (XT1);

\node[fill=lightgray!25!white,inner sep=1.1] (E) at (21,-1) {\small Clause gadgets};

\end{tikzpicture}
    \end{center}
    \caption{Illustration of a ``unfolded'' embedding of the graph $G$ produced by  \cref{constr:planarnph}. The blue filled round vertices have obnoxious robots on them, and the black filled square vertices have main robots on them. The red edges correspond to paths of length $s$ that have obnoxious robots on all inner vertices.}\label{fig:nphardness2}
\end{figure}

\begin{construction}\label{constr:gridnph}
Given an instance $(\phi, H)$ of \textsc{Linked Planar 3-SAT}, we construct an instance $(G'=(V',E'), R', O', c')$ of \MoveToConnectedShort with $k=2$ as follows. 
Let the variables $X$ and clauses $Y$ in $\phi$ be ordered in the way they appear in the Hamiltonian cycle in $H$.

For each variable $x\in X$, we create a variable gadget $G'_x=(V'_x, E'_x)$ with 
\begin{itemize}
    \item $V'_x=\{v_1^x,v_2^x,v_3^x,v_4^x,v_5^x,v_6^x,v_7^x,u_1^x,u_2^x,u_3^x,u_4^x,u_5^x,w_1^x,w_2^x,w_3^x,w_4^x,w_5^x\}$, and
    \item $E'_x=\{\{v_1^x,v_2^x\},\{v_2^x,v_3^x\},\{v_3^x,v_4^x\},\{v_4^x,v_5^x\},\{v_5^x,u_1^x\},\{u_1^x,u_2^x\},\{u_2^x,u_3^x\},\{u_3^x,u_4^x\},$
    
    $\{u_4^x,u_5^x\},\{u_5^x,v_6^x\},\{v_5^x,w_5^x\},\{w_5^x,w_4^x\},\{w_4^x,w_3^x\},\{w_3^x,w_2^x\},\{w_2^x,w_1^x\},$ 
    
    $\{w_1^x,v_6^x\},\{v_6^x,v_7^x\}\}$. 
\end{itemize}
Furthermore, we place obnoxious robots on the vertices $v_1^x,v_2^x,v_3^x,v_4^x,v_5^x$. For an illustration of the variable gadget see \cref{fig:vargadget2}.

\begin{figure}[t]
\begin{center}
\begin{tikzpicture}[line width=1pt,scale=1,xscale=1]

\draw[color=lightgray,dotted] (0.2,0) -- (10.8,0);
\draw[color=lightgray,dotted] (0.2,1) -- (10.8,1);
\draw[color=lightgray,dotted] (0.2,-1) -- (10.8,-1);

\draw[color=lightgray,dotted] (1,1.8) -- (1,-1.8);
\draw[color=lightgray,dotted] (2,1.8) -- (2,-1.8);
\draw[color=lightgray,dotted] (3,1.8) -- (3,-1.8);
\draw[color=lightgray,dotted] (4,1.8) -- (4,-1.8);
\draw[color=lightgray,dotted] (5,1.8) -- (5,-1.8);
\draw[color=lightgray,dotted] (6,1.8) -- (6,-1.8);
\draw[color=lightgray,dotted] (7,1.8) -- (7,-1.8);
\draw[color=lightgray,dotted] (8,1.8) -- (8,-1.8);
\draw[color=lightgray,dotted] (9,1.8) -- (9,-1.8);
\draw[color=lightgray,dotted] (10,1.8) -- (10,-1.8);

\draw[line width=2pt, red, dashed] (5,1) -- (5,1.5);
\draw[line width=2pt, red, dashed] (7,1) -- (7,1.5);
\draw[line width=2pt, red, dashed] (9,1) -- (9,1.5);
\draw[line width=2pt, red, dashed] (5,-1) -- (5,-1.5);
\draw[line width=2pt, red, dashed] (7,-1) -- (7,-1.5);
\draw[line width=2pt, red, dashed] (9,-1) -- (9,-1.5);

\node[vert,fill=black,label=below:$v_1^x$] (A) at (1,0) {};
\node[vert,fill=black,label=below:$v_2^x$] (B) at (2,0) {};
\node[vert,fill=black,label=below:$v_3^x$] (C) at (3,0) {};
\node[vert,fill=black,label=below:$v_4^x$] (D) at (4,0) {};
\node[vert,fill=black,label=right:$v_5^x$] (E) at (5,0) {};
\node[vert,label=above left:$u_1^x$] (XT1) at (5,1) {};
\node[vert,label=above:$u_2^x$] (XT2) at (6,1) {};
\node[vert,label=below:$u_3^x$] (XT3) at (7,1) {};
\node[vert,label=above:$u_4^x$] (XT4) at (8,1) {};
\node[vert,label=above right:$u_5^x$] (XT5) at (9,1) {};
\node[vert,label=below left:$w_5^x$] (XF1) at (5,-1) {};
\node[vert,label=below:$w_4^x$] (XF2) at (6,-1) {};
\node[vert,label=above:$w_3^x$] (XF3) at (7,-1) {};
\node[vert,label=below:$w_2^x$] (XF4) at (8,-1) {};
\node[vert,label=below right:$w_1^x$] (XF5) at (9,-1) {};
\node[vert,label=left:$v_6^x$] (N1) at (9,0) {};
\node[vert,label=below:$v_7^x$] (N2) at (10,0) {};

\draw (A) -- (B);
\draw (B) -- (C);
\draw (C) -- (D);
\draw (D) -- (E);
\draw (E) -- (XT1);
\draw (XT1) -- (XT2);
\draw (XT2) -- (XT3);
\draw (XT3) -- (XT4);
\draw (XT4) -- (XT5);
\draw (XT5) -- (N1);
\draw (E) -- (XF1);
\draw (XF1) -- (XF2);
\draw (XF2) -- (XF3);
\draw (XF3) -- (XF4);
\draw (XF4) -- (XF5);
\draw (XF5) -- (N1);
\draw (N1) -- (N2);

\end{tikzpicture}
    \end{center}
    \caption{Illustration of a variable gadget $G'_x$ for a variable $x$ (\cref{constr:gridnph}). The black filled vertices have obnoxious robots on them. Red dashed lines indicate where the paths between variable gadgets and clause gadgets are connected.}\label{fig:vargadget2}
\end{figure}

For each variable $c\in Y$, we create a variable gadget $G'_c=(V'_c, E'_c)$ with 
\begin{itemize}
    \item $V'_c=\{v_1^c,v_2^c,v_3^c,v_4^c,v_5^c,v_6^c,u_1^c,u_2^c,u_3^c,u_4^c,u_5^c,w_1^c,w_2^c,w_3^c,w_4^c,w_5^c\}$, and
    \item $E'_c=\{\{v_1^c,v_2^c\},\{v_2^c,v_3^c\},\{v_3^c,v_4^c\},\{v_4^c,v_5^c\},\{v_3^c,u_3^c\},\{u_1^c,u_2^c\},\{u_2^c,u_3^c\},\{u_3^c,u_4^c\},$
    
    $\{u_4^c,u_5^c\},\{v_3^c,w_3^c\},\{w_5^c,w_4^c\},\{w_4^c,w_3^c\},\{w_3^c,w_2^c\},\{w_2^c,w_1^c\}\}$. 
\end{itemize}
Furthermore, we place obnoxious robots on the vertices $v_3^c,u_1^c,u_2^c,u_3^c,u_4^c,u_5^c,w_1^c,w_2^c,w_3^c,w_4^c,w_5^c$. For an illustration of the clause gadget see \cref{fig:clgadget2}.

\begin{figure}[t]
\begin{center}
\begin{tikzpicture}[line width=1pt,scale=1,xscale=1]

\draw[color=lightgray,dotted] (-1.8,0) -- (4.8,0);
\draw[color=lightgray,dotted] (-1.8,1) -- (4.8,1);
\draw[color=lightgray,dotted] (-1.8,-1) -- (4.8,-1);

\draw[color=lightgray,dotted] (4,1.8) -- (4,-1.8);
\draw[color=lightgray,dotted] (-1,1.8) -- (-1,-1.8);
\draw[color=lightgray,dotted] (0,1.8) -- (0,-1.8);
\draw[color=lightgray,dotted] (1,1.8) -- (1,-1.8);
\draw[color=lightgray,dotted] (2,1.8) -- (2,-1.8);
\draw[color=lightgray,dotted] (3,1.8) -- (3,-1.8);

\draw[line width=2pt, red, dashed] (1,1) -- (1,1.5);
\draw[line width=2pt, red, dashed] (3,1) -- (3,1.5);
\draw[line width=2pt, red, dashed] (-1,1) -- (-1,1.5);
\draw[line width=2pt, red, dashed] (1,-1) -- (1,-1.5);
\draw[line width=2pt, red, dashed] (3,-1) -- (3,-1.5);
\draw[line width=2pt, red, dashed] (-1,-1) -- (-1,-1.5);

\node[vert,label=below:$v_1^c$] (A) at (-1,0) {};
\node[vert,label=below:$v_2^c$] (B) at (0,0) {};
\node[vert,fill=black,label=below left:$v_3^c$] (C) at (1,0) {};
\node[vert,label=below:$v_4^c$] (D) at (2,0) {};
\node[vert,label=below:$v_5^c$] (E) at (3,0) {};
\node[vert,label=below:$v_6^c$] (F) at (4,0) {};
\node[vert,fill=black,label=left:$u_5^c$] (XT1) at (-1,1) {};
\node[vert,fill=black,label=above:$u_4^c$] (XT2) at (0,1) {};
\node[vert,fill=black,label=above left:$u_3^c$] (XT3) at (1,1) {};
\node[vert,fill=black,label=above:$u_2^c$] (XT4) at (2,1) {};
\node[vert,fill=black,label=right:$u_1^c$] (XT5) at (3,1) {};
\node[vert,fill=black,label=left:$w_1^c$] (XF1) at (-1,-1) {};
\node[vert,fill=black,label=below:$w_2^c$] (XF2) at (0,-1) {};
\node[vert,fill=black,label=below left:$w_3^c$] (XF3) at (1,-1) {};
\node[vert,fill=black,label=below:$w_4^c$] (XF4) at (2,-1) {};
\node[vert,fill=black,label=right:$w_5^c$] (XF5) at (3,-1) {};

\draw (A) -- (B);
\draw (B) -- (C);
\draw (C) -- (D);
\draw (D) -- (E);
\draw (E) -- (F);
\draw (C) -- (XT3);
\draw (XT1) -- (XT2);
\draw (XT2) -- (XT3);
\draw (XT3) -- (XT4);
\draw (XT4) -- (XT5);
\draw (C) -- (XF3);
\draw (XF1) -- (XF2);
\draw (XF2) -- (XF3);
\draw (XF3) -- (XF4);
\draw (XF4) -- (XF5);

\end{tikzpicture}
    \end{center}
    \caption{Illustration of a clause gadget $G'_c$ for a clause $c$ (\cref{constr:gridnph}). The black filled vertices have obnoxious robots on them. Red dashed lines indicate where the paths between variable gadgets and clause gadgets are connected.}\label{fig:clgadget2}
\end{figure}

We construct the graph $G'$ of the \MoveToConnectedShort instance by first taking the graph $H$ and replacing each vertex corresponding to a variable, say $x$, with variable gadget $G'_x$ in the following way. If two vertices corresponding to variables $x,y$ are adjacent in $H$, that is, they appear consecutive in the ordering (first $x$ and then $y$), then we add edge $\{v_7^x,v_1^y\}$ to $G'$. 
Let $x$ be the last variable in the ordering and let $c$ be the first clause in the ordering, then we add edge $\{v_7^x,v_1^c\}$ to $G'$ and subdivide it $s+4$ times, where $s=90\cdot(|X|+|Y|)-4$.
For each clause $c\in C$ we replace the corresponding vertex with clause gadget $G'_c$ in the following way.
If two clauses $c,c'$ appear consecutive in the ordering (first $c$ and then $c'$), then we add edge $\{v_6^c,v_1^{c'}\}$ to $G$. 
 Let variable~$x$ appear in clauses $c_1,c_2,c_3$. 
 Then the vertex $x$ in $H$ is connected to vertices $v^{c_1},v^{c_2},v^{c_3}$ in~$H$. Let w.l.o.g.\ the order $c_1,c_2,c_3$ be the linearization of the clockwise cyclic ordering in which the edges incident with $x$ to the vertices $c_1,c_2,c_3$ are embedded around $x$ in the planar embedding of $H$, such that all clauses in which $x$ appears non-negated are ordered before the clauses in which $x$ appears negated. 
 For each clause $c$, let w.l.o.g.\ the order $\ell_c^1,\ell_c^2,\ell_c^3$ the linearization of the anti-clockwise cyclic ordering in which edges incident with $c$ to the variables corresponding to literals $\ell_c^1,\ell_c^2,\ell_c^3$ are embedded around $c$.
 For $i\in\{1,2,3\}$, if~$x$ appears in $c_i$ non-negated as the $j$th literal according to the above ordering, then we connect $u_i^x$ with $u^{c_i}_j$ with an edge. If $j\neq 2$ and subdivide the edge~$s$ times and otherwise $s+2$ times. Furthermore, we place an obnoxious robot on each vertex created by a subdivision. If $x$ appears in~$c_i$ negated as the $j$th literal according to the above ordering, then we connect $w_i^x$ with $w^{c_i}_j$ with an edge. If $j\neq 2$ and subdivide the edge~$s$ times and otherwise $s+2$ times. Furthermore, we place an obnoxious robot on each vertex created by a subdivision. 
Finally, we add two vertices $r_1,r_2$ to $G$ and place a main robot on each of them. Let $x$ be the first variable in the ordering, then we add edge $\{r_1,v_1^x\}$ to $G$. Let $c$ be the last clause in the ordering, then we add edge $\{r_2,v_6^c\}$ to $G$.
For an illustration, see \cref{fig:nphardnessgrid}. Finally, we set $c=37\cdot |X|+(s+8)\cdot |Y|+s+5$. We will assume in the proofs of correctness that $s^2/4>c$. Note that this is clearly the case if $|X|$ and $|Y|$ are sufficiently large.
\end{construction}

\begin{figure}[t]
\begin{center}
\begin{tikzpicture}[line width=1pt,scale=.29,xscale=1]

\draw[blue!15!white,fill=blue!15!white] (20.5, 12.5) rectangle (25.5, -15.5);

\draw[rounded corners=6pt,lightgray!25!white,fill=lightgray!25!white] (-.5, 1.5) rectangle (8.6, -1.5);

\draw[rounded corners=6pt,lightgray!25!white,fill=lightgray!25!white] (10.5, 1.5) rectangle (19.6, -1.5);

\draw[rounded corners=6pt,lightgray!25!white,fill=lightgray!25!white] (26.6, 1.5) rectangle (31.2, -1.5);

\draw[rounded corners=6pt,lightgray!25!white,fill=lightgray!25!white] (31.6, 1.5) rectangle (36.2, -1.5);

\draw[rounded corners=6pt,lightgray!25!white,fill=lightgray!25!white] (36.6, 1.5) rectangle (41.2, -1.5);

\foreach \i in {-15,...,12}
{
    \draw[line width=.5pt,color=lightgray,dotted] (-2.8,\i) -- (43.8,\i);
}

\foreach \i in {-2,...,43}
{
    \draw[line width=.5pt,color=lightgray,dotted] (\i,-15.8) -- (\i,12.8);
}

\draw[line width=2pt,red] (27,1) -- (27,5) -- (25,5) -- (25,6) -- (24,6) -- (24,4) -- (23,4) -- (23,6) -- (22,6) -- (22,4) -- (21,4) -- (21,5) -- (19,5) -- (19,1);

\draw[line width=2pt,red] (29,1) -- (29,8) -- (25,8) -- (25,9) -- (24,9) -- (24,7) -- (23,7) -- (23,9) -- (22,9) -- (22,7) -- (21,7) -- (21,8) -- (8,8) -- (8,1);

\draw[line width=2pt,red] (37,1) -- (37,11) -- (25,11) -- (25,12) -- (24,12) -- (24,10) -- (23,10) -- (23,12) -- (22,12) -- (22,10) -- (21,10) -- (21,11) -- (4,11) -- (4,1);

\draw[line width=2pt,red] (27,-1) -- (27,-5) -- (25,-5) -- (25,-4) -- (24,-4) -- (24,-6) -- (23,-6) -- (23,-4) -- (22,-4) -- (22,-6) -- (21,-6) -- (21,-5) -- (19,-5) -- (19,-1);

\draw[line width=2pt,red] (32,-1) -- (32,-8) -- (25,-8) -- (25,-7) -- (24,-7) -- (24,-9) -- (23,-9) -- (23,-7) -- (22,-7) -- (22,-9) -- (21,-9) -- (21,-8) -- (17,-8) -- (17,-1);

\draw[line width=2pt,red] (34,-1) -- (34,-11) -- (25,-11) -- (25,-10) -- (24,-10) -- (24,-12) -- (23,-12) -- (23,-10) -- (22,-10) -- (22,-12) -- (21,-12) -- (21,-11) -- (8,-11) -- (8,-1);

\draw[line width=2pt,red] (37,-1) -- (37,-14) -- (25,-14) -- (25,-13) -- (24,-13) -- (24,-15) -- (23,-15) -- (23,-13) -- (22,-13) -- (22,-15) -- (21,-15) -- (21,-14) -- (4,-14) -- (4,-1);

\draw[dashed] (25.5,12.5) -- (25.5,-15.5) -- (20.5,-15.5) -- (20.5,12.5) -- (25.5,12.5);

\draw (-2,0) -- (0,0);
\draw[line width=4pt] (0,0) -- (4,0);
\draw (8,0) -- (8,-1) -- (4,-1) -- (4,1) -- (8,1) -- (8,0) -- (9,0);

\draw[dotted] (9,0) -- (11,0);
\draw[line width=4pt] (11,0) -- (15,0);
\draw (19,0) -- (19,-1) -- (15,-1) -- (15,1) -- (19,1) -- (19,0) -- (20,0);

\draw[dotted] (20,0) -- (21,0) -- (21,-3) -- (22,-3) -- (22,3) -- (23,3) -- (23,-3) -- (24,-3) -- (24,3) -- (25,3) -- (25,0) -- (26,0);

\draw[line width=4pt] (27,1) -- (31,1);
\draw[line width=4pt] (27,-1) -- (31,-1);
\draw[line width=4pt] (29,1) -- (29,-1);
\draw (26,0) -- (31,0);

\draw[line width=4pt] (32,1) -- (36,1);
\draw[line width=4pt] (32,-1) -- (36,-1);
\draw[line width=4pt] (34,1) -- (34,-1);
\draw (31,0) -- (36,0);

\draw[line width=4pt] (37,1) -- (41,1);
\draw[line width=4pt] (37,-1) -- (41,-1);
\draw[line width=4pt] (39,1) -- (39,-1);
\draw (36,0) -- (41,0);

\draw[dotted] (41,0) -- (43,0);

\node[vert2,fill=blue] (R1) at (-2,0) {};
\node[vert2,fill=blue] (R2) at (43,0) {};

\end{tikzpicture}
    \end{center}
    \caption{Illustration of an embedding described in the proof of \cref{lem:grid} of the graph $G'$ produced by  \cref{constr:gridnph}. The drawing is not up to scale but rather should give the rough idea on how the embedding looks like. The grid points on the thick black lines have obnoxious robots on them, and the blue filled square vertices have main robots on them. The red edges correspond to long paths that have obnoxious robots on all inner vertices. The light blue area surrounded by the black dashed line indicates the part of the grid, where we make sure that all red paths have the same length, and the path connecting the last variable gadget with the first clause gadget is sufficiently long. This is also indicated by the snake-like shape of the paths in the area.}\label{fig:nphardnessgrid}
\end{figure}

This finishes the construction, which can clearly be performed in polynomial time. In the following, we argue that $G'$ is indeed a grid.
\begin{lemma}\label{lem:grid}
The graph $G'$ obtained from \cref{constr:gridnph} is a grid.    
\end{lemma}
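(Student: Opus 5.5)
We prove the statement by giving an explicit embedding $f\colon V'\to[t]\times[t']$ in which every edge of $G'$ joins two grid points at horizontal or vertical distance exactly one. Note that by the paper's definition only edges need to be unit grid edges; non-adjacent vertices may also sit at neighbouring grid points. So it suffices that $f$ is injective and maps every edge to a unit grid segment.

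We start from the ``unfolded'' planar embedding of $G$ (\cref{fig:nphardness2}). There, the Hamiltonian path runs along a horizontal line (the $x$-axis, row $0$): first all variable gadgets, then all clause gadgets. Edges for non-negated occurrences lie above the line and edges for negated occurrences lie below it.

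\textbf{Step 1: place the gadgets on row $0$.} Each variable gadget $G'_x$ is placed exactly as in \cref{fig:vargadget2}: $v_1^x,\dots,v_5^x$ and $v_6^x,v_7^x$ on row $0$, the $u$-vertices on row $1$ and the $w$-vertices on row $-1$. Each clause gadget $G'_c$ is placed as in \cref{fig:clgadget2}, with its $v$-vertices on row $0$ and its $u$- and $w$-vertices on rows $1$ and $-1$. Consecutive gadgets are joined by a unit horizontal edge $\{v_7^x,v_1^y\}$ resp.\ $\{v_6^c,v_1^{c'}\}$. The vertices $r_1$ and $r_2$ sit at the two ends of row $0$. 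The attachment points are $u_1^x,u_3^x,u_5^x$ and $w_1^x,w_3^x,w_5^x$ in variable gadgets, and $u_1^c,u_3^c,u_5^c$ and $w_1^c,w_3^c,w_5^c$ in clause gadgets. Consecutive attachment points are two columns apart, so each free upward/downward column at an attachment point can start a vertical path without touching the others.

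\textbf{Step 2: route the connecting paths as nested staircases.} The upper edges of the unfolded embedding are planar and all attach to one side of a line, so they form a laminar (properly nested) family of intervals on the $x$-axis. We process them by nesting depth and give the edge of depth $d$ a private row $h_d$ above the gadgets (rows $3,6,9,\dots$). Its path goes vertically up from the variable attachment point, horizontally along row $h_d$, and vertically down to the clause attachment point. Nestedness guarantees these rectilinear paths do not cross. The lower side is handled symmetrically with rows $-3,-6,\dots$. The clockwise/anticlockwise orderings fixed in \cref{constr:gridnph} are exactly the ones making this assignment consistent. (We would check on the figure which clause literal is the $j$th, and whether the $j=2$ case with $s+2$ subdivisions is the middle one.)

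\textbf{Step 3: equalize lengths in a padding zone.} Between the last variable gadget and the first clause gadget we reserve a vertical strip of fixed width $w$ (the blue zone in \cref{fig:nphardnessgrid}). Every connecting path crosses this strip on its own row band of height $3$. Inside its band, a path can absorb any even excess length by meandering, each up–down wiggle adding $2$ vertices. The natural length of any rectilinear route is $O(|X|+|Y|)$, with a constant we would bound by about $10$ per gadget plus nesting height. Its parity is fixed by the Manhattan distance between its endpoints. So we choose $w$ so that each route has natural length at most $s+1$ (resp.\ $s+3$), pick column offsets of the attachment points so parities match, and add wiggles until the path has exactly $s$ (resp.\ $s+2$) inner vertices. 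The path from $v_7^x$ to $v_1^c$ on row $0$ snakes vertically through the same strip (the dotted zigzag) to reach exactly $s+4$ subdivision vertices. Its meander has to use columns between the row bands, so the bands must leave a gap at row $0$.

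\textbf{Main obstacle.} The main obstacle is the length and parity bookkeeping in Step 3. We need to show that with $s=90(|X|+|Y|)-4$ the natural length of every route is at most $s$, and that parity can always be matched. Parity in a grid is determined by endpoint colouring, which is why $s+2$ rather than $s+1$ appears for $j=2$: the middle attachment point $u_3^c$ has a different column parity than $u_1^c,u_5^c$. We would first tabulate the column of each attachment point modulo $2$, then show that each route's length has the required parity. If not, we would shift a single horizontal segment by one column inside the zone. Finally we verify injectivity: distinct paths occupy disjoint row bands or disjoint vertical columns, and gadgets occupy only rows $-1,0,1$.
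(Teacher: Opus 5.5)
Your plan is the paper's proof in outline. Gadgets sit on row~$0$. Each connector is routed up, across and down at a height that grows with its clause-side position; the paper's height $10(3i+j-3)$ is your nesting order, because every upper interval contains the gap between the two blocks and so the intervals form a chain. Lengths are equalized by snake-like detours in a strip between the last variable gadget and the first clause gadget, and the long connector snakes through the same strip.

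\textbf{The gap: the parity step.} It cannot be repaired inside the embedding. As you note yourself, the parity of a route is fixed by its endpoints, so shifting a horizontal segment changes nothing. The column offsets of the attachment points are not free either: any grid embedding induces a proper $2$-colouring of $G'$ via $(x+y)\bmod 2$. So the real question is whether $G'$ is bipartite, and with the attachment points you use (those of the figures) it is not.
\begin{enumerate}
\item Colour $v_1$ of the first variable gadget $0$. Then every $v_1^y$ gets $0$. Each of $v_7^y,u_1^y,u_3^y,u_5^y,w_1^y,w_3^y,w_5^y$ is at odd distance from $v_1^y$, so they all get $1$.
\item The subdivided edge $\{v_7^x,v_1^c\}$ is a path with $s+5$ edges. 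Consecutive clause gadgets are $6$ edges apart, so every $v_1^c$ gets colour $s\bmod 2$.
\item Each of $u_1^c,u_3^c,u_5^c,w_1^c,w_3^c,w_5^c$ is at odd distance from $v_1^c$, so they get colour $(s+1)\bmod 2$.
\item A literal connector therefore joins colour $1$ to colour $s+1$ and needs length $\equiv s\pmod 2$. It actually has $s+1$ edges, or $s+3$ for $j=2$.
\end{enumerate}
Hence $G'$ contains an odd cycle for every value of $s$. As literally specified, the statement is false, and no choice of $w$, offsets or wiggles can rescue it.

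\textbf{The fix.} It belongs in \cref{constr:gridnph}: for example, subdivide $\{v_7^x,v_1^c\}$ $s+5$ times instead of $s+4$ times and adjust $c'$ accordingly. After that change your routing argument goes through. The paper's own proof has the same slip: it treats $s$, $s+2$, $s+4$ as path lengths and asserts that all parities agree, which is inconsistent with the subdivision counts.

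\textbf{Two smaller corrections.} First, your explanation of the $s+2$ subdivisions for $j=2$ is wrong. $u_3^c$ has the same column parity as $u_1^c$ and $u_5^c$. The two extra vertices compensate for the two fewer robot-occupied gadget vertices between $u_3^c$ and $v_3^c$, so that clearing any clause costs the same number of moves, and they do not change parity. Second, a band of height $3$ gains only about two edges per column. Besides your upper bound on $w$ (natural length at most the target), you therefore need a lower bound of roughly $w\ge (s+1)/3$ so the detours can absorb the excess, and the long connector must still fit between the innermost bands. These constraints are compatible, but you have to verify that.
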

\begin{proof}
For an illustration see \cref{fig:nphardnessgrid}. 
The main idea of the embedding is as follows. It has a very similar structure as the embedding of $G$ (by \cref{constr:planarnph}) illustrated in \cref{fig:nphardness2}. Informally speaking, we arrange the variable gadgets in a line (more specifically, according to \cref{fig:vargadget2} and the next leftmost vertex of the next gadget connects via a horizontal edge to the rightmost vertex of the current gadget). Afterwards, we leave a horizontal ``gap'' of size $10\cdot (|X|+|Y|)$ and then we arrange the clause gadgets (\cref{fig:clgadget2}) in a line. This gap corresponds to the horizontal size of the light blue area in \cref{fig:nphardnessgrid}, where the paths make the snake-like shaped. Informally speaking, this gap is large enough (and the paths are sufficiently far apart from each other), that we can realize paths of length $s$, $s+2$, or $s+4$ no matter from which vertices exactly they start and finish, and no matter how far up or dowm from the middle we embed those paths.

Formally, we start replacing the variable gadget of $G'$ in the way shown in \cref{fig:vargadget2} on the grid, such that the vertex $v_1^x$ for the first variable $x$ according to the ordering is placed on grid point $(0,0)$. Then we place the second variable gadget next to it, formally also in the way depicted in \cref{fig:vargadget2} such that $v_1^{x'}$ is (where $x'$ is the second variable) is placed on grid point $(10,0)$, and so on.
After the last variable gadget, we keep $10\cdot (|X|+|Y|)$ free grid points (horizontally). This corresponds to the horizontal size of the light blue area on \cref{fig:nphardnessgrid}. Then we embed the clause gadgets according to \cref{fig:clgadget2}. Formally, the first clause gadget is positions such that $v_1^c$ (where $c$ is the first clause) is placed on grid point $(10\cdot |X|+10\cdot (|X|+|Y|),0)$, and so on.
Note that by \cref{constr:gridnph}, the last variable gadget and the first clause gadget are connected by a path of length $s+4=90\cdot(|X|+|Y|)$. This we can do by arranging the path snake-like, going 4 grid-cells up and 4 grid-cells down, as depicted in \cref{fig:nphardnessgrid}. 
We embed each path connecting a clause gadget with a variable gadget as follows. Assume that the path connects the $j$th literal (for $j\in [3]$) of the $i$th clause with a variable gadget. Assume that the literal is non-negated. Then we embed the path upward using $10 \le 10\cdot (3i+j-3)\le 30\cdot |Y|$ grid points.
Note that this ensures that any two paths are at least 10 horizontal grid points apart from each other.
The vertex of the variable gadget to which the path connects is at most $6\cdot |Y|+10\cdot |X|+10\cdot (|X|+|Y|)$ and at least $10(|X|+|Y|)$ grid points to the left. Note that $66\cdot |Y|+10\cdot |X|+10\cdot (|X|+|Y|)\le s$, hence embedding a path like this never creates a path that is already too long. If the path is too short when embedded like this, we can add snake-like shapes going 4 grid-cells up and 4 grid-cells down (note that does not interfere with any other path, since they are at least 10 grid cells above or below) until the path has the correct length. Here it is important to see that the connecting points of variable gadgets and clause gadgets have an even horizontal distance and the ``initial'' embedding has even length. By adding snake-like shaped, we can make the path longer by an arbitrary even number that is smaller than $s+4$. Hence, we can ensure all paths can be embedded in the described way by adding an appropriate amount of snake-like shapes.
If the literal is negated. Then we embed the path downward using $10 \le 10\cdot (3i+j-3)\le 30\cdot |Y|$ grid cells. The rest of the embedding is analogous.
\end{proof}

Now we have all ingredients to prove \cref{thm:nphardness}.

\begin{proof}[Proof of \cref{thm:nphardness}]
    By \cref{lem:grid} we have that the graph produced by \cref{constr:gridnph} is a grid. Clearly, there are only two main robots. The correctnenss can be shown analogously to \cref{lem:nphcorr1,lem:nphcorr2} by adjusting the movement steps needed in the gadgets and to traverse the gadgets. We omit the details.
\end{proof}


\section{Collision Free Movement on Grids}\label{sec:grids}

In this section, we present parameterized algorithms for \MoveToPi parameterized by $k+c$ on grids. Specifically, we give kernelization algorithms that produce problems kernels of size $O(k\cdot c^2)$ for \MoveToPi and $O(\min\{k\cdot c^2,k^2+c^2\})$ for \MoveToConnectedShort. This means, if $c\in \omega(\sqrt{k})$, we have a strictly better kernel size for \MoveToConnectedShort. These kernelization algorithms can then e.g.\ be combined with \cref{thm:xp} to obtain an algorithm with a running time in $(k\cdot c)^{O(c)}+n^{O(1)}$.


\subparagraph{Kernelization.} To obtain the kernelization results, we give a number of straightforward reduction rules.  A \emph{reduction rule} is a description of a procedure that takes a problem instance as input and outputs an instance of the same problem. We say that a reduction rule is \emph{safe}, if it holds that the input instance is a yes-instance if and only if the output instance is a yes-instance. First, we prove \cref{thm:kernel} (restated below).

\kernel*

To prove \cref{thm:kernel}, we present the following simple reduction rule. Essentially, it allows us to remove vertices that are too far away from all main robots. The intuition here is that if a vertex $v$ has distance more than $c$ to any main robot starting location, then it cannot be visited by any main robot, and it also cannot be visited by any obnoxious robot that \emph{needs} to move in order to avoid collision.
\begin{reductionrule}\label{rr:1}
    Let $(G=(V,E),R,O,c)$ be an instance of \MoveToPi. If there is a $v\in V$ such that for all $r\in R$ it holds that $\dist_G(v,r)>c$, then remove $v$ from~$G$ (and from $O$, if it is contained in $O$).
\end{reductionrule}

\cref{rr:1} can clearly be applied in polynomial time and it will also be useful in the algorithms we present in later sections. Next, we prove that it can be safely applied.

\begin{lemma}\label{lem:rr1safe}
    \cref{rr:1} is safe.
\end{lemma}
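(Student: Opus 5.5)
The plan is to prove the two directions of the equivalence separately. Let $(G',R,O',c)$ be the instance obtained by removing $v$, so $G'=G-v$ and $O'=O\setminus\{v\}$. Note first that $v\notin R$, since $\dist_G(v,v)=0\le c$.

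\emph{Reduced instance to original.} Let $\mathcal{M}'$ be a solution for $(G',R,O',c)$.
\begin{itemize}
\item If $v\in O$, add the trivial movement $(v,v,\ldots,v)$ for the robot on $v$; otherwise $\mathcal{M}'$ is already a movement plan for $G$.
\item No movement in $\mathcal{M}'$ ever touches $v$ or an edge incident to $v$, so the added trivial movement collides with nothing.
\item The length is unchanged. The end-configuration $R^\star$ avoids $v$, so $G[R^\star]=G'[R^\star]\in\Pi$.
\end{itemize}

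\emph{Original instance to reduced.} Let $\mathcal{M}$ be a solution for $(G,R,O,c)$ of minimum length. The goal is to show that no robot ever occupies $v$ unless it starts there and never moves. Then restricting $\mathcal{M}$ to the robots other than the one on $v$ gives a collision-free plan in $G-v$ with the same length and the same end-configuration, and $G'[R^\star]=G[R^\star]$.

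To establish this, let $\mathcal{N}$ be the set of robots with non-trivial movements. For each $r\in\mathcal{N}$, let $T_r$ be the set of vertices visited by $M_r$. Build an auxiliary graph on $\mathcal{N}$ by joining $r$ and $r'$ whenever $T_r\cap T_{r'}\ne\emptyset$.

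\emph{Pruning step (main obstacle).} I claim every connected component $\mathcal{C}$ of this auxiliary graph contains a main robot. Suppose instead that $\mathcal{C}$ consists only of obnoxious robots. Replace all their movements by trivial ones. I would verify that this creates no collision.
\begin{itemize}
\item Each robot in $\mathcal{C}$ now sits at its start vertex, which lies in its own $T_r$.
\item Moving robots outside $\mathcal{C}$ never visit any vertex of $\bigcup_{r\in\mathcal{C}}T_r$, by the definition of components.
\item Stationary robots stay on their own start vertices. These vertices were never visited by a robot in $\mathcal{C}$ in the original plan, since that plan was collision-free; in particular they are distinct from the start vertices of $\mathcal{C}$.
\end{itemize}
The end positions of main robots are unchanged and the length strictly decreases. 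This contradicts minimality.

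\emph{Distance bound.} Fix a component $\mathcal{C}$ with a main robot whose start is $r_0$. The union of the walks $M_r$ for $r\in\mathcal{C}$ is a connected subgraph of $G$ containing $r_0$. It uses at most $\sum_{r\in\mathcal{C}}\len(M_r)\le c$ edge traversals, so every vertex in it is at distance at most $c$ from $r_0$. Hence $v$ is visited by no moving robot.

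If $v$ is the start of an obnoxious robot, that robot does not move: otherwise $v\in T_r$ for a moving $r$, contradicting the bound. So the robot on $v$ stays put, and no other robot enters $v$, since the plan is collision-free and no moving robot visits $v$ anyway. This completes the argument.
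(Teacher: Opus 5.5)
Your proof is correct. It follows the same overall plan as the paper: take a minimum-length solution, note that no main robot can reach $v$, argue that obnoxious robots move only because of main-robot activity, and close with a counting bound. What differs is how you formalize the key step.

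The paper argues informally that in a minimal solution an obnoxious robot moves only ``to avoid a collision'', recursively. It then concludes that the end-configuration contains a path from $v$ to a main robot, made of moved obnoxious robots, whose length forces more than $c$ moves.

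You replace this causal chain with two precise facts:
\begin{itemize}
\item A component of your interaction graph with no main robot can be frozen without creating collisions, which contradicts minimality.
\item The union of the walks in a component is a connected subgraph with at most $c$ edges, so every vertex it touches is within distance $c$ of the main robot's start.
\end{itemize}

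Your route is more rigorous. The paper's intermediate claim about a path in the \emph{end-configuration} is asserted without proof and need not hold literally. For example, an obnoxious robot may step into a side pocket while the main robot moves on, leaving the corridor vertex empty. Your argument works with the walks themselves and needs no such structural claim.

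You also do two things the paper leaves implicit. You prove both directions of safeness explicitly. You also show that a robot sitting on $v$ never moves, so deleting it is harmless. The paper's version is simply shorter.
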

\begin{proof}
    Assume for contradiction that there is a vertex $v$ that \cref{rr:1} would remove and to or from which a robot is moved in a solution to the instance. Assume that the solution has a minimum number of movements. Note that this cannot be a main robot, since all of them have distance more than $c$ to $v$ and hence cannot be moved to $v$. It follows that an obnoxious robot is moved to $v$ (or moved away from $v$). In a solution with a minimum number of movements, an obnoxious robot is only moved to avoid a collision with a main robot or another obnoxious robot. In the latter case, this property holds recursively for the other obnoxious robot. We can conclude that in the end-configuration, there is a path from $v$ to a vertex $v'$ with a main robot such that all vertices of the path (except $v'$) have obnoxious robots on them an each one has been moved by at least one step. Let $\ell'$ be the number of obnoxious robots that have been moved. Then the distance of $v'$ to any main robot starting location is at least $c-\ell'+1$. It follows that moving a main robot to $v'$ and moving the $\ell'$ obnoxious robots by one step each requires more that $c$ movements in total, a contradiction.
\end{proof}

Now we can prove \cref{thm:kernel}.
\begin{proof}[Proof of \cref{thm:kernel}]
We apply \cref{rr:1} exhaustively, which can clearly be done in polynomial time.
By \cref{lem:rr1safe} we have that the rule is safe.
The definition of \cref{rr:1} implies that we keep a ball of radius $c$ around every vertex that has a main robot. In a grid, a ball of radius $c$ around a vertex contains $O(c^2)$ vertices. It follows that after exhaustively applying \cref{rr:1}, the remaining graph $G$ has $O(k\cdot c^2)$ vertices.    
\end{proof}

Now we refine the kernelization algorithm for \MoveToConnectedShort by introducing some additional reduction rules with which we can achieve a smaller kernel if $k<c^2$. 
This allows us to prove \cref{thm:kernel2} (restated below).

\kernell*

To show \cref{thm:kernel2}, we introduce an additional reduction rule. Informally, this rule says that if a vertex $v$ is too far away (distance more than $c+k$) from \emph{some} main robot $r$'s starting location, then is is irrelevant. 

\begin{reductionrule}\label{rr:2}
    Let $(G=(V,E),R,O,c)$ be an instance of \MoveToConnectedShort. If there is a $v\in V\setminus R$ such that there exists $r\in R$ such that $\dist_G(v,r)>c+ k$, then remove $v$ from~$G$ (and from $O$, if it is contained in $O$).
\end{reductionrule}

\cref{rr:2} can clearly be applied in polynomial time. Next we prove that it is safe.

\begin{lemma}\label{lem:rr2safe}
    \cref{rr:2} is safe.
\end{lemma}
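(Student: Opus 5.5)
Since the rule only deletes vertices, every movement plan of the reduced instance is also a movement plan of the original instance. Paths in $G-v$ are paths in $G$, collision-freeness is unchanged, and the end-configuration induces the same subgraph because $G-v$ is an induced subgraph. So the backward direction is immediate, and the work is the forward direction. Given a yes-instance, I will fix a solution $\mathcal{M}$ with a minimum number of moves and show that no robot ever occupies $v$ in it. Then $\mathcal{M}$ restricted to $G-v$ is a valid solution; the obnoxious robot possibly sitting on $v$ never moves, so deleting it together with $v$ changes nothing.

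First I handle main robots. Suppose some main robot $r'$ visits $v$ at some step. Since $v\notin R$, $r'$ must travel at least $\dist_G(r',v)$ steps to reach $v$. The end-configuration is connected and contains $k$ vertices, so the final position $p'$ of $r'$ and the final position $p$ of $r$ satisfy $\dist(p,p')\le k-1$. I then bound $\dist(v,r)$ by the triangle inequality: $\dist(v,r)\le \dist(v,p')+\dist(p',p)+\dist(p,r)$. The first term is at most the moves $r'$ makes after visiting $v$, the middle term is at most $k-1$, and the last term is at most the moves of $r$. If $r'\neq r$, these move counts are disjoint parts of the budget, so $\dist(v,r)\le c+k-1$, a contradiction. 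If $r'=r$, then $r$ itself reached $v$, so $\dist(v,r)\le c$, again a contradiction.

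Next I handle obnoxious robots, following the proof of \cref{lem:rr1safe}. In a minimum solution an obnoxious robot only moves to clear the way, so if one enters or leaves $v$, there is a chain of displaced obnoxious robots leading from $v$ to a vertex $v'$ that some main robot $r'$ visits. Each of the $\ell'$ robots on this chain moves at least once, and the chain gives $\dist(v,v')\le \ell'$ roughly. Charging these $\ell'$ moves together with the moves of $r'$ to $v'$, the moves of $r'$ from $v'$ to its final position, at most $k-1$ steps to $p$, and the moves of $r$, I again obtain $\dist(v,r)\le c+k-1$. The main obstacle is making this chain argument rigorous, and in particular checking that the moves I charge are pairwise distinct. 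The case $v\in O$ with its robot leaving $v$ is covered by the same chain argument. If the chain formalism is awkward, I would instead reuse the \cref{lem:rr1safe} reasoning in the form "some robot moved to or from $v$ implies a main robot came within the displaced-robot distance of $v$" and then add the connectivity slack of $k-1$.

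Thus no robot occupies $v$ at any step, and $\mathcal{M}$ witnesses that the reduced instance is a yes-instance.
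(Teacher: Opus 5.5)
Your overall route is the paper's: take a minimum-length solution, exclude main robots from $v$, exclude obnoxious robots via a chain of displaced obnoxious robots as in \cref{lem:rr1safe}, and pay $k-1$ extra for connectivity. Your main-robot case (triangle inequality through the final positions) and your explicit backward direction are cleaner than the paper's text.

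The gap is the step you flag yourself, and as written it is quantitatively wrong, not just informal. A displaced obnoxious robot may travel several steps; in the variable gadgets of \cref{constr:planarnph} they move $1$, $3$ and $5$ steps. So neither $\dist(v,v')\le \ell'$ nor charging only $\ell'$ moves is justified. Moreover, ``an obnoxious robot only moves to clear the way'' cannot be proved by freezing one robot at a time. Moving obnoxious robots can rotate around a cycle in which every start vertex is entered by the next robot, so freezing any single one of them creates a collision. The paper asserts this chain with the same informality; its path of moved obnoxious robots in the end-configuration need not exist. So you need your own argument.

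Here is one that works. Write $s_x$ for the start of robot $x$, and put $x\to y$ if $x\neq y$ and $x$ occupies $s_y$ at some step. Let $Z$ be the set of obnoxious robots with non-trivial movement that are not reachable from any main robot. Freezing all of $Z$ at their starts creates no collision. A frozen $z$ can only be hit by some $x$ with $x\to z$, and such an $x$ cannot be:
\begin{enumerate}
\item a main robot or a moving robot reachable from one, since then $z$ would be reachable;
\item a stationary robot, since it sits on $s_x\neq s_z$;
\item a member of $Z$, since it is now frozen at $s_x$.
\end{enumerate}
Swaps need two moving robots, so none arise either. Main movements are untouched, so a minimum solution has $Z=\emptyset$.

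Hence if an obnoxious robot $o$ enters or leaves $v$, there is a shortest chain $m\to x_1\to\dots\to x_j=o$ with $m$ a main robot and pairwise distinct obnoxious $x_i$. Let $p_m$ and $p_r$ be the final positions of $m$ and of the robot starting at $r$, and assume $m$ is not that robot. Then
\[
\dist(v,r)\le \len(M_o)+\sum_{i=1}^{j-1}\len(M_{x_i})+\len(M_m)+(k-1)+\len(M_r)\le c+k-1.
\]
This uses four facts: $o$ visits $v$, so $\dist(v,s_o)\le\len(M_o)$; each $x_i$ visits $s_{x_{i+1}}$, so $\dist(s_{x_{i+1}},s_{x_i})\le\len(M_{x_i})$; $m$ reaches $p_m$ after visiting $s_{x_1}$; and $\dist(p_m,p_r)\le k-1$, exactly as in your main-robot case. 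All robots charged are distinct, so the move counts sum to at most $c$. If $m$ is the robot starting at $r$, use $\dist(s_{x_1},r)\le \len(M_r)$ instead and get $\dist(v,r)\le c$. Both contradict $\dist_G(v,r)>c+k$, and the rest of your proof then goes through.
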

\begin{proof}
    Assume for contradiction that there is a vertex $v$ that \cref{rr:2} would remove and to or from which a robot is moved in a solution to the instance. Assume that the solution has a minimum number of movements. Let $r\in R$ such that $\dist_G(v,r)>c+ k$.
    Assume that an obnoxious robot is moved to $v$ (or moved away from $v$). In a solution with a minimum number of movements, an obnoxious robot is only moved to avoid a collision with a main robot or another obnoxious robot. In the latter case, this property holds recursively for the other obnoxious robot. We can conclude that in the end-configuration, there is a path from $v$ to a vertex $v'$ with a main robot such that all vertices of the path (except $v'$) have obnoxious robots on them an each one has been moved by at least one step. Let $\ell'$ be the number of obnoxious robots that have been moved. 
    If a main robot is moved to or from $v$, set $\ell'=0$ and $v=v'$ in the following argument.
    Then the distance of $v'$ to $r$ is at least $c+k-\ell'+1$. Note that at most $k$ vertices on any path from $v'$ to $r$ can have main robots on them. It follows that connecting the main robot with starting location $r$ to the main robot on $v'$ and moving the $\ell'$ obnoxious robots by one step each requires more that $c$ movements in total, a contradiction.
\end{proof}




Now we can prove \cref{thm:kernel2}.
\begin{proof}[Proof of \cref{thm:kernel2}]
We apply \cref{rr:1,rr:2} exhaustively, which can clearly be done in polynomial time.
By \cref{lem:rr1safe,lem:rr2safe} we have that the rules are safe.
By \cref{thm:kernel} we know that after exhaustively applying \cref{rr:1}, the remaining graph $G$ has $O(k\cdot c^2)$ vertices. After applying \cref{rr:2} we have that in particular, for any fixed $r\in R$, we have that for all $v\in V\setminus R$ it holds that $\dist_G(v,r)\le c+ k$. It follows that we keep a ball of radius $c+ k$ around $r$ that contains all vertices except possibly some vertices in $R$. In a grid, a ball of radius $c+k$ around a vertex contains $O(c^2+k^2)$ vertices. It follows that after exhaustively applying \cref{rr:2}, the remaining graph $G$ has $O(k^2+ c^2)$ vertices. The theorem statement follows.    
\end{proof}

\subparagraph{Kernelization Lower Bounds.}
Now we present several kernelization lower bounds.
We use the popular OR-cross-compositions framework~\cite{BJK14,Fom+19} to refute the existence of a polynomial kernel for a parameterized problem under the assumption that \NNoKernelAssume, the negation of which would cause a collapse of the polynomial-time hierarchy to the third
level. 
In order to formally define OR-cross-compositions, we introduce equivalence relations first.
An equivalence
relation~$R$ on the instances of some problem~$L$ is a
\emph{polynomial equivalence relation}~if
\begin{enumerate}
 \item one can decide for every two instances in time polynomial in their sizes whether they belong to the same equivalence class, and
 \item for each finite set~$S$ of instances, $R$ partitions the set into at most~$(\max_{x \in S} |x|)^{\OO(1)}$ equivalence classes.  
\end{enumerate}
An \emph{OR-cross-composition} of a problem~$L\subseteq \{0,1\}^*$ into a
parameterized problem~$P$ (with respect to a polynomial equivalence
relation~$R$ on the instances of~\(L\)) is an algorithm that takes
$t$ $R$-equivalent instances~$x_1,\ldots,x_t$ of~$L$ and
constructs in time polynomial in $\sum_{i=1}^t |x_i|$ an instance
$(y,k)$ of~\(P\) such that
\begin{enumerate}
\item $k\le (\max_{i\in [t]}|x_i|+\log t)^{O(1)}$, and 
\item $(y,k)$ is a \yes-instance of $P$ if and only if there is an $i\in [t]$ s.t.\ $x_{i}$ is a \yes-instance of~$L$. 
\end{enumerate}
If an \NP-hard problem~\(L\) OR-cross-composes into a parameterized
problem~$P$, then~$P$ does not admit a polynomial kernel, unless \NoKernelAssume~\cite{BJK14,Fom+19}.

First, we show \cref{thm:nopkplanar} (restated below), which implies that we cannot get similar kernelization results as \cref{thm:kernel,thm:kernel2} for planar graphs.

\nopkplanar*

\begin{proof}
We provide an OR-cross-composition from \textsc{Linked Planar 3-SAT}~\cite{Pilz19} based on \cref{constr:planarnph} that is used in the proof of \cref{thm:nphardness}. 
We say that two instances of \textsc{Linked Planar 3-SAT} are equivalent if the two formulas have the same number of variables and the same number of clauses. 
Assume we are given $t$ equivalent instances $x_1,x_2,\ldots,x_t$ of \textsc{Linked Planar 3-SAT}.
We apply \cref{constr:planarnph} to each instance of \textsc{Linked Planar 3-SAT}. We identify the vertices $r_1$ and $r_2$ (the start positions of the main robots) of all instances. Then each edge incident with $r_1$ and $r_2$ we subdivide $s+2=|X|+|Y|+2$ times, where $|X|$ and $|Y|$ are the number of variables and clauses, respectively. We set $c=16\cdot |X|+(s+2)\cdot (|Y|+3)+1$. This finishes the construction. For an illustration see \cref{fig:nopk1}.

\begin{figure}[t]
\begin{center}
\begin{tikzpicture}[line width=1pt,scale=1,yscale=.8]

\draw[rounded corners=10pt,fill=lightgray!25!white] (0, 0) rectangle (6, 1);
\node (E) at (3,.5) {\small Instance 1};

\draw[rounded corners=10pt,fill=lightgray!25!white] (0, -.5) rectangle (6, -1.5);
\node (E) at (3,-1) {\small Instance 2};

\node (E) at (3, -2) {$\vdots$};

\draw[rounded corners=10pt,fill=lightgray!25!white] (0, -2.75) rectangle (6, -3.75);
\node (E) at (3,-3.25) {\small Instance $t$};

\node[vert2,fill=blue] (R1) at (-2,-1.375) {};
\node[vert2,fill=blue] (R2) at (8,-1.375) {};

\draw (R1) -- (0,.5);
\draw (R1) -- (0,-1);
\draw (R1) -- (0,-3.25);

\draw (R2) -- (6,.5);
\draw (R2) -- (6,-1);
\draw (R2) -- (6,-3.25);

\end{tikzpicture}
    \end{center}
    \caption{Illustration of the OR-cross-composition from \cref{thm:nopkplanar}. Blue filled square vertices have main robots on them.}\label{fig:nopk1}
\end{figure}

The edge subdivisions ensure that it is too expensive to move obnoxious robots between the instances. Intuitively, the allowed movement amount forces $r_1$ and $r_2$ to meet through one of the instances. The formal correctness is analogous to \cref{lem:nphcorr1,lem:nphcorr2}. 
\end{proof}

Next, we prove \cref{thm:nopkgrid} (restated below), that is, we presumably cannot drop the parameter $k$ from the kernelization results in \cref{thm:kernel,thm:kernel2}.

\nopkgrid*

\begin{proof}
We provide an OR-cross-composition from \textsc{Linked Planar 3-SAT}~\cite{Pilz19} based on \cref{constr:gridnph} that is used in the proof of \cref{thm:nphardness}. 
We say that two instances of \textsc{Linked Planar 3-SAT} are equivalent if the two formulas have the same number of variables and the same number of clauses. 
Assume we are given $t$ equivalent instances $x_1,x_2,\ldots,x_t$ of \textsc{Linked Planar 3-SAT}.
We apply \cref{constr:gridnph} to each instance of \textsc{Linked Planar 3-SAT} with one small modification: we do not place any obnoxious robots in the variable gadgets. 
Note that the created instances all have the same horizontal size when embedded into a grid using the description on \cref{lem:grid}. We place the instances on top of each other (aligned) such that they to not overlap, see \cref{fig:nopk2} for an illustration.
Then in each instance, the we subdivide the edges incident with $r_1$ and $r_2$, respectively, $c+1$ times, where we specify $c$ later, respectively. We put a main robot on each newly created vertex.
We embed the new vertices produced by the edge subdivisions horizontally to the left and right, respectively, from the original positions of $r_1$ and $r_2$, respectively. Now we create a vertical path connecting all $r_1$-vertices of all instances and a vertical path connecting all $r_2$-vertices of all instances. We place main robots on all vertices of those two paths. For an illustration see \cref{fig:nopk2}. Now, for each instance, we create one additional path with main robots on each vertex as follows. Let $(x,y)$ be the coordinates of the original grid point of the $r_1$-vertex of an instance (before the edge subdivisions). Then we start a path at grid point $(x-1,y-1)$ of length $k'<c-1$ horizontally to the left, where we specify $k'$ later. Furthermore, we connect the vertex on grid point $(x-1,y-1)$ to the one on grid point $(x-1,y)$.

The intuition here is that instead of having two main robots meet from each side of an instance constructed by \cref{constr:gridnph}, we create a connection between the main robots on the left side of the instances and the main robots on the right side of the instances by moving $k'$ robots into one of the instances. This is why the obnoxious robots in the variable gadgets are removed. The main robots will play the role of the obnoxious robots instead. This also means that $k'$ essentially needs to equal the number of steps made by main robots in the reduction of \cref{thm:nphardness}. Since the other paths on both sides of the instances have length $c+1$, main robots on those paths cannot be moved, since broken connections cannot be repaired with less than $c$ moves.

To finish the construction, we set $k'=12\cdot |X|+6\cdot |Y|+s+5$, where $|X|$ and $|Y|$ are the numbers of variables and clauses, respectively, and $s=90\cdot(|X|+|Y|)-4$. We set $c=(k')^2+(s+2)\cdot |Y|$. 

\begin{figure}[t]
\begin{center}
\begin{tikzpicture}[line width=1pt,scale=.56,yscale=1]

\foreach \i in {-12,...,3}
{
    \draw[line width=.5pt,color=lightgray,dotted] (-8.8,\i) -- (14.8,\i);
}

\foreach \i in {-8,...,14}
{
    \draw[line width=.5pt,color=lightgray,dotted] (\i,-12.8) -- (\i,3.8);
}

\draw[fill=lightgray!25!white] (0, -1) rectangle (6, 3);
\node (E) at (3,1) {\small Instance 1};

\draw[fill=lightgray!25!white] (0, -2) rectangle (6, -6);
\node (E) at (3,-4) {\small Instance 2};

\node (E) at (3, -6.75) {$\vdots$};

\draw[fill=lightgray!25!white] (0, -8) rectangle (6, -12);
\node (E) at (3,-10) {\small Instance $t$};

\draw[line width=2pt, blue] (0,1) -- (-8,1);
\draw[line width=2pt, blue] (0,-4) -- (-8,-4);
\draw[line width=2pt, blue] (0,-10) -- (-8,-10);

\draw[line width=4pt, blue] (-1,1) -- (-1,0) -- (-6,0);
\draw[line width=4pt, blue] (-1,-4) -- (-1,-5) -- (-6,-5);
\draw[line width=4pt, blue] (-1,-10) -- (-1,-11) -- (-6,-11);

\draw[line width=2pt, blue] (-8,1) -- (-8,-10);

\draw[line width=2pt, blue] (6,1) -- (14,1);
\draw[line width=2pt, blue] (6,-4) -- (14,-4);
\draw[line width=2pt, blue] (6,-10) -- (14,-10);

\draw[line width=2pt, blue] (14,1) -- (14,-10);

\path[draw,decorate,decoration=brace] (-7.8,1.2) -- (-.2,1.2)
node[midway,yshift=1em]{$c+1$};

\path[draw,decorate,decoration=brace] (-1.2,-.3) -- (-5.8,-.3)
node[midway,yshift=-1em]{$k'$};

\end{tikzpicture}
    \end{center}
    \caption{Illustration the OR-cross-composition from \cref{thm:nopkgrid}. The grid points on blue paths have main robots on them. The thick blue paths have the main robots on them that would move through the corresponding instance.}\label{fig:nopk2}
\end{figure}

Note that for the main robots on the left and the right side of the instances to be connected, a connection through one of the instances has to be established. The formal correctness is analogous to \cref{thm:nphardness} with a few minor differences:
If one of the instances is a yes-instance, then we can move the main robots on the path of length $k'$ next to the instance through the instance using $(k')^2$ moves. In the variable gadget, the main robots assume the role of the obnoxious robots in the original reduction, that is, they leave the same vertices in the variable gadgets unoccupied as the obnoxious robots, such that the obnoxious robots on the paths from clause gadgets to variable gadgets can move if the variable is set to a truth value that satisfied the corresponding literal of the clause. Those are $(s+2)\cdot |Y|$ moves in total.

For the other direction, also analogous arguments hold with the following modification. We argue that the main robots cannot take ``short-cuts'' through paths connecting variable gadgets and clause gadgets since moving all obnoxious robots out of the way and moving main robots into the path costs at least $2\cdot (s+2)^2$ moves. For large enough $|X|$ and $|Y|$ we have that $2\cdot (s+2)^2>c$, and hence this is not possible. It follows that we can use the same arguments for the correctness as in the proof of \cref{thm:nphardness}. 
\end{proof}

Finally, we prove \cref{thm:nopkgrid2} (restated below), that is, we presumably cannot improve the kernel size from \cref{thm:kernel2}.
For this kernelization lower bound, we use the weak cross-compositions framework~\cite{hermelin2012weak,Fom+19} to refute the existence of a polynomial kernel of a certain size for a parameterized problem under the assumption that \NNoKernelAssume. 
A \emph{weak cross-composition} of a problem~$L\subseteq \{0,1\}^*$ into a
parameterized problem~$P$ (with respect to a polynomial equivalence
relation~$R$ on the instances of~\(L\)) is an algorithm that takes
$t$ $R$-equivalent instances~$x_1,\ldots,x_t$ of~$L$ and
constructs in time polynomial in $\sum_{i=1}^t |x_i|$ an instance
$(y,k)$ of~\(P\) such that
\begin{enumerate}
\item $k\le(\max_{i\in [t]}|x_i|)^{O(1)}\cdot t^{\frac{1}{d}}$ for some $d\in\mathbb{N}$, and 
\item $(y,k)$ is a \yes-instance of $P$ if and only if there is an $i\in [t]$ s.t.\ $x_{i}$ is a \yes-instance of~$L$. 
\end{enumerate}
If an \NP-hard problem~\(L\) weakly cross-composes into a parameterized
problem~$P$, then~$P$ does not admit a polynomial kernel of size $O(k^{d-\varepsilon})$ for any $\varepsilon>0$, unless \NoKernelAssume~\cite{hermelin2012weak,Fom+19}.

\nopkgridd*

\begin{proof}
We provide a weak cross-composition from \textsc{Linked Planar 3-SAT}~\cite{Pilz19} based on \cref{constr:gridnph} that is used in the proof of \cref{thm:nphardness}. 
We say that two instances of \textsc{Linked Planar 3-SAT} are equivalent if the two formulas have the same number of variables and the same number of clauses. Assume we are given $t$ equivalent instances $x_1,x_2,\ldots,x_t$ of \textsc{Linked Planar 3-SAT}. Furthermore, assume that $\sqrt{t}=t'$ for some $t'\in\mathbb{N}$ (if not, we can copy some of the instances sufficiently many times).
We apply \cref{constr:gridnph} to each instance of \textsc{Linked Planar 3-SAT}. We remove the main robots from vertices $r_1$ and $r_2$ of all instances. Then in each instance, the we subdivide the edges incident with $r_1$ and $r_2$, respectively, $s+4=90\cdot(|X|+|Y|)$ times, where $|X|$ and $|Y|$ are the number of variables and clauses, respectively. 
We use the embedding for the instances described in \cref{lem:grid} and embed the new vertices produced by the edge subdivisions horizontally to the left and right, respectively, from the original positions of $r_1$ and $r_2$, respectively.

Let $x_{\max}=\max_{i\in[t]} |x_t|$. We arrange the $t$ instances in a $(\sqrt{t}\times \sqrt{t})$-``super grid'' where each cell has size $(x_{\max}+4)\times (x_{\max}+4)$ and place one of the instances into each super grid cell so that there are at least 2 empty rows / columns of grid cells between each instance and the boundaries of the super grid cells. For an illustration see \cref{fig:nopk3}. Now we place two verices $r^\star_1$ and $r^\star_2$ at the bottom-left and top-right corner of the super grid, respectively, and place main robots on them.

We create paths from $r^\star_1$ horizontally (to the right) and vertically (to the top) of length $\sqrt{t}\cdot(x_{\max}+4)-1$, that is, they reach just short of the bottom-right and top-left corner of the super grid. From the vertical path, we create horizontal paths (to the left) of length $\sqrt{t}\cdot(x_{\max}+4)-1$ along the bottom (plus one grid cell up) of each super grid cell. We connect this path (via vertical paths up) to the $r_1$-vertex of each instance in each super grid cell that it traverses.
Now we do an analogous construction from $r^\star_2$. We create paths from $r^\star_2$ horizontally (to the left) and vertically (to the bottom) of length $\sqrt{t}\cdot(x_{\max}+4)-1$, that is, they also reach just short of the bottom-right and top-left corner of the super grid and do not touch the paths starting at $r^\star_1$. From the vertical path, we create horizontal paths (to the left) of length $\sqrt{t}\cdot(x_{\max}+4)-1$ along the top (minus one grid cell up) of each super grid cell. We connect this path (via vertical paths down) to the $r_2$-vertex of each instance in each super grid cell that it traverses. The path just described are depicted in red in \cref{fig:nopk3}.

We set $c=2\cdot \sqrt{t}\cdot(x_{\max}+4)+2\cdot (s+4) + c'$, where $c'$ is the allowed movement amout of an instance $x_i$ with $i\in[t]$ (note that since the instances are equivalent, they all have the same allowed movement amount). This finishes the construction.

\begin{figure}[t]
\begin{center}
\begin{tikzpicture}[line width=1pt,scale=.41]

\foreach \i in {-1,...,14}
{
    \draw[line width=.5pt,color=lightgray,dotted] (-3.8,\i) -- (29.8,\i);
}

\foreach \i in {-3,...,29}
{
    \draw[line width=.5pt,color=lightgray,dotted] (\i,-1.8) -- (\i,14.8);
}

\draw (0,1) -- (-2,1);
\draw[fill=lightgray!25!white] (0, 0) rectangle (4, 2);
\node (E) at (2,1) {\small $x_1$};
\draw (4,1) -- (6,1);

\draw (8,1) -- (10,1);
\draw[fill=lightgray!25!white] (10, 0) rectangle (14, 2);
\node (E) at (12,1) {\small $x_2$};
\draw (14,1) -- (16,1);

\node (E) at (18, 1) {$\ldots$};

\draw (20,1) -- (22,1);
\draw[fill=lightgray!25!white] (22, 0) rectangle (26, 2);
\node (E) at (24,1) {\small $x_{\sqrt{t}}$};
\draw (26,1) -- (28,1);

\draw (0,6) -- (-2,6);
\draw[fill=lightgray!25!white] (0, 5) rectangle (4, 7);
\node (E) at (2,6) {\small $x_{\sqrt{t}+1}$};
\draw (4,6) -- (6,6);

\draw (8,6) -- (10,6);
\draw[fill=lightgray!25!white] (10, 5) rectangle (14, 7);
\node (E) at (12,6) {\small $x_{\sqrt{t}+2}$};
\draw (14,6) -- (16,6);

\node (E) at (18, 6) {$\ldots$};

\node (E) at (24, 6.25) {$\vdots$};

\node (E) at (2, 9.25) {$\vdots$};

\draw (0,12) -- (-2,12);
\draw[fill=lightgray!25!white] (0, 11) rectangle (4, 13);
\node (E) at (2,12) {\small $x_{t-\sqrt{t}+1}$};
\draw (4,12) -- (6,12);

\node (E) at (12, 12) {$\ldots$};

\node (E) at (12, 9.25) {$\vdots$};

\draw (20,12) -- (22,12);
\draw[fill=lightgray!25!white] (22, 11) rectangle (26, 13);
\node (E) at (24,12) {\small $x_{t}$};
\draw (26,12) -- (28,12);

\draw[red] (-3,-1) -- (-3,13);

\draw[red] (-3,-1) -- (28,-1);
\draw[red] (-2,-1) -- (-2,1);
\draw[red] (8,-1) -- (8,1);
\draw[red] (20,-1) -- (20,1);

\draw[red] (-3,4) -- (28,4);
\draw[red] (-2,4) -- (-2,6);
\draw[red] (8,4) -- (8,6);

\draw[red] (-3,10) -- (28,10);
\draw[red] (-2,10) -- (-2,12);
\draw[red] (20,10) -- (20,12);

\draw[red] (29,0) -- (29,14);

\draw[red] (-2,3) -- (29,3);
\draw[red] (6,3) -- (6,1);
\draw[red] (16,3) -- (16,1);
\draw[red] (28,3) -- (28,1);

\draw[red] (-2,8) -- (29,8);
\draw[red] (6,8) -- (6,6);
\draw[red] (16,8) -- (16,6);

\draw[red] (-2,14) -- (29,14);
\draw[red] (6,14) -- (6,12);
\draw[red] (28,14) -- (28,12);

\draw[line width=2pt]  (8,6) -- (10,6);
\draw[line width=2pt] (10, 5) rectangle (14, 7);
\draw[line width=2pt]  (14,6) -- (16,6);
\draw[red,line width=2pt] (-3,-1) -- (-3,4) -- (8,4) -- (8,6);
\draw[red,line width=2pt] (16,6) -- (16,8) -- (29,8) -- (29,14);

\node[vert2,fill=blue,label=below:$r^\star_1$] (R1) at (-3,-1) {};
\node[vert2,fill=blue,label=above:$r^\star_2$] (R2) at (29,14) {};

\end{tikzpicture}
    \end{center}
    \caption{Illustration of the weak cross-composition from \cref{thm:nopkgrid2}. The drawing is not up to scale but rather should give the rough idea on how the embedding looks like. Black boxes correspond to the instances and red lines to the connecting paths. Blue filled square vertices have main robots on them. The thick lines correspond to the paths visited by the main robots when instance $x_{\sqrt{t}+2}$ is selected.}\label{fig:nopk3}
\end{figure}

Note that for the main robots on $r^\star_1$ and $r^\star_2$, respectively, to meet, they have to traverse at least one of the instances. Assume w.l.o.g.\ that the two robots meet in an instance (if they do not, we can move the meeting point to $r_1$ or $r_2$ with a respective neighbor of one of the instances). In order to reach said instance, the have to travel at least $2\cdot \sqrt{t}\cdot(x_{\max}+4)$ along red paths in \cref{fig:nopk3}. Furthermore, they have to traverse the edge-subdivision after $r_1$ and $r_2$ of said instance, which takes another $2\cdot (s+4)$ steps. There are now $c'$ steps remaining to traverse at least one instance.
The edge subdivisions ensure that it is too expensive to move obnoxious robots between the instances. Intuitively, the allowed movement amount forces $r_1$ and $r_2$ to meet through one of the instances. The formal correctness is analogous to \cref{thm:nphardness}.
\end{proof}

\section{Collision Free Movement on Planar Graphs}\label{sec:planar}

In this section, we prove \cref{thm:mso} (restated below), that is, we show that \MoveToPi is fixed-parameter tractable when parameterized by $k+c$ if the input graph is planar.

\mso*

To prove \cref{thm:mso}, we show that \MoveToPi is expressable in monadic second-order logic (MSO). This allows us to employ Courcelle's famous theorem~\cite{arnborg1991easy,courcelle1990monadic,courcelle2012graph} to obtain the result. 

We begin by recalling some basic notation and terminology for relational structures.
A \emph{relational vocabulary} is a finite set $\tau$ of relation symbols, each of which is associated with a natural number, known as its \emph{arity}.  Given any relational vocabulary $\tau$, a \emph{$\tau$-structure} is a pair $\mathcal{A} = (A, \{R^{\mathcal{A}}\mid R \in \tau\})$; $A$ is said to be the \emph{universe} of $\mathcal{A}$ while, for each $R \in \tau$, the \emph{interpretation} $R^{\mathcal{A}}$ of $R$ in $\mathcal{A}$ is a subset of $A^r$, where $r$ is the arity of $R$. Here, we are interested only in relational structures with finite universes, and where the maximum arity of any relation is two.  

For any vocabulary $\tau$, the set of \emph{first-order formulas} is built up from a countably infinite set of individual variables $x_1,x_2,\ldots$, the relation symbols $R \in \tau$, the connectives $\wedge, \vee, \neg$ and the quantifiers $\forall x, \exists x$ ranging over elements of the universe of the structure (for notational convenience we will also use standard shorthand such as $\Rightarrow$, $\Leftrightarrow$, $=$, and $\in$). \emph{Monadic second-order logic} (MSO) additionally allows quantification over subsets of the universe via unary relation variables (which we will call \emph{set variables}). Given an MSO-formula $\phi$, an individual variable $x$ (respectively a unary relation variable $X$) appearing in $\phi$ is said to be a \emph{free variable} if $x$ (respectively $X$) is not in the scope of a quantifier $\exists x$ or $\forall x$ (respectively $\exists X$ or $\forall X$).  We write $\phi(X_1,\dots,X_n,x_1,\ldots,x_m)$ for a formula $\psi$ with with free relation variables $X_1,\ldots,X_n$ and individual variables $x_1,\ldots,x_m$.  Given subsets $A_1,\dots,A_n \subseteq A$ (formally these define interpretations of unary relation variables over $A$) and elements $a_1,\dots,a_m \in A$, we write $\mathcal{A} \models \phi(A_1,\ldots,A_n,a_1,\ldots,a_m)$ to mean that $\mathcal{A}$ satisfies $\phi$ if the variables $X_1,\ldots,X_n,x_1,\ldots,x_m$ are interpreted as $A_1,\ldots,A_n,a_1,\ldots,a_m$ respectively.

A \emph{tree decomposition} of a $\tau$-structure $\mathcal{A}$ with universe $A$ is a pair $(T,\mathcal{B})$, where $T = (V_T,E_T)$ is a tree and $\mathcal{B} = (B_v)_{v \in V_T}$ is a collection of subsets of $A$ such that:
\begin{enumerate}
    \item for all $a \in A$, the set $\{v \in V_T: a \in B_v\}$ is non-empty and induces a connected subtree in $T$, and
    \item for every relation symbol $R \in \tau$ and every tuple $(a_1,\dots,a_r) \in R^{\mathcal{A}}$, there exists $B_v \in \mathcal{B}$ such that $a_1,\dots,a_r \in B_V$.
\end{enumerate}
As for graphs, the \emph{width} of the tree decomposition $(T,\mathcal{B})$ is $\max_{B_v \in \mathcal{B}} |B_v| - 1$, and the \emph{treewidth} of~$\mathcal{A}$ (denoted $\tw(\mathcal{A})$) is the minimum width over all tree decompositions of $\mathcal{A}$.


We now give the formal definition of the \textsc{MSO Model Checking} problem.

\problemdef{\textsc{MSO Model Checking}}{A relational structure $\mathcal{A}$ and an MSO-formula $\phi$.}{Are there subsets $A_1,\dots,A_n \subseteq A$ and elements $a_1,\dots,a_m \in A$ such that $\mathcal{A} \models \phi(A_1,\ldots,A_n,a_1,\ldots,a_m)$?}

Courcelle's  theorem~\cite{arnborg1991easy,courcelle1990monadic,courcelle2012graph} implies the following.

\begin{theorem}[\cite{arnborg1991easy,courcelle1990monadic,courcelle2012graph}]\label{thm:courcelle}
\textsc{MSO Model Checking} is fixed-parameter tractable when parameterized by $\tw(\mathcal{A})+|\phi|$.
\end{theorem}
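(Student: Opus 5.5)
Since \cref{thm:courcelle} is the classical result of Courcelle (in the form of Arnborg et al.), the plan is to recall the standard automata-theoretic argument rather than give a new proof. First, the free variables $X_1,\dots,X_n,x_1,\dots,x_m$ in the model-checking question are existentially closed: we replace $\phi$ by the sentence $\exists X_1\cdots\exists X_n\exists x_1\cdots\exists x_m\,\phi$. This sentence has size $O(|\phi|)$, so it suffices to decide whether a fixed MSO \emph{sentence} holds in $\mathcal{A}$.

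Second, we obtain a tree decomposition of width $w=\tw(\mathcal{A})$. The tree decompositions of $\mathcal{A}$ are exactly those of its Gaifman graph, which joins two elements if they occur together in a tuple of some relation; here all arities are at most two. Bodlaender's linear-time algorithm computes such a decomposition in $f(w)\cdot|A|$ time. We then convert it into a \emph{nice} tree decomposition with introduce, forget and join nodes, and give each element a colour from $\{0,\dots,w\}$ that is injective within every bag. This lets us encode $\mathcal{A}$ as a rooted binary tree labelled over a finite alphabet $\Sigma_w$, whose size depends only on $w$ and $\tau$. Each node label records which colour is introduced or forgotten, and which relations of $\tau$ hold among the coloured elements of the current bag.

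Third, we translate the sentence into an MSO sentence over these labelled trees. An element of $A$ is represented by the unique node that forgets it. A set of elements becomes a set of forget-nodes, and a relation $R(a,b)$ holds if and only if, at some node where both $a$ and $b$ are present, the label records $R$ between their colours. ``Present at the same node'' is expressible in MSO by following the connected subtree of nodes carrying a given colour. By the Thatcher--Wright/Doner theorem, this tree sentence is equivalent to a deterministic bottom-up tree automaton whose size is some computable $g(|\phi|,w)$. Running the automaton on the encoding takes linear time, so the total running time is $g(|\phi|,w)\cdot|A|^{O(1)}$, which is fixed-parameter tractable.

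The main obstacle is this third step: checking carefully that the interpretation of $\mathcal{A}$ in its decomposition tree is MSO-definable uniformly in $w$, and that the automaton construction terminates with a size bound depending only on $|\phi|$ and $w$. This bound is non-elementary because each quantifier alternation requires a determinization. If the automaton route becomes cumbersome, I would switch to the equivalent Feferman--Vaught-style dynamic programming. For each node one computes the rank-$q$ MSO type of the bag-rooted substructure below it, where $q$ is the quantifier rank of the sentence. There are only finitely many such types for fixed $q$ and $w$, and the type at a join or introduce node is determined by the types of its children, which yields the same bound.
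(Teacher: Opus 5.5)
Your sketch is correct. The paper gives no proof of this statement and imports it as a black box from the cited works; what you outline is the classical argument behind those citations, namely interpreting $\mathcal{A}$ in a labelled nice tree decomposition and invoking Thatcher--Wright/Doner, or equivalently the type-composition/recognizability argument you give as a fallback. A few details should be stated explicitly. First, take the root bag to be empty, so that every element has a unique forget node. Second, let the labels record which colours are present in each bag. Third, restrict $\tau$ to the relation symbols occurring in $\phi$ before encoding, so that the alphabet, and hence the automaton, depends only on $w$ and $|\phi|$ rather than on an arbitrary input vocabulary.
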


We begin by defining the relational structure $\mathcal{A}$ that will encode our instance of \MoveToPi. Assume we are given an instance $(G=(V,E),R,O,c)$ of \MoveToPi. The universe $A$ of $\mathcal{A}$ consists of the following elements:
\begin{itemize}
    \item The set $V$ of \emph{vertices}.
    \item The set $E$ of \emph{edges}.
    \item The set $[c]$ of \emph{steps}.
    \item The set $S_v=V\times [c]$ of \emph{vertex-step pairs}.
    \item The set $S_e=V\times [c]$ of \emph{edge-step pairs}.
\end{itemize}
The structure $\mathcal{A}$ has eight relation symbols (two unary and six binary relations), with the following interpretations:
\begin{itemize}
    \item The unary relation $\main$ on $V$, where $\main(v) \Leftrightarrow v\in R$.
    \item The unary relation $\robot$ on $V$, where $\robot(v) \Leftrightarrow v\in R\cup O$.
    \item The binary relation $\inc$ on $V\times E$, where $\inc(v,e) \Leftrightarrow v\in e$.
    \item The binary relation $\vertex$ on $S_v\times V$, where $\vertex((v,i),w)\Leftrightarrow v=w$.
    \item The binary relation $\edge$ on $S_e\times E$, where $\edge((e,i),f)\Leftrightarrow e=f$.
    \item The binary relation $\step_v$ on $S_v\times [c]$, where $\step_v((v,i),j)\Leftrightarrow i=j$.
    \item The binary relation $\step_e$ on $S_e\times [c]$, where $\step_e((e,i),j)\Leftrightarrow i=j$.
    \item The binary relation $\nextstep$ on $[c]\times [c]$, where $\nextstep(i,j)\Leftrightarrow i=j+1$.
\end{itemize}

We now bound the treewidth of $\mathcal{A}$. Let $(T,\mathcal{B})$ be a tree decomposition for $G$ of width $\tw(G)$. We construct a tree decomposition $(T,\mathcal{B}')$ for $\mathcal{A}$, indexed by the same tree $T = (V_T,E_T)$, as follows.  Fix a vertex $v \in V_T$, and let $B_v$ be the corresponding element of $\mathcal{B}$.  We define the corresponding element of $\mathcal{B}'$ to be 
\[
    B_v' := B_v \cup \{e=\{u,w\} \in E: v,w \in B_v\} \cup \{s=(u,i)\in S_v: u\in B_v\} \cup \{s=(\{u,w\},i)\in S_e: u,w\in B_v\}  \cup [c].
\]
It is straightforward to verify that, with this definition, $(T,\mathcal{B}')$ is indeed a tree decomposition for~$\mathcal{A}$. Furthermore, it is immediate that $|B_v'| \le |B_v| + \binom{|B_v|}{2} + c\cdot \binom{|B_v|}{2} + c\cdot |B_v| + c$. Hence, we have the following.
\begin{lemma}\label{lem:msotw}
It holds that $\tw(\mathcal{A})\in O(c\cdot \tw(G)^2)$.    
\end{lemma}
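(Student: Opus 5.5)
The plan is to take the decomposition $(T,\mathcal{B}')$ defined just before the statement, check that it is a valid tree decomposition of $\mathcal{A}$, and then bound the bag sizes. Throughout I read $S_e$ as $E\times[c]$, which is clearly what is intended.

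\emph{Validity, first axiom (every element lies in a connected, non-empty set of bags).} I go through the five kinds of elements of the universe $A$.
\begin{itemize}
\item A vertex $u\in V$ lies in exactly the bags $B'_v$ with $u\in B_v$. This set is non-empty and connected, since $(T,\mathcal{B})$ is a tree decomposition of $G$.
\item An edge $e=\{u,w\}$ lies in the bags with $u,w\in B_v$. This is the intersection of two subtrees of $T$, so it is again a subtree. It is non-empty by the edge-covering property of $(T,\mathcal{B})$.
\item A vertex-step pair $(u,i)$ occupies the same bags as $u$.
\item An edge-step pair $(e,i)$ occupies the same bags as $e$.
\item A step $i\in[c]$ lies in every bag, so it trivially induces all of $T$.
\end{itemize}

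\emph{Validity, second axiom (every tuple of every relation is covered by some bag).}
\begin{itemize}
\item The unary relations $\main$ and $\robot$ are covered by the first axiom.
\item A tuple $(u,e)$ of $\inc$ is covered by any bag containing $e$, because such a bag contains both endpoints of $e$.
\item A tuple of $\vertex$ is covered by any bag containing $u$, since that bag also contains $(u,i)$.
\item A tuple of $\edge$ is covered by any bag containing $e$, since that bag also contains $(e,i)$.
\item Tuples of $\step_v$, $\step_e$ and $\nextstep$ each involve at least one step, and steps lie in every bag. So any bag containing the other element covers the tuple.
\end{itemize}

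\emph{Size bound.} For a bag $B_v$ of size $b\le \tw(G)+1$, the new bag $B'_v$ contains:
\begin{itemize}
\item at most $b$ vertices,
\item at most $\binom{b}{2}$ edges,
\item at most $cb$ vertex-step pairs,
\item at most $c\binom{b}{2}$ edge-step pairs,
\item $c$ steps.
\end{itemize}
The total is $O(c\cdot b^2)$. Subtracting one gives width $O(c\cdot(\tw(G)+1)^2)$, which is $O(c\cdot\tw(G)^2)$ whenever $\tw(G)\ge 1$. The degenerate case $\tw(G)=0$ (an edgeless graph) gives width $O(c)$ and is absorbed by reading the bound as $O(c\cdot \max\{1,\tw(G)\}^2)$.

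The only step needing care is the edge elements: their bag set must be connected, which relies on the fact that the intersection of two subtrees of a tree is itself a subtree. Everything else is routine bookkeeping.
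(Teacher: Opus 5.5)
Your proposal is correct and is essentially the paper's argument: the same bags $B'_v$ (with $S_e$ read as $E\times[c]$ and the edge condition as $u,w\in B_v$) and the same count $|B_v|+\binom{|B_v|}{2}+c\binom{|B_v|}{2}+c|B_v|+c$. Your axiom-by-axiom check simply fills in what the paper calls ``straightforward to verify.'' Your caveat about $\tw(G)=0$ is a fair correction to the literal bound, and it does not affect the later application, where $\tw(G')\in O(k\cdot c)$.
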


We now proceed to define the formula $\phi$. The main idea (intuitively) is as follows.
\begin{itemize}
    \item For each main robot, we try to find a movement of length $c$.
    \item If an obnoxious robot is encountered in a movement, then we try to find a (non-trivial) movement of length $c$ for the obnoxious robot.
    \item We know that in total, we can only have $c$ non-trivial movements in the movement plan.
    \item The movement plan should be collision-free and the end-configuration of the main robots should induce a subgraph with property $\Pi$.
    \item The length of the moment plan should be at most $c$.
\end{itemize}

\newcommand{\movement}{\operatorname{movement}}
\newcommand{\collisionfree}{\operatorname{collision-free}}
\newcommand{\move}{\operatorname{move}}
\newcommand{\start}{\operatorname{start}}
\newcommand{\eend}{\operatorname{end}}
\newcommand{\visit}{\operatorname{visit}}

We introduce several subformulas:
\begin{itemize}
    \item $\movement(X)$ has as free variable $X$ and checks whether these are exactly $c$ elements from $S_v$ that correspond to a movement.
\begin{align*}
  \movement(X)= \ & \exists x_1,\ldots,x_c\in S_v \Big( \bigwedge_{i\in[c]} (x_i\in X\wedge \step_v(x,i))\wedge\forall x\in X\bigvee_{i\in[c]} (x=x_i) \wedge \\
& \bigwedge_{i\in[c-1]}(\exists v\in V(\vertex(x_i,v)\wedge\vertex(x_{i+1},v))\vee \\
& \exists e\in E \ \exists v,w\in V(\inc(v,e)\wedge\inc(w,e)\wedge \vertex(x_i,v)\wedge\vertex(x_{i+1},w)\Big)  
\end{align*}
    Note that the size of $\movement(X)$ is in $O(c)$. The formula checks whether there are $c$ elements in $S_v$ with $c$ different steps (note that this implies that the elements are different) and that each element in $X$ is equal to one of these $c$ elements from $S_v$. This implies that $X$ contains exactly~$c$ elements from $S_v$. Then the formula checks for every pair of vertex-step pairs in $X$ with consecutive time steps whether they have the same vertex or adjacent vertices.
    \item $\collisionfree(X_1,X_2)$ has as free variables $X_1$ and $X_2$, which are expected to be movements, and checks whether they are collision-free.
    \begin{align*}
    \collisionfree(X_1,X_2)= \  
    &\forall x_1\in X_1 \ \forall x_2\in X_2 \ \forall i\in[c] \ \exists v,w\in V\Big((\step_v(x_1,i)\wedge \step(x_2,i)) \Rightarrow\\
    &(v\neq w\wedge\vertex(x_1,v)\wedge\vertex(x_2,w)\Big) \wedge\\\
    &\forall x_1,x_2\in X_1 \ \forall x_3,x_4\in X_2 \ \forall i,j\in[c] \ \forall v,w\in V\Big((\nextstep(i,j)\wedge\\
    & \step_v(x_1,i)\wedge \step(x_2,j)\wedge \step_v(x_3,i)\wedge \step(x_4,j))\Rightarrow\\
    & \neg(\vertex(x_1,v)\wedge\vertex(x_2,w)\wedge\vertex(x_3,w)\wedge\vertex(x_4,v))\Big)
    \end{align*}
    Note that the size of $\collisionfree(X_1,X_2)$ is constant.
    \item $\move(X,x)$ has as free variables $X$ and $x$ where $X$ is expected to be a movement and $x$ is expected to be an element from $S_e$. It checks whether the edge-step pair $x=(e,i)$ is a move in the movement, that is, whether at step $i$ there is a move along edge $e$ in the movement $X$.
    \begin{align*}
        \move(X,x) = \  & \exists x_1,x_2\in X \ \exists v,w\in V \ \exists e\in E \ \exists i,j\in [c]\Big(\nextstep(i,j)\wedge \step_v(x_1,i)\wedge\step_v(x_2,j)\wedge\\
        &v\neq w\wedge \vertex(x_1,v)\wedge\vertex(x_2,w)\wedge\inc(v,e)\wedge\inc(w,e)\wedge\edge(x,e)\wedge\step_e(x,i)\Big)
    \end{align*}
    Note that the size of $\move(X,x)$ is constant.
    \item $\start(X,x)$, $\eend(X,x)$, and $\visit(X,x)$ have as free variables $X$ and $x$, respectively, where $X$ is expected to be a movement and $x$ is expected to be an element from $V$. $\start(X,x)$ checks whether the movement $X$ starts at vertex $x$. $\eend(X,x)$ checks whether the movement $X$ ends at vertex $x$. $\visit(X,x)$ checks whether the movement $X$ visits vertex $x$ at some step.
    \begin{align*}
        &\start(X,x)=\exists x_1\in X (\vertex(x_1,x)\wedge\step_v(x,1))\\
        &\eend(X,x)=\exists x_1\in X (\vertex(x_1,x)\wedge\step_v(x,c))\\
        &\visit(X,x)=\exists x_1 \in X (\vertex(x_1,x))
    \end{align*}
    Note that the sizes of $\start(X,x)$, $\eend(X,x)$, and $\visit(X,x)$ are each constant.
\end{itemize}
Now we are ready to give the formula $\phi$. Denote with $c^\star=\min(k+\ell,c)$.
\begin{align*}
    \phi(X_1,\ldots,X_{c^\star},x_1,\ldots,x_c)=&\bigwedge_{i\in[c^\star]}\movement(X_i)\wedge\bigwedge_{i,j\in[c^\star]}((i\neq j)\Rightarrow \collisionfree(X_i,X_j))\wedge\\
    &\forall v\in V \Big((\robot(v)\wedge (\bigvee_{i\in[c^\star]}\visit(X_i,v)))\Rightarrow(\bigvee_{i\in[c^\star]}\start(X_i,v))\Big)\wedge\\
    &\forall x\in S_e \Big((\bigvee_{i\in[c^\star]} \move(X_i,x))\Rightarrow(\bigvee_{i\in[c]}(x=x_i))\Big)\wedge\\
    &\exists X\subseteq V \Big((\forall v\in V ((\main(v)\wedge\bigwedge_{i\in[c^\star]}\neg\start(X_i,v))\vee\\
    &\exists w\in V (\main(w)\wedge(\bigvee_{i\in[c^\star]}(\start(X_i,w)\wedge\eend(X_i,v)))\Leftrightarrow (v\in X)))\wedge \Pi(X) \Big)
\end{align*}
Note that the size of the formula $\phi$ is in $O(c^2+|\Pi|)$. Since the graph property $\Pi$ is recursively enumerable, we can conclude $|\Pi(X)|=f(k)$ for $|X|=k$ and some computable function $f$. Hence, we have the following.
\begin{lemma}\label{lem:msosize}
    It holds that $|\phi|\in O(c^2)+f(k)$, for some computable function $f$.
\end{lemma}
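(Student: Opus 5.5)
The size bound in \cref{lem:msosize} follows by adding up the sizes of the constituents of $\phi$ and then handling the $\Pi(X)$ part separately.

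For the first part, \cref{sec:planar} already records that $\movement(X)$ has size $O(c)$, while $\collisionfree(X_1,X_2)$, $\move(X,x)$, $\start(X,x)$, $\eend(X,x)$ and $\visit(X,x)$ each have constant size. The top-level conjunction of $\phi$ contains the following pieces.
\begin{itemize}
\item $c^\star\le c$ copies of $\movement$, of total size $O(c^2)$.
\item At most $(c^\star)^2\le c^2$ copies of $\collisionfree$, of total size $O(c^2)$.
\item A universally quantified clause with two disjunctions of length $c^\star$, of size $O(c)$.
\item A clause over $S_e$ with a disjunction of $c^\star$ calls to $\move$ and a disjunction of $c$ equalities $x=x_i$, of size $O(c)$.
\item The final existential clause defining the end-configuration $X$. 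It has $O(c^\star)$ calls to $\start$ and $\eend$, so apart from the subformula $\Pi(X)$ its size is $O(c)$.
\end{itemize}
The free individual variables $x_1,\ldots,x_c$ add a further $O(c)$. Altogether, everything except $\Pi(X)$ has size $O(c^2)$.

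The subformula $\Pi(X)$ has to be written as an MSO formula whose length depends only on $k$. The point is that $X$ is the end-configuration of the main robots, so $|X|\le k$. I would realize this as follows. Existentially quantify $k$ individual vertex variables $y_1,\ldots,y_k$ and require $X\subseteq\{y_1,\ldots,y_k\}$, with the $y_i$ not necessarily distinct. Then, for each of the finitely many graphs $F$ on at most $k$ labeled vertices, add a first-order description of $G[X]\cong F$ using equalities among the $y_i$ and the adjacency predicate built from $\inc$. Finally, take the disjunction over exactly those $F$ that lie in $\Pi$.

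There are at most $2^{O(k^2)}$ such graphs $F$, and each description has size $O(k^2)$. This gives $|\Pi(X)|\le f(k)$ for a computable $f$.

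The main obstacle is justifying computability of $f$. A recursively enumerable $\Pi$ does not obviously let us decide which $F$ belong to $\Pi$. Strictly speaking, this needs $\Pi$ to be decidable on graphs with at most $k$ vertices. For connectivity, the case of interest, membership is trivially decidable and can even be written directly as a constant-size MSO formula, so $f(k)=O(1)$. For general $\Pi$ I would note that the lemma as used only needs $f$ to exist and be computable for fixed $\Pi$. I would phrase the proof under the tacit assumption that membership in $\Pi$ is decidable, which I expect the authors intend. Adding the two parts yields $|\phi|\in O(c^2)+f(k)$.
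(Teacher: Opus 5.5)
Your proposal is correct and takes the same route as the paper. You tally the subformulas of $\phi$ to get $O(c^2)$ plus $|\Pi(X)|$, and you additionally make explicit, via a disjunction over the $2^{O(k^2)}$ descriptions of graphs on at most $k$ vertices, the bound on $|\Pi(X)|$ that the paper only asserts from recursive enumerability of $\Pi$.

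One correction to your caveat. That same count already gives a computable size bound $f(k)=2^{O(k^2)}$ no matter which graphs lie in $\Pi$, so the lemma as stated needs no decidability assumption. Decidability is needed only to actually construct $\phi$ for the algorithm of \cref{thm:mso}, and there it is genuinely required, since for $c=0$ the problem contains membership in $\Pi$.
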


It remains to show that the formula is correct.

\begin{lemma}\label{lem:msocorrect}
    The formula $\phi$ is satisfiable if and only if $(G=(V,E),R,O,c)$ is a yes-instance of \MoveToPi.
\end{lemma}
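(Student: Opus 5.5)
We prove both directions by translating between movement plans and satisfying assignments of the free variables $X_1,\ldots,X_{c^\star},x_1,\ldots,x_c$. The basic correspondence is this: a set $X\subseteq S_v$ satisfies $\movement(X)$ exactly when it is the trace $\{(v_i,i)\mid i\in[c]\}$ of a movement with $c$ steps. The subformulas $\collisionfree$, $\move$, $\start$, $\eend$ and $\visit$ then express exactly the corresponding notions from \cref{sec:prelims} on such traces. Both directions rest on this routine check against the definitions.

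For $(\Leftarrow)$, we start from a solution $\mathcal{M}$ with $\len(\mathcal{M})\le c$. First we normalize it. Every movement can be padded or compressed to exactly $c$ steps, since a plan of length at most $c$ can be scheduled so that in each step at most one robot moves; such a sequential schedule keeps the plan collision-free. At most $c$ movements are non-trivial, and at most $k+\ell$ exist at all, so there are at most $c^\star$ non-trivial movements. If fewer are needed, we pad with trivial movements of robots that never move; they must not be duplicated, since two copies of the same trivial movement would violate $\collisionfree$. We take $X_1,\ldots,X_{c^\star}$ to be the traces of the non-trivial movements (plus padding). The $x_i$ are the edge-step pairs of the at most $c$ actual moves, with repetitions allowed when there are fewer than $c$ moves.

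Now we verify the conjuncts of $\phi$. Pairwise collision-freeness follows from $\mathcal{M}$. The $\robot$-conjunct holds because any robot whose start vertex is visited by another movement must itself be moved; otherwise there is a collision with its trivial movement. The move conjunct holds because the number of moves is at most $c$. Finally, we take $X=R^\star(\mathcal{M})$: it consists of the unmoved main start vertices together with the endpoints of the moved main robots, and $\Pi(X)$ holds.

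For $(\Rightarrow)$, we take the satisfying sets $X_i$ and read each as the movement of the robot at its start vertex. We need two facts. First, the start vertices of the $X_i$ are distinct robot vertices, since collision-freeness at step~1 forbids a shared start. Second, every robot not covered by some $X_i$ stays put. The main obstacle is showing that these stationary robots do not collide with the moving ones, and that the start vertex of every $X_i$ is actually in $R\cup O$. The $\robot$-conjunct gives the first: any visited robot vertex is the start of some movement. The second is not forced by the formula as written. We handle it by adding the harmless conjunct $\bigwedge_i \exists v(\robot(v)\wedge\start(X_i,v))$, or by arguing that a movement starting at an empty vertex can be dropped without affecting anything. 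The $x_i$-conjunct bounds the number of distinct (edge, step) moves by $c$. Since collision-freeness forbids two robots traversing the same edge in the same step, moves are injective into these pairs, so the total length is at most $c$. The last conjunct identifies $X$ with $R^\star(\mathcal{M})$ and gives $\Pi(X)$. Hence the instance is a yes-instance, which together with \cref{lem:msotw,lem:msosize} and \cref{thm:courcelle} completes the argument.
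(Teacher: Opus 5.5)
Your structure is the paper's, but the normalization step in the direction from a solution to a satisfying assignment does not hold as stated. The claim that a plan of length at most $c$ can be rescheduled so that at most one robot moves per step is false in this collision model. A simultaneous rotation along a cycle is collision-free (the paper itself uses such cyclic moves in \cref{sec:udgs}), but it cannot be serialized when every vertex on the cycle is occupied. For example, take two triangles $\{a,b,x\}$ and $\{x,y,z\}$ sharing $x$, with every vertex occupied, main robots on $a$ and $y$, and budget $3$. The rotation $a\to x$, $x\to b$, $b\to a$ makes the main robots adjacent, yet no robot can ever move alone. What you actually need is weaker and true: delete every step in which no robot moves. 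Every remaining configuration and every remaining transition is one of the original ones, so collision-freeness is preserved, and at most $c$ transitions remain.

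Even after this repair, $c$ transitions means up to $c+1$ positions. But $\operatorname{movement}(X)$ admits exactly $c$ vertex-step pairs, i.e., at most $c-1$ moves per robot. So ``padded or compressed to exactly $c$ steps'' is off by one, already for your own serialized schedule. For $c=1$ the formula permits no move at all, although a path on three vertices with main robots at both ends is then a yes-instance. The paper's proof passes over this silently (it just takes the traces of the non-trivial movements), so the defect lies in the encoding. The fix is to index steps by $[c+1]$, which leaves \cref{lem:msotw,lem:msosize} intact.

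With these repairs the rest of your argument coincides with the paper's:
\begin{itemize}
\item distinct starts via collision-freeness at step $1$;
\item the $\robot$-conjunct protecting stationary robots;
\item injectivity of moves into edge-step pairs;
\item the identification of $X$ with $R^\star(\mathcal{M})$;
\item discarding movements that start on empty vertices, which the paper does implicitly by only adopting those $M_i$ that start at a robot.
\end{itemize}
Your padding with trivial movements of distinct never-moving robots is valid: with $m$ non-trivial movements there are $k+\ell-m\ge c^\star-m$ such robots. It is also tidier than the paper's padding by a never-visited vertex, which would need several distinct such vertices.
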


\begin{proof}
    First, assume that $\phi$ is satisfiable. We show that then $(G=(V,E),R,O,c)$ is a yes-instance of \MoveToPi. Let $X_1,\ldots,X_{c^\star},x_1,\ldots,x_c$ denote a satisfying assignment for $\phi$. We construct a movement plan $\mathcal{M}$ as follows. We have that the sets $X_i$ with $i\in[c^\star]$ satisfy $\movement(X_i)$. It follows that $X_i=\{(v_1,1),(v_2,2),\ldots,(v_c,c)\}$ (see the first line of the definition of $\movement(X)$). We claim that $M_i=(v_1,v_2,\ldots,v_c)$ is a movement. The second and third line of the definition of $\movement(X)$ give us exactly that for all $v_j,v_{j+1}$ with $j\in[c-1]$ we have that $v_j=v_{j+1}$ or $\{v_j,v_{j+1}\}\in E$. Hence, we can conclude that $M_i$ is a movement. Now for each $M_i$ that is a movement for a robot $r\in R\cup O$, we add $M_i$ to the movement plan $\mathcal{M}$. For each $r\in R\cup O$ such that none of the $M_i$ with $i\in[c^\star]$ is a movement for $r$, we add a trivial movement to the movement plan $\mathcal{M}$. Now we show the following, from which clearly follows that $(G=(V,E),R,O,c)$ is a yes-instance of \MoveToPi.
    \begin{enumerate}
        \item For each $r\in R\cup O$ there is at most one $i\in[c^\star]$ such that $M_i$ is a movement for $i$, that is, the movement plan $\mathcal{M}$ is well-defined.
        \item The movement plan $\mathcal{M}$ is collision-free.
        \item $\len(\mathcal{M})\le c$ and $G[R^\star(\mathcal{M})]\in\Pi$.
    \end{enumerate}
    
Consider the first statement and assume for contradiction that there are two movements $M_i$ and $M_j$ with $i\neq j$ for some $r\in R\cup O$. Note that in $\phi$, the subformula $\collisionfree(X_i,X_j)$ is checked and evaluates to true, since otherwise $\phi$ would evaluate to false. In $\collisionfree(X_i,X_j)$ it is, in particular, checked, that $(v_1,1)\in X_i$ and $(v'_1,1)\in X_j$ implies $v_1\neq v'_1$. It follows that $M_i$ and $M_j$ cannot both be a movement for the same $r\in R\cup O$.

Consider the second statement and assume for contradiction that there are two movements $M_i$ and $M_j$ with $i\neq j$ for robots $r,r'\in R\cup O$, respectively, that are not collision-free. First, consider that both $M_i$ and $M_j$ are non-trivial. Then in $\phi$, the subformula $\collisionfree(X_i,X_j)$ is checked and evaluates to true, since otherwise $\phi$ would evaluate to false. However, if $M_i$ and $M_j$ then there exists some vertex $v$ and some step $t$ such that both $r$ and $r'$ occupy $v$ in step $t$ or there exit an edge $\{v,w\}\in E$ and some step $t$ such that $r$ occupies $v$ in step $t$ and $w$ in step $t+1$ and $r'$ occupies $w$ in step $t$ and $v$ in step $t+1$. In this case we can verify that $\collisionfree(X_i,X_j)$ evaluates to false: In the former case the first two lines in the definition of $\collisionfree(X_1,X_2)$ evaluate to false, and in the latter case, the last three lines in the definition of $\collisionfree(X_1,X_2)$ evaluate to false. In each case we reach a contradiction. 
Now consider the case that w.l.o.g.\ $M_i$ is a non-trivial movement and $M_j$ is a trivial movement. If $M_i$ and $M_j$ are not collision-free, then there must be a step $t$ such that robot $r$ occupies $r'$ in step $t$. However, then $\robot(r')$ and $\visit(X_i,r')$ both evaluate to true. It follows that $\bigvee_{j\in[c^\star]}\start(X_j,r')$ must also evaluate to true. It follows that $M_j$ corresponds to some $X_j$ and $\collisionfree(X_i,X_j)$ evaluates to true. By the same arguments as earlier we can conclude that $M_i$ and $M_j$ are collision-free.

Finally, consider the third statement. Let $M_i$ be a non-trivial movement for some $r\in R\cup O$. First, it is straightforward to verify that if robot $r$ moves along edge $\{v,w\}$ in step $t$, then $\move(X_i,(\{v,w\},t))$ evaluates to true. Since the movement plan $\mathcal{M}$ is collision-free according to the previous arguments, we have that for a fixed $(\{v,w\},t)$, the subformula $\move(X_i,(\{v,w\},t))$ evaluates to true for at most one $X_i$. In the third line of the definition of $\phi$ it is checked that whenever $\move(X_i,(\{v,w\},t))$ evaluates to true, then $(\{v,w\},t)=x_j$ for some $j\in[c]$. It follows that the set $\{x_1,x_2,\ldots,x_c\}$ contains all moves made by robots according to movement plan $\mathcal{M}$. It follows that there are at most $c$ different moves, and hence $\len(\mathcal{M})\le c$.
Next, consider the last two lines of the definition of $\phi$. We show that the set $X$ here corresponds to $R^\star(\mathcal{M})$: The formula ensures that a vertex $v$ is contained in $X$ if and only if $v$ is a main robot and not the start of any $X_i$ (in this case $v$ has a trivial movement), or there is a main robot $w$ and an $X_i$ starting with $w$ that ends in $v$, that is, $X_i$ corresponds to a movement for a main robot $w$ that ends in $v$. Finally $\phi$ checks whether $\Pi(X)$ evaluates to true. It follows that $G[R^\star(\mathcal{M})]\in\Pi$.

Now assume that $(G=(V,E),R,O,c)$ is a yes-instance of \MoveToPi. Then there is a collision-free movement plan $\mathcal{M}$ that contains at most $c^\star$ non-trivial movements which make at most $c$ moves. Let $X_1,X_2,\ldots,X_{c^\star}$ correspond to the non-trivial movements (if there are strictly less than $c^\star$ non-trivial movements, we can add sets $X$ that only contain a vertex that is never visited (we can easily ensure that such a vertex always exists). Let $x_1,x_2,\ldots,x_c$ correspond to the moves. If there are strictly less than $c$ moves, then we can duplicate arbitrary moves.
By analogous arguments to the first part of this proof, we can argue that $\phi(X_1,X_2,\ldots,X_{c^\star},x_1,x_2,\ldots,x_c)$ evaluates to true. 
\end{proof}

Now we have all ingredients to prove \cref{thm:mso}.
\begin{proof}[Proof of \cref{thm:mso}]
Given an instance $(G=(V,E),R,O,c)$ of \MoveToPi, we first apply \cref{rr:1}. By \cref{lem:rr1safe} this creates an equivalent instance $(G'=(V',E'),R,O',c)$. We can observe that $\diam(G')\in O(k\cdot c)$ (where if $G'$ is disconnected, we assume that $\diam(G')$ is the maximum diameter of the connected components of $G'$). This follows from the fact that after \cref{rr:1} is applied, every vertex in $G'$ has distance at most $c$ from some $r\in R$, and for every $r,r'\in R$ such that $r$ and $r'$ are in the same connected component of $G'$, we have that $\dist(r,r')\in O(k\cdot c)$. Since $G'$ is planar, it is well-known that this implies that $\tw(G')\in O(k\cdot c)$.
\cref{thm:mso} now follows from \cref{lem:msocorrect,lem:msotw,lem:msosize}, and \cref{thm:courcelle}.    
\end{proof}

We remark that the described algorithm does not exploit planarity. It is only used to upper-bound the treewidth with the diameter of the input graph. Hence, the algorithm also implies that \MoveToPi is fixed-parameter tractable when parameterized by $k+c$ and the treewidth of the input graph combined.

\section{Collision Free Movement on Unit Disk Graphs}\label{sec:udgs}

In this section, we consider the canonical optimization version of the problem \MoveToConnectedShort, where the objective is to minimize the length of the movement plan.
We prove \cref{thm:approx} (restated below), that is, we give a fixed-parameter approximation algorithm for  \MoveToConnectedShort parameterized by $k+c$ on unit disk graphs, that produces a solution with energy at most $2\cdot \text{OPT}+3k$, where OPT is the energy of an optimal solution (\cref{thm:approx}). Here, the parameter $c$ is the number of moves in an optimal solution.
In fact, the algorithm we present also works on a slightly more general graph class, so-called \emph{clique grids}, which are generalizations of grids and defined as follows. 

\begin{definition}[Clique Grid) (\cite{fomin2019finding}]\label{def:cliquegrid}
    A graph $G=(V,E)$ is a \emph{clique grid graph} if there exists a function $f:V\rightarrow [t]\times [t']$ for some $t,t'\in\mathbb{N}$ such that
    \begin{itemize}
        \item for all $(i,j)\in [t]\times [t']$ it holds that $f^{-1}(i,j)$ is a clique in $G$, and
        \item for all $\{u,v\}\in E$ it holds that if $f(u)=(i,j)$ and $f(v)=(i',j')$, then $|i-i'|\le 2$ and $|j-j'|\le 2$.
    \end{itemize}
    The function $f$ is a \emph{representation} of $G$.
\end{definition}
It is known that unit disk graphs are clique grids and that, given the point set of the unit disk graph, a representation can be computed in polynomial time.
\begin{lemma}[\cite{fomin2019finding}]\label{lem:cliquegrid}
    Let $G$ be a unit disk graph with point set $D$. Then $G$ is a clique grid graph and a representation $f$ of $G$ can be computed from $D$ in polynomial time.
\end{lemma}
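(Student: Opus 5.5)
The plan is the standard grid-bucketing argument. I read the paper's definition of unit disk graphs as saying that the point set $D$ lies in $\mathbb{R}^2$, and that $\{u,v\}\in E$ exactly when the Euclidean distance $\|f(u)-f(v)\|$ is at most $1$. If the intended convention is unit radius, so that the threshold is $2$, every length below simply scales by a factor of $2$.

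\textbf{Construction.} Overlay the plane with an axis-parallel grid of square cells of side length $1/2$. Send each vertex $v$ with point $p_v=(x_v,y_v)$ to the cell index $(\lfloor 2x_v\rfloor,\lfloor 2y_v\rfloor)$. Then translate all indices by a common offset so that they lie in $[t]\times[t']$, where $t$ and $t'$ are one more than the spread of the indices. Call the resulting map the representation.

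\textbf{Clique property.} Take two points in the same cell. Their distance is at most the cell diagonal, which is $\sqrt{2}/2<1$. Hence every vertex class $f^{-1}(i,j)$ is a clique.

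\textbf{Edges span at most two cells.} Let $\{u,v\}\in E$. Then $|x_u-x_v|\le \|p_u-p_v\|\le 1$, so $|2x_u-2x_v|\le 2$. Two reals differing by at most $2$ have floors differing by at most $2$, so $|i-i'|\le 2$. The same argument applied to the $y$-coordinates gives $|j-j'|\le 2$. This is exactly the second condition of \cref{def:cliquegrid}.

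\textbf{Running time, the main obstacle.} Computing each $f(v)$ takes one floor operation per coordinate, so the map itself is polynomial to compute. The concern is that $t$ and $t'$ may be large, since they depend on the numeric range of $D$. They are still only polynomial in the bit size of $D$, so writing down the representation is polynomial. If one also wants $t,t'\le O(|V|)$, compress the indices. Sort the occupied column indices. Whenever two consecutive occupied columns are separated by three or more empty columns, shrink that gap to exactly three empty columns; do the same for rows. No edge joins columns more than $2$ apart, so no edge crosses such a gap. Shrinking a gap therefore preserves both conditions: index differences within a group of columns not separated by a long gap are unchanged, and columns on opposite sides of a shrunk gap still differ by at least $4$ and share no edges. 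After compression, $t,t'\le 4|V|$.
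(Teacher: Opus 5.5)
Your grid-bucketing argument is correct and is the standard proof behind the result the paper cites from Fomin et al.; the paper itself does not reprove it. Any cell side in $[1/2,1/\sqrt{2}]$ works, your choice of $1/2$ is fine, and the optional column/row compression is also sound. You were also right to read the definition as ``adjacent if and only if the points are at Euclidean distance at most $1$ in $\mathbb{R}^2$'': the clique condition needs exactly the direction (distance at most $1$ implies an edge) that the paper's stated definition leaves out.
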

In this section, we will assume that a representation of the input graph in form of a point set or a clique grid representation is given as part of the input. Now we prove the following result.

\approx*

We present an algorithm to prove \cref{thm:approx} with the following steps. Assume we are given an instance $(G=(V,E),R,O)$ of (the optimization version) of \MoveToConnectedShort and we have a clique grid representation of $G$. We first start with some basic ``guessing'' steps, which in this context means that we exhaustively enumerate and check all possibilities for the guess.
\begin{itemize}
    \item We first guess the optimal value for $c$.
    \item We exhaustively apply \cref{rr:1}, that is, we remove each vertex that has distance more than $c$ from every main robot.
    \item We guess a tree $T$ with $k$ vertices that witnesses the connectivity in an optimal solution of the instance. 
    \item We guess a bijection between the main robots and the vertices of $T$.
\end{itemize}

Next, we define a \emph{continuation type $t_i(v)$ of a vertex $v$ for step $i$} and we defines a \emph{type of a vertex $v$} as follows:
\begin{itemize}
    \item For step $c$, the \emph{continuation type} $t_c(v)$ of vertex $v$ is an element from $\{F,O,r_1,r_2,\ldots,r_k\}$ and set as follows. We set $t_c(v)=F$ meaning the vertex is \emph{free} if $v\notin O\cup R$, we set $t_c(v)=O$  meaning the vertex is \emph{occupied by an obnoxious robot} if $v\in O$, and we set $t_c(v)=r_i$ meaning the vertex is \emph{occupied by a main robot} if $v=r_i\in R$.

     \item For step $i<c$, the \emph{continuation type} $t_i(v)$ of a vertex $v$ is defined as the following set:

     For each clique $C$ that is neighboring the clique of $i$, and each $v'$ that is a neighbor of $c$ in $C$, add the pair $(C,t_{i+1}(v'))$ to $t_i(v)$.
     \item The \emph{type} of $v$ is a triple consisting of the clique $C$ of $v$, an element from $\{F,O,r_1,\ldots,r_k\}$ indicating whether $v$ is free or occupied (as above), and the list of all continuation types of $v$ for all steps $i\le c$.
     \item If there is a vertex type such that there are at most $2k+c+2$ vertices of that type, we additionally add the identity of the vertices to their type.
\end{itemize}
    Note that since we apply \cref{rr:1}, we have $O(k\cdot c^2)$ different cliques (this follows from an analogous argument as the proof of \cref{thm:kernel}).
     Since each clique only has constantly many neighboring cliques, the number of different continuation types is in $2^{O(k\cdot c^2)}$. We can iteratively compute the continuation types of all vertices for all steps in $2^{O(k\cdot c^2)}\cdot n^{O(1)}$ time, starting with step $c$, in a straightforward way.

The intuition behind the continuation types is as follows. If a robot arrives at a vertex in step $c$ it cannot travel any further afterwards. So the only relevant information is whether the vertex is free or occupied by another robot. In the latter case, the other robot needs to be moved before step $c$ to avoid a collision. If a robot arrives at a vertex at some step $i<c$, it can travel further, so it is relevant which types of vertices it can reach. This is encoded in the list of clique-vertex type pairs. Of course, it is also still relevant whether the vertex is free or occupied by another robot. Note that if a clique-neighbor type pair is in the list of the type of a vertex, it only means that the vertex can reach \emph{some} neighbor in the clique with the type (not all). 
This, intuitively, is the reason for the approximation factor two, since we can only guarantee that we move a robot to a vertex of the ``correct'' vertex type in the ``correct'' clique at each step, but we cannot guarantee that we move the robot to the ``correct'' vertex. So the solution we construct may need an extra step in each clique to move to the ``correct'' vertex. 

The algorithm now continues with a second guessing step, where we guess up to $\max\{c,k\}$ vectors $M_i$ of vertex types that each have length up to $c$ (possibly including duplicates). Whenever there is a type switch from one position to the next in the vector, we consider this to be a move. We reject all guesses where the total number of moves is greater than $c$. 
These vectors will be templates for the movements in the movement plan we construct. In a later step, we will replace the types with actual vertices. The main goal is to first move the main robots into the ``correct'' cliques, and then afterwards to the ``correct'' vertices inside the cliques.
For a given guess, we make the following checks:
\begin{itemize}
    \item For each vector $M_i$, we check whether it is a movement. More specifically, whenever two subsequent entries in the vector are two different types $t,t'$ we check whether the types are \emph{neighboring types}, that is, whether there are two vertices $u,v$ which are neighbors and $u$ has type $t$, and $v$ has type $t'$. Furthermore, we check whether the first entry of the vector is a vertex type for a vertex that is occupied by a robot.
    \item For each vector $M_i$, we check whether it is collision-free, that is, whenever there is a vertex type $t$ in the vector for a vertex that is occupied by a robot, there is another movement vector $M_j$ starting with type $t$ that moves the robot away (sufficiently early).

    If several vectors (say $\ell$) move a robot to a vertex of type $t$ for a vertex that is occupied by another robot, such that the time steps where the moved robots occupy the vertex of type $t$ intersect, we check whether there are $\ell$ other movements vector starting with type $t$ that move the robots away (sufficiently early).
    \item For each pair of vectors $M_i,M_j$ we check whether they are collision-free. This is only relevant if the vectors contain types that have vertex identities.
    \item We want to find vectors where the number of moves in each step is minimum, that is, moves only happen simultaneously if they have to, that is, when robots move around a cycle of length at least 3.

    We check whether at each step across all vectors, there is either at most one move, or at most one cyclic set of moves.
\end{itemize}

If none of the above checks fail, we check whether $T$ is a subgraph of $G$ such that the location of the vertices of $T$ in $G$ complies with our guess, that is, they have the correct type.
To do this, we use a subgraph finding algorithm that allows us to define a cost function for possible mappings from the vertices of $T$ to the vertices of $G$. We use this to make sure that the subgraph embedding complies with our guesses.
\begin{theorem}[\cite{DemaineHM14}]\label{thm:subgraph}
    Let $F$ be an undirected graph with $s$ vertices and treewidth $t$. Let $G$ be an undirected graph
with $n$ vertices and let $c : V(F) \times V(G) \rightarrow \mathbb{N}$ be a cost function of mapping a vertex of $F$ to a vertex of $G$.
If $F$ is a subgraph of $G$, then it is possible to find in time $2^{O(s)} \cdot n^{O(t)}$ a subgraph embedding $\phi$ that minimizes $c(\phi)$.
\end{theorem}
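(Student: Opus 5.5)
We prove the statement, which is the weighted subgraph-finding result of Demaine et al.~\cite{DemaineHM14}, by the standard combination of colour-coding (to keep the images of the pattern vertices distinct) and dynamic programming over a tree decomposition of $F$. We may take a nice tree decomposition $(T_F,\{B_x\})$ of $F$ of width $t$ with $O(s)$ nodes. The target is an embedding $\phi$, meaning an injective map $V(F)\to V(G)$ with $\{\phi(a),\phi(b)\}\in E(G)$ for every $\{a,b\}\in E(F)$, that minimises $c(\phi)=\sum_{a\in V(F)} c(a,\phi(a))$.

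\textbf{Step 1: dynamic programming for a fixed colouring.} Fix a colouring $\chi:V(G)\to[s]$, and look only for \emph{colourful} embeddings, in which the vertices of $F$ receive pairwise distinct colours. Every colourful embedding is injective. For a node $x$ of $T_F$, let $F_x$ be the set of pattern vertices introduced in the subtree rooted at $x$. For every map $\psi:B_x\to V(G)$ and every colour set $S\subseteq[s]$ with $|S|=|F_x|$, the DP entry $D[x,\psi,S]$ stores the minimum cost of a partial embedding of $F[F_x]$ that satisfies three conditions: it extends $\psi$, it is edge-preserving, and it uses exactly the colours in $S$, each once.

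\textbf{Step 2: the node transitions.} The transitions are the usual ones for a nice tree decomposition. At an introduce node for $a$ we check adjacency between $\psi(a)$ and the images of the neighbours of $a$ in the bag. We also require $\chi(\psi(a))\notin S\setminus\{\chi(\psi(a))\}$, and we add $c(a,\psi(a))$. A forget node minimises over the image of the forgotten vertex. A join node combines two children whose colour sets intersect exactly in $\chi(\psi(B_x))$, and adds their costs minus the bag cost, which is counted twice. Correctness follows by induction over $T_F$.

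The number of states per node is $n^{t+1}\cdot 2^{s}$. Join nodes enumerate pairs of colour subsets, giving $3^{s}$ or $4^{s}$ combinations. Hence one colouring is handled in $2^{O(s)}\cdot n^{O(t)}$ time.

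\textbf{Step 3: removing the dependence on the colouring.} An optimal embedding $\phi^\ast$ occupies $s$ vertices of $G$. A uniformly random colouring makes $\phi^\ast$ colourful with probability $s!/s^s\ge e^{-s}$. To derandomise, we take an $(n,s)$-perfect hash family of size $e^{s}s^{O(\log s)}\log n$. We run the DP for each colouring in the family and output the cheapest embedding found. Every embedding that is found is a genuine embedding, so the returned cost is never below the optimum. At least one colouring in the family makes $\phi^\ast$ colourful, so the optimum cost is attained. If $F$ is a subgraph of $G$, some embedding exists and the algorithm returns one of minimum cost. The total running time is $2^{O(s)}\cdot n^{O(t)}$. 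The optimal $\phi$ itself is recovered by standard backtracking through the DP tables.

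\textbf{Main obstacle.} The delicate point is the join step. The colour-set bookkeeping must exclude collisions between vertices forgotten in different subtrees, which is what keeps $\phi$ injective. At the same time, costs of shared bag vertices must not be double counted. Requiring $S_1\cap S_2=\chi(\psi(B_x))$ together with the colourfulness invariant handles both issues.
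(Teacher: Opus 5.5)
Your proposal is correct and takes essentially the route behind the cited result. The paper gives no proof of its own and imports the theorem from Demaine, Hajiaghayi and Marx, where it is the weighted version of Alon--Yuster--Zwick colour-coding combined with dynamic programming over a tree decomposition of $F$, as you do. Two small fixes: at an introduce node your condition is vacuous as written and should read $\chi(\psi(a))\in S$ with the child entry taken at colour set $S\setminus\{\chi(\psi(a))\}$, and you should note that a (nice) tree decomposition of $F$ of width $t$, or $O(t)$, can itself be computed within the $2^{O(s)}$ budget.
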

We define a cost function $c$ as follows: For $v\in V(T)$ such that we guessed that $v$ is mapped to main robot $r$, and the movement for $r$ moves the robot to a vertex of type $t$,
and $v'\in V(G)$, we set
\[
c(v,v')=\begin{cases}
1, & \text{ if } v'\text{ has type } t,\\
n, & \text{ otherwise.}
\end{cases}
\]
We use the algorithm behind \cref{thm:subgraph} to find subgraph embedding $\phi$ for $T$ into $G$ of minimal cost. If the cost is larger than $k$, we reject the guess.

Now we have all ingredients to construct a movement plan $\mathcal{M}$ for the \MoveToConnectedShort instance.
For each movement $M\in\mathcal{M}$ we have that the first vertex type for a vertex that is occupied by a robot, otherwise the checks described above fail. We pick an arbitrary vertex of that type (a different one for each movement) and identify the robot on that vertex as the one corresponding to the movement.
We iterate through the steps from $i=1$ to $c$:
\begin{itemize}
    \item We perform the $i$th move of each move of each movement. That is, for each $M\in \mathcal{M}$, we pick a (different) vertex $v$ of type $M[i+1]$, and move the robot corresponding to $M$ from its previous position to $v$.

    When picking the vertices, we avoid collision where two robots use the same edge in opposite directions. This can always be done since there are strictly more than $2k$ vertices of each type that does not include the vertex identity. Recall that if such a collision happens due to types that contain vertex identities, then the third of the above-described checks fails.
    \item If after the $i$th move is performed for each movement, a robot $r$ is moved to a vertex $v$ of some type $t$ that is occupied by another robot $r'$, then we moved robot $r$ to the ``wrong'' vertex type $t$.

    Since the second of the above-described checks did not fail, there must be another vertex~$v'$ of type $t$ that is not occupied by any robot after the $i$th move. Furthermore, since the last of the above-described checks did not fail, we know that $r'$ was already on $v$ in the previous step.

    We introduce an intermediate step (between steps $i-1$ and $i$) in which we move robot $r'$ from $v$ to $v'$. 
\end{itemize}
By construction, the above described movements are collision-free. Furthermore, the resulting movement plan as length at most $2c$ since for every move in the type vectors we add at most one additional move in the second of the above-described steps of the construction.

To ensure that the end-configuration is connected, we have to move the main robots to the vertices specified by the embedding $\phi$ of the tree $T$ on the main robots. By construction, we have that $\mathcal{M}$ moves each main robot to a vertex that has the same type as its target vertex in the embedding $\phi$. Note that all vertices of the same type are in the same clique. In a final step, we move each main robot onto their target vertex using at most three moves.

Assume that there is a main robot $r$ which is not on its target vertex $v$. If the target vertex $v$ is free, we can move $r$ to that vertex with one additional move. If the target vertex $v$ is occupied by some robot $r'$, then we pick another vertex $v'$ of the same type that is not occupied by a main robot. Note that since there are at least $2k+c+2$ vertices of each type (that does not include the vertex identity), such a vertex exits. If $v'$ is free, then we move the robot $r'$ to $v'$, and afterwards we move $r$ to $v$. If $v'$ is occupied by an obnoxious robot $r''$, we perform a cyclic move. Let $v''$ be the current position of $r$. We simultaneously move robot $r$ to $v$, robot $r'$ to $v'$, and robot $r''$ to $v''$. All described movements are clearly collision-free, contain at most three moves, and reduce the number of main robots that are not on their target vertex by one. Hence, by iterating this procedure, we can move all main robots to their respective target vertices in a collision-free manner by using at most three additional moves per main robot.

Now we are ready to prove \cref{thm:approx}.

\begin{proof}[Proof of \cref{thm:approx}]
    It is straightforward to check that the described guesses create $f(k+c)\cdot n^{O(1)}$ cases, the described checks can be performed in $f(k+c)\cdot n^{O(1)}$ time, and the movement plan can be constructed in $f(k+c)\cdot n^{O(1)}$ time.

    As argued during the description, the algorithm clearly produces a collision-free movement plan such that the end-configuration of the main robots is connected.

    Now assume there is an optimal solution $\mathcal{M}^\star$ of length $c$ to the \MoveToConnectedShort instance. Then there is a guess such that the each type vector corresponds to one movement in $\mathcal{M}^\star$ and the tree $T$ corresponds to a spanning tree in the subgraph of the main robots in their end-configuration. As argued, the movement plan constructed by the algorithm then has length at most $2c+3k$. It follows that we compute a 2-approximation with an additional additive error of $3k$.
\end{proof}

\section{Conclusion}\label{sec:conclusion}

To the best of our knowledge, we give the first computational results for the problems \MoveToPi and \MoveToConnectedShort. We leave several open questions. We believe that the following are of particular interest.
\begin{itemize}
    \item Is \MoveToConnectedShort fixed-parameter tractable when parameterized by $c$ if the input graph is planar?
    \item Is \MoveToConnectedShort fixed-parameter tractable when parameterized by $k+c$ if the input graph is a unit disk graph?
\end{itemize}
It would also be interesting to identify graph classes on which \MoveToConnectedShort becomes solvable in polynomial time.
\begin{itemize}
    \item Is \MoveToConnectedShort solvable in polynomial time if the input graph is a tree?
\end{itemize}
It is also natural to consider other parameters, such as e.g.\ the number of obnoxious robots or structural parameters of the input graph such as the treewidth.

Apart from these relatively specific questions, our work opens several natural future work directions. Instead of minimizing the total movement (energy) one could ask to minimize the makespan of the movement plan instead. There are also many other natural graph properties that one could focus on, for example one could require that the target formation admits a perfect matching among the main robots (for an even number of robots). Notably, for two robots this coincides with connectivity, and hence some of our hardness results transfer.

\bibliography{bibliography}	

\begin{thebibliography}{10}

\bibitem{AgarwalBHSS25}
Pankaj~K. Agarwal, Mark de~Berg, Benjamin Holmgren, Alex Steiger, and Martijn
  Struijs.
\newblock Optimal motion planning for two square robots in a rectilinear
  environment.
\newblock In {\em Proceedings of the 41st International Symposium on
  Computational Geometry (SoCG)}, volume 332 of {\em LIPIcs}, pages 5:1--5:17.
  Schloss Dagstuhl - Leibniz-Zentrum f{\"{u}}r Informatik, 2025.

\bibitem{arnborg1991easy}
Stefan Arnborg, Jens Lagergren, and Detlef Seese.
\newblock Easy problems for tree-decomposable graphs.
\newblock {\em Journal of Algorithms}, 12(2):308--340, 1991.

\bibitem{berman20111}
Piotr Berman, Erik~D Demaine, and Morteza Zadimoghaddam.
\newblock {$O(1)$}-approximations for maximum movement problems.
\newblock In {\em Proceedings of the 14th International Workshop on
  Approximation Algorithms for Combinatorial Optimization (APPROX)}, pages
  62--74. Springer, 2011.

\bibitem{BJK14}
Hans~L Bodlaender, Bart~MP Jansen, and Stefan Kratsch.
\newblock Kernelization lower bounds by cross-composition.
\newblock {\em {SIAM} Journal on Discrete Mathematics}, 28(1):277--305, 2014.

\bibitem{courcelle1990monadic}
Bruno Courcelle.
\newblock The monadic second-order logic of graphs. {I}. {R}ecognizable sets of
  finite graphs.
\newblock {\em Information and computation}, 85(1):12--75, 1990.

\bibitem{courcelle2012graph}
Bruno Courcelle and Joost Engelfriet.
\newblock {\em Graph structure and monadic second-order logic: a
  language-theoretic approach}, volume 138.
\newblock Cambridge University Press, 2012.

\bibitem{CalinescuDP08}
Gruia C{u{a}}linescu, Adrian Dumitrescu, and J{\'{a}}nos Pach.
\newblock Reconfigurations in graphs and grids.
\newblock {\em {SIAM} Journal of Discrete Mathematics}, 22(1):124--138, 2008.

\bibitem{Cyg+15}
Marek Cygan, Fedor~V. Fomin, {\L{}}ukasz Kowalik, Daniel Lokshtanov,
  D{\'{a}}niel Marx, Marcin Pilipczuk, Micha\l{} Pilipczuk, and Saket Saurabh.
\newblock {\em Parameterized Algorithms}.
\newblock Springer, 2015.

\bibitem{DeligkasEGKLR26}
Argyrios Deligkas, Eduard Eiben, Robert Ganian, Iyad Kanj, Dominik Leko, and
  M.~S. Ramanujan.
\newblock Routing few robots in a crowded network.
\newblock {\em Journal of Computer and System Sciences}, 157:103753, 2026.

\bibitem{DeligkasEGK024}
Argyrios Deligkas, Eduard Eiben, Robert Ganian, Iyad Kanj, and M.~S. Ramanujan.
\newblock Parameterized algorithms for coordinated motion planning: Minimizing
  energy.
\newblock In {\em Proceedings of the 51st International Colloquium on Automata,
  Languages, and Programming ({ICALP})}, volume 297 of {\em LIPIcs}, pages
  53:1--53:18. Schloss Dagstuhl - Leibniz-Zentrum f{\"{u}}r Informatik, 2024.

\bibitem{DemaineHMSGZ09}
Erik~D. Demaine, Mohammad~Taghi Hajiaghayi, Hamid Mahini, Amin~S.
  Sayedi{-}Roshkhar, Shayan~Oveis Gharan, and Morteza Zadimoghaddam.
\newblock Minimizing movement.
\newblock {\em {ACM} Transactions on Algorithms}, 5(3):30:1--30:30, 2009.

\bibitem{DemaineHM14}
Erik~D. Demaine, Mohammad~Taghi Hajiaghayi, and D{\'{a}}niel Marx.
\newblock Minimizing movement: Fixed-parameter tractability.
\newblock {\em {ACM} Transactions on Algorithms}, 11(2):14:1--14:29, 2014.

\bibitem{demaine2018simple}
Erik~D. Demaine and Mikhail Rudoy.
\newblock A simple proof that the $(n^2- 1)$-puzzle is hard.
\newblock {\em Theoretical Computer Science}, 732:80--84, 2018.

\bibitem{Die16}
Reinhard Diestel.
\newblock {\em Graph Theory, 5th Edition}, volume 173 of {\em Graduate Texts in
  Mathematics}.
\newblock Springer, 2016.

\bibitem{DF13}
Rodney~G. Downey and Michael~R. Fellows.
\newblock {\em Fundamentals of Parameterized Complexity}.
\newblock Springer, 2013.

\bibitem{eiben23}
Eduard Eiben, Robert Ganian, and Iyad Kanj.
\newblock {The Parameterized Complexity of Coordinated Motion Planning}.
\newblock In {\em Proceedings of the 39th International Symposium on
  Computational Geometry (SoCG)}, volume 258 of {\em Leibniz International
  Proceedings in Informatics (LIPIcs)}, pages 28:1--28:16. Schloss Dagstuhl --
  Leibniz-Zentrum f{\"u}r Informatik, 2023.

\bibitem{eiben2025minor}
Eduard Eiben, Robert Ganian, Iyad Kanj, and M.~S. Ramanujan.
\newblock A minor-testing approach for coordinated motion planning with sliding
  robots.
\newblock In {\em Proceedings of the 41st International Symposium on
  Computational Geometry (SoCG)}, volume 332 of {\em LIPIcs}, pages
  44:1--44:15. Schloss Dagstuhl - Leibniz-Zentrum f{\"{u}}r Informatik, 2025.

\bibitem{fekete2022computing}
S{\'a}ndor~P Fekete, Phillip Keldenich, Dominik Krupke, and Joseph~SB Mitchell.
\newblock Computing coordinated motion plans for robot swarms: The {CG}: {SHOP}
  challenge 2021.
\newblock {\em ACM Journal of Experimental Algorithmics (JEA)}, 27:1--12, 2022.

\bibitem{fellows2009multipleinterval}
Michael~R. Fellows, Danny Hermelin, Frances Rosamond, and St{\'e}phane
  Vialette.
\newblock On the parameterized complexity of multiple-interval graph problems.
\newblock {\em Theoretical Computer Science}, 410(1):53--61, 2009.

\bibitem{FernauHNRR03}
Henning Fernau, Torben Hagerup, Naomi Nishimura, Prabhakar Ragde, and Klaus
  Reinhardt.
\newblock On the parameterized complexity of the generalized rush hour puzzle.
\newblock In {\em Proceedings of the 15th Canadian Conference on Computational
  Geometry (CCCG)}, pages 6--9, 2003.

\bibitem{FioravantesKKMO24}
Foivos Fioravantes, Dusan Knop, Jan~Maty{\'{a}}s Kristan, Nikolaos Melissinos,
  and Michal Opler.
\newblock Exact algorithms and lowerbounds for multiagent path finding: Power
  of treelike topology.
\newblock In {\em Proceedings of the 38th {AAAI} Conference on Artificial
  Intelligence ({AAAI})}, pages 17380--17388. {AAAI} Press, 2024.

\bibitem{FlakeB02}
Gary~William Flake and Eric~B. Baum.
\newblock Rush hour is {PSPACE}-complete, or ``why you should generously tip
  parking lot attendants''.
\newblock {\em Theoretical Computer Science}, 270(1-2):895--911, 2002.

\bibitem{FG06}
J{\"o}rg Flum and Martin Grohe.
\newblock {\em Parameterized Complexity Theory}, volume XIV of {\em Texts in
  Theoretical Computer Science. An EATCS Series}.
\newblock Springer, 2006.

\bibitem{fomin2019finding}
Fedor~V Fomin, Daniel Lokshtanov, Fahad Panolan, Saket Saurabh, and Meirav
  Zehavi.
\newblock Finding, hitting and packing cycles in subexponential time on unit
  disk graphs.
\newblock {\em Discrete \& Computational Geometry}, 62(4):879--911, 2019.

\bibitem{Fom+19}
Fedor~V Fomin, Daniel Lokshtanov, Saket Saurabh, and Meirav Zehavi.
\newblock {\em Kernelization: Theory of Parameterized Preprocessing}.
\newblock Cambridge University Press, 2019.

\bibitem{friggstad2011minimizing}
Zachary Friggstad and Mohammad~R Salavatipour.
\newblock Minimizing movement in mobile facility location problems.
\newblock {\em ACM Transactions on Algorithms (TALG)}, 7(3):1--22, 2011.

\bibitem{garey1974some}
Michael~R Garey, David~S Johnson, and Larry Stockmeyer.
\newblock Some simplified {NP}-complete problems.
\newblock In {\em Proceedings of the 6th annual ACM Symposium on Theory of
  Computing (STOC)}, pages 47--63, 1974.

\bibitem{goldreich2011finding}
Oded Goldreich.
\newblock Finding the shortest move-sequence in the graph-generalized 15-puzzle
  is {NP}-hard.
\newblock In {\em Studies in complexity and cryptography. Miscellanea on the
  interplay between randomness and computation}, pages 1--5. Springer, 2011.

\bibitem{gupta2020parameterized}
Siddharth Gupta, Guy Sa'ar, and Meirav Zehavi.
\newblock The parameterized complexity of motion planning for snake-like
  robots.
\newblock {\em Journal of Artificial Intelligence Research}, 69:191--229, 2020.

\bibitem{hermelin2012weak}
Danny Hermelin and Xi~Wu.
\newblock Weak compositions and their applications to polynomial lower bounds
  for kernelization.
\newblock In {\em Proceedings of the 23rd annual ACM-SIAM Symposium on Discrete
  Algorithms (SODA)}, pages 104--113. SIAM, 2012.

\bibitem{KanjP24}
Iyad Kanj and Salman Parsa.
\newblock On the parameterized complexity of motion planning for rectangular
  robots.
\newblock In {\em Proceedings of the 40th International Symposium on
  Computational Geometry (SoCG)}, volume 293 of {\em LIPIcs}, pages
  65:1--65:15. Schloss Dagstuhl - Leibniz-Zentrum f{\"{u}}r Informatik, 2024.

\bibitem{Kar72}
Richard~M. Karp.
\newblock Reducibility among combinatorial problems.
\newblock In {\em Complexity of Computer Computations}, pages 85--103.
  Springer, 1972.

\bibitem{khot2002parameterized}
Subhash Khot and Venkatesh Raman.
\newblock Parameterized complexity of finding subgraphs with hereditary
  properties.
\newblock {\em Theoretical Computer Science}, 289(2):997--1008, 2002.

\bibitem{KlobasMMNZ23}
Nina Klobas, George~B. Mertzios, Hendrik Molter, Rolf Niedermeier, and Philipp
  Zschoche.
\newblock Interference-free walks in time: temporally disjoint paths.
\newblock {\em Autonomous Agents Multi Agent Systems}, 37(1):1, 2023.

\bibitem{KunzMZ23}
Pascal Kunz, Hendrik Molter, and Meirav Zehavi.
\newblock In which graph structures can we efficiently find temporally disjoint
  paths and walks?
\newblock In {\em Proceedings of the 32nd International Joint Conference on
  Artificial Intelligence ({IJCAI})}, pages 180--188. ijcai.org, 2023.

\bibitem{papadimitriou1994motion}
Christos~H Papadimitriou, Prabhakar Raghavan, Madhu Sudan, and Hisao Tamaki.
\newblock Motion planning on a graph.
\newblock In {\em Proceedings 35th Annual Symposium on Foundations of Computer
  Science (FOCS)}, pages 511--520. IEEE, 1994.

\bibitem{Pilz19}
Alexander Pilz.
\newblock Planar 3-{SAT} with a clause/variable cycle.
\newblock {\em Discrete Mathematics \& Theoretical Computer Science}, 21(3),
  2019.

\bibitem{ratner1990n2}
Daniel Ratner and Manfred Warmuth.
\newblock The $(n^2- 1)$-puzzle and related relocation problems.
\newblock {\em Journal of Symbolic Computation}, 10(2):111--137, 1990.

\bibitem{salzman2020research}
Oren Salzman and Roni Stern.
\newblock Research challenges and opportunities in multi-agent path finding and
  multi-agent pickup and delivery problems.
\newblock In {\em Proceedings of the 19th International Conference on
  Autonomous Agents and Multiagent Systems (AMAAS '20)}, pages 1711--1715,
  2020.

\bibitem{sharon2015conflict}
Guni Sharon, Roni Stern, Ariel Felner, and Nathan~R Sturtevant.
\newblock Conflict-based search for optimal multi-agent pathfinding.
\newblock {\em Artificial Intelligence}, 219:40--66, 2015.

\bibitem{Stern19}
Roni Stern.
\newblock Multi-agent path finding - an overview.
\newblock In {\em Artificial Intelligence - 5th {RAAI} Summer School}, pages
  96--115, Dolgoprudny, Russia, 2019. Springer.

\bibitem{SternSFK0WLA0KB19}
Roni Stern, Nathan~R Sturtevant, Ariel Felner, Sven Koenig, Hang Ma, Thayne~T.
  Walker, Jiaoyang Li, Dor Atzmon, Liron Cohen, T.~K.~Satish Kumar, Eli
  Boyarski, and Roman Bart{\'a}k.
\newblock Multi-agent pathfinding: Definitions, variants, and benchmarks.
\newblock In {\em Proceedings of the 12th International Symposium on
  Combinatorial Search ({SOCS})}, pages 151--159, 2019.

\bibitem{yu2013structure}
Jingjin Yu and Steven~M LaValle.
\newblock Structure and intractability of optimal multi-robot path planning on
  graphs.
\newblock In {\em Proceedings of the 27th AAAI Conference on Artificial
  Intelligence (AAAI '13)}, pages 1443--1449, 2013.

\end{thebibliography}

\end{document}